\documentclass[a4paper]{article}
\pdfoutput=1
\usepackage{faupub}
\usepackage[utf8]{inputenc}

\usepackage[table ]{ xcolor}
\usepackage{subcaption}
\usepackage{multirow}
\usepackage{comment}
\usepackage{amsmath}

\usepackage{tikz}
\usetikzlibrary{shapes,arrows}
\tikzstyle{startstop} = [rectangle,rounded corners, minimum width=3cm,minimum height=1cm,text centered, draw=black,fill=red!30]
\tikzstyle{io} = [trapezium, trapezium left angle = 70,trapezium right angle=110,minimum width=3cm,minimum height=1cm,text centered,draw=black,fill=blue!30]
\tikzstyle{process} = [rectangle,minimum width=3cm,minimum height=1cm,text centered,text width =3cm,draw=black,fill=orange!30]
\tikzstyle{decision} = [diamond,minimum width=3cm,minimum height=1cm,shape aspect=3,inner sep = 0.4pt,text centered,draw=black,fill=green!30]
\tikzstyle{arrow} = [thick,->,>=stealth]
\tikzstyle{shadow}=[preaction={fill=black,opacity=.5,transform canvas={xshift=0.5mm,yshift=-0.5mm},shading=radial,shading angle=20},fill=red]

\tikzstyle{ellipse}=[draw, rectangle, minimum width=2.8em, rounded corners=6pt,line width=0.5pt]
\tikzstyle{pxsbx}=[trapezium, trapezium left angle=75, trapezium right angle=105, minimum width=3em, text centered, draw = black, fill=white,line width=0.5pt] 
\tikzstyle{lingxing}=[draw,diamond,shape aspect=3,inner sep = 0.4pt,thick,font=\itshape,line width=0.5pt]

\usepackage{amssymb}
\usepackage{graphicx,color}

\usepackage{amsmath,amsthm}

\usepackage{mathrsfs}

\usepackage{bm}


\ifx\Ref\undefined
\newcommand{\Ref}[1]{(\ref{#1})}
\fi


\setcounter{secnumdepth}{5}
\newtheorem{Theorem}{Theorem}[section]
\newtheorem{Definition}{Definition}[section]
\newtheorem{Lemma}[Theorem]{Lemma}
\newtheorem{Corollary}{Corollary}[section]
\newtheorem{Proposition}[Theorem]{Proposition}
\theoremstyle{remark}
\newtheorem{Remark}{Remark}[section]

\newcommand{\R}{\mathbb{R}}
\newcommand{\C}{\mathbb{C}}





\theoremstyle{remark}

\newcommand{\ccirc}{\kern0.2ex\vcenter{\hbox{$\scriptstyle\circ$}}\kern0.2ex}





\newcommand{\SLDC}{{\mathrm{SL}(2,\mathbb{C})}}


\newcommand{\Slc}{\mathrm{SL}(2,\mathbb{C})}

\newcommand{\Su}{\mathrm{SU}(2)}


\newcommand{\ransu}{\rangle_{\rm SU(1,1)}}
\newcommand{\lansu}{\langle}
\newcommand{\raninv}{\rangle_{\rm Inv}}
\newcommand{\laninv}{\langle}


\def\be{\begin{eqnarray}}
\def\ee{\end{eqnarray}}


\newcommand{\ca}{\mathcal A}

\newcommand{\cc}{\mathcal C}
\newcommand{\cd}{\mathcal D}

\newcommand{\cg}{\mathcal G}
\newcommand{\ch}{\mathcal H}
\newcommand{\ci}{\mathcal I}

\newcommand{\ck}{\mathcal K}

\newcommand{\cn}{\mathcal N}

\newcommand{\cz}{\mathcal Z}



\newcommand{\fl}{\mathfrak{l}}  
  
\newcommand{\fn}{\mathfrak{n}}

  \newcommand{\Fu}{\mathfrak{U}}


\renewcommand{\b}{\beta}
\newcommand{\g}{\gamma}
\newcommand{\G}{\Gamma}

\newcommand{\eps}{\varepsilon}

\newcommand{\sig}{\sigma}

\renewcommand{\l}{\lambda}

\renewcommand{\O}{\Omega}

\newcommand{\rmd}{\mathrm d}

\newcommand{\lt}{\left}
\newcommand{\rt}{\right}

\newcommand{\lag}{\left\langle}
\newcommand{\rag}{\right\rangle}

\newcommand{\act}{\rhd}

\newcommand{\sgn}{\mathrm{sgn}}




\newcommand{\zz}{{\mathbf{z}}}
\newcommand{\CP}{\mathbb{CP}}
\newcommand{\iu}{i}

\newcommand{\Suo}{\text{SU}(1,1)}


\title{Finiteness of spinfoam vertex amplitude with timelike polyhedra, and the full amplitude}


\author[1,3]{Muxin Han}  

\author[2]{\ Wojciech Kaminski}  

\author[3]{\ Hongguang Liu}

\affiliation[1]{Department of Physics, Florida Atlantic University, 777 Glades Road, Boca Raton, FL 33431-0991, USA}

\affiliation[2]{Faculty of Physics, University of Warsaw, Pasteura 5, 02-093 Warsaw, Poland}

\affiliation[3]{Institut f\"ur Quantengravitation, Universit\"at Erlangen-N\"urnberg, Staudtstr. 7/B2, 91058 Erlangen, Germany}


\emailAdd{hanm(At)fau.edu}
\emailAdd{Wojciech.Kaminski(AT)fuw.edu.pl}
\emailAdd{hongguang.liu(At)gravity.fau.de}


\abstract{This work focuses on Conrady-Hnybida's 4-dimensional extended spinfoam model with timelike polyhedra, while we restrict all faces to be spacelike. Firstly, we prove the absolute convergence of the vertex amplitude with timelike polyhedra, when SU(1,1) boundary states are coherent states or the canonical basis, or their finite linear combinations. Secondly, based on the finite vertex amplitude and a proper prescription of the SU(1,1) intertwiner space, we construct the extended spinfoam amplitude on arbitrary cellular complex, taking into account the sum over SU(1,1) intertwiners of internal timelike polyhedra. We observe that the sum over SU(1,1) intertwiners is infinite for the internal timelike polyhedron that has at least 2 future-pointing and 2 past-pointing face-normals. In order to regularize the possible divergence from summing over SU(1,1) intertwiners, we develop a quantum cut-off scheme based on the eigenvalue of the ``shadow operator''. The spinfoam amplitude with timelike internal polyhedra (and spacelike faces) is finite, when 2 types of cut-offs are imposed: one is imposed on $j$ the eigenvalue of area operator, the other is imposed on the eigenvalue of shadow operator for every internal timelike polyhedron that has at least 2 future-pointing and 2 past-pointing face-normals. }

\keywords{}

\begin{document}

\maketitle

\section{Introduction}

The spinfoam formulation of Loop Quantum Gravity (LQG) is a covariant approach toward the background-independent and nonperturbative quantum theory of gravity in 4 dimensions \cite{book1,book,rovelli2014covariant,Perez2012}. In this formulation, the spacetime geometry is discretized and quantized with the spinfoam amplitude, which generalizes the concept of Feynman path integral to LQG. The spinfoam amplitude describes the dynamics of LQG on a triangulation of 4-dimensional manifold. Among various models of the spinfoam, a popular model of the Lorentizian theory is constructed by Engle, Pereira, Rovelli, and Livine (EPRL) \cite{EPRL}, then there is an extension by Conrady and Hnybida to include timelike tetrahedra and faces \cite{Conrady:2010kc}, whereas tetrahedra and faces are all spacelike in the original EPRL model\footnote{The terminology of time/space-like tetrahedra comes from the ways of solving simplicity constraints with either the time-gauge $n_I=(1,0,0,0)$ or the space-gauge $n_I=(0,0,0,1)$ for tetrahedron normals $n_I$. These gauge fixings reduce the Lorentz group to either SU(2) or SU(1,1).}. Both the EPRL model and the Conrady-Hnybida extension can be generalized to arbitrary 4-dimensional cellular complex with polyhedra replacing tetrahedra \cite{KKL,generalize}.

In this paper, we refer to the model including timelike polyhedra and faces as the \emph{extended spinfoam model}. Recently, some important progress was made on the semiclassical analysis of the extended spinfoam model \cite{Kaminski:2017eew,Liu:2018gfc,Simao:2021qno}. The extended spinfoam model has the advantage of demonstrating better semiclassical behavior than the EPRL model. As an remarkable property, when both timelike and spacelike tetrahedra are present in a 4-simplex, both the vector geometry (degenerate 4-simplex) and Euclidean 4-simplex geometry are absent in the large-j asymptotics of the Lorentzian vertex amplitude \cite{Liu:2018gfc}. It is in contrast to the Lorentzian EPRL model where these 2 spurious types of geometries appear in the asymptotics.

Although the recent results demonstrates the promising aspects of the extended spinfoam model, there has been an important gap in the literature: There has been no proof of the finiteness of the vertex amplitude in the presence of timelike tetrahedra or polyhedra. The finiteness of spinfoam vertex amplitude is nontrivial since the gauge group $\Slc$ is non-compact. All proofs of the finiteness covered only the Barrett-Crane (BC) \cite{Baez:2001fh} and EPRL vertex amplitudes with only spacelike polyhedra \cite{Engle:2008ev,Kaminski:2010qb}, possibly with the $q$-deformations \cite{NP,QSF,QSF1,Han:2021tzw}. 

The purpose of the present paper is to partially fill this gap: In this work, we are taking into account the generalized spinfoam vertex dual to 4-cell with boundary polyhedra. We prove the finiteness of the vertex amplitude in presence of timelike polyhedra while all faces are still spacelike. According to \cite{Conrady:2010kc}, the spacelike face in the timelike polyhedron associates with the boundary state in the discrete-series unitary irreducible representations of SU(1,1), in contrast to SU(2) boundary states of spacelike polyhedra. The finiteness of the vertex amplitude holds when the SU(1,1) boundary states are coherent states or the elements of the canonical basis, or their finite linear combinations. Note that the finiteness of the vertex amplitude might depend on the choice of the state because the unitary irreducible representations of SU(1,1) are infinite-dimensional. Our result shows that the vertex amplitude is densely defined on the boundary Hilbert space. Although we do not consider timelike faces, our work is a crucial step because the vertex amplitude with timelike tetrahedra and spacelike triangles are already sufficient to ensure the absence of vector and Euclidean geometries in the large-$j$ asymptotics, from the vertex amplitude with boundary coherent states. The study of the vertex amplitude with timelike faces is postponed to a future publication, since the strategy of analysis will be completely different from this work.


We generalize our study of the finiteness to the spinfoam amplitude on any 4-dimensional cellular complex. Gluing vertex amplitudes introduces the sums over intertwiners associated to internal polyhedra. In the case of the EPRL amplitude with only spacelike internal polyhedra, the finiteness of the amplitude on complexes with fixed $j$ follows from the finiteness of the vertex amplitude since the intertwiner space of SU(2) is finite-dimensional. However this becomes nontrivial for amplitudes with timelike internal polyhedra since the SU(1,1) intertwiner space can be infinite-dimensional. In our discussion, all face-normals of timelike polyhedra are timelike since all faces are spacelike. We show that in the case when all timelike internal polyhedra have only 1 face-normal future-pointing or only 1 face-normal past-pointing\footnote{The face with future-pointing or past-pointing normal associates with states in the SU(1,1) irreduciable representation $D^+_j$ or $D^-_j$.}, their SU(1,1) intertwiner spaces are finite-dimensional, so the spinfoam amplitude with fixed $j$ is finite. However, in the difficult case that the timelike internal polyhedron has at least 2 future-pointing and 2 past-pointing face-normals, the intertwiner space is infinite-dimensional.  Geometrically, this relates to the non-compactness of the space of shapes of timelike polyhedron. In this case, we are not able to prove the finiteness of the amplitude, and unfortunately we do not have a definite answer about whether the spinfoam amplitude is finite or divergent in general.

We suspect that the amplitude in general situation is divergent due to the infinite sum of SU(1,1) intertwiners. In order to regulate the possible divergence from summing over SU(1,1) intertwiners, we develop a quantum cut-off scheme based on the eigenvalue of the ``shadow operator''. Classically we call a shadow the extremal area of a polytope as seen from any direction. The shadow operator promotes this notion to the quantum level. We show that the amplitude is finite once an upper bound is imposed on the eigenvalue of the shadow operator. 

{Our result is: The extended spinfoam amplitude on any cellular complex with spacelike and timelike internal polyhedra (with spacelike faces) is finite, when 2 types of cut-offs are imposed: one is imposed on $j$ the eigenvalue of area operator (in the case of the bubble divergence), the other is imposed on the eigenvalue of shadow operator for every internal timelike polyhedron that has at least 2 future-pointing and 2 past-pointing face-normals. }

This paper is organized as follows: In Section \ref{sec2}, we review the integral expression of the extended vertex amplitude $A_\G$ with the SU(2) and SU(1,1) boundary states. In Section \ref{sec3}, we prove that the $A_\G$ is absolutely convergent. In Section \ref{Gluing vertex amplitudes}, we discuss the SU(1,1) intertwiner space and  gluing vertex amplitudes at timelike polyhedra. In Section \ref{spinfoam amplitude with additional cut-off}, we discuss the situation of the internal polyhedra that has at least 2 future-pointing and 2 past-pointing face-normals, and introduce the shadow operator and the cut-off of its eigenvalue. Finally we conclude and discuss a few future perspectives in Section \ref{sec4}.

\section{Vertex amplitude}\label{sec2}

\subsection{Extended spinfoam model}

We introduce in detail our notations and definition of the model that we use in the paper. The unitary irreducible representations $\ch_{(\rho,n)}$ of $\SLDC$ from principal series can be described as a space of measureable functions
\be
\Psi(\zz),\quad \zz=\left(\begin{array}{l}
z_{+}\\
z_{-}
\end{array}\right)\in \C^2,
\ee
satisfying for almost every $\zz$ (with respect to Lebegue measure)
\be
\forall_{r\not=0,\phi}
\Psi(re^{i\phi}\zz)=r^{i\rho-2}e^{in\phi}\Psi(\zz),\label{cond-n-rho}
\ee
Following \cite{gelfand5} we denote variables by $\zz$ also for functions that are not holomorphic.
The action of $\SLDC$ is defined by
\be
g\act \Psi(\zz)=\Psi(g^T\zz),\quad g\in \SLDC
\ee
For two such functions $\Psi_1,\Psi_2\in\ch_{(\rho,n)}$,
a form
\be
\overline{\Psi_{1}(\mathbf{z})} \Psi_{2}(\mathbf{z}) \Omega_{\mathbf{z}},\quad
\Omega_{\mathbf{z}}=\frac{\mathrm{i}}{2}\left(z_{+}dz_{-}-z_{-}dz_{+}\right)\wedge\left(\bar{z}_{+}d\bar{z}_{-}-\bar{z}_{-}d\bar{z}_{+}\right)
\ee
descends to $\CP^1$, which is the quotient of $\C^2\setminus\{0\}$. We define a scalar product as an integral on $\CP^1$
\be
\left\langle\Psi_{1}, \Psi_{2}\right\rangle=\int_{\mathbb{C} \mathbb{P}^{1}} \overline{\Psi_{1}(\mathbf{z})} \Psi_{2}(\mathbf{z}) \Omega_{\mathbf{z}},
\ee
As mentioned above, the integral which is a priori defined in $\C^2$ descends in fact to the quotient space $\CP^1$.

The Hilbert space $\ch_{(\rho,n)}$ is defined as a space of those functions satisfying \eqref{cond-n-rho} with  finite norm
\be
\ch_{(\rho,n)}=\{\Psi\colon \langle \Psi,\,\Psi\rangle<\infty\}
\ee
We consider now $\rho=\gamma n$, $n=2j$. 

We denote by $D^\pm_j$ and $\cc^\chi_s$ the discrete and continuous series unitary irreducible representations of SU(1,1), and denote by $D^0_j$ the unitary irreducible representation of SU(2). Here we do not discuss the continuous series representation of SU(1,1). The spinfoam embedding map
$Y_\eps$ ($\eps=0,\pm$) is the spinfoam embedding map from the SU(1,1) or SU(2) unitary irreducible representations (in short irreps) into the $\Slc$ unitary irreducible representations\footnote{$\chi$ labels different continuous-series representations relating to integer or half-integer $j$.}
\be
Y_\pm:&&D_j^\pm\to\mathcal{H}_{(2\g j,2j)}\simeq\left(\bigoplus_{k>1/2}^{j}D_{k}^{+}\oplus\int_{s\geq 0}^{\oplus}\mathrm{d}s\,\mathcal{C}_{s}^\chi\right)\oplus\left(\bigoplus_{k>1/2}^j D_{k}^{-}\oplus\int_{s\geq0}^{\oplus}\mathrm{d}s\,\mathcal{C}^\chi_{s}\right), \label{ymap}\\
Y_0:&& D^0_j\to\mathcal{H}_{(2\g j,2j)}\simeq\bigoplus_{k\geq j} D^0_k
\ee
where $Y_\pm$ identifies $D_j^\pm$ to the subspace $D_{k=j}^\pm\subset\mathcal{H}_{(2\g j,2j)}$, and $Y_0$ identifies $D^0_j$ to the lowest subspace $D^0_{k=j}\subset \mathcal{H}_{(2\g j,2j)}$. We have $j>1/2$ in the case of SU(1,1) states. 

\subsection{Coherent states}

The Lie algebra of $\Slc$ has generators
\be
L^i=\sigma^i/2,\quad K^i=i\sigma^i/2,\quad i=1,2,3
\ee
where $\sig^i$ are Pauli matrices. SU(1,1) is the subgroup of $\Slc$ generated by $L^3,K^1,K^2$. The other subgroup SU(2) is generated by $L^1,L^2,L^3$. The SU(1,1) discrete-series unitary irrep $D_j^\pm$ has an orthonormal basis $|j,m\rangle^\pm$ of eigenfunctions of $L_3$
\be
|j,m\rangle^+\in  D_j^+,\quad m\geq j,\quad\text{and}\quad |j,m\rangle^-\in D_j^-,\quad m\leq -j,
\ee
similarly to an orthogonal basis in $\Su$ irrep
\be
|j,m\rangle^0\in D^0_j,\quad -j\leq m\leq j.
\ee
We refer to $|j,m\rangle^\eps$ ($\eps=0,\pm$) as the canonical basis.

The $Y_0$ map acting on $|j,m\rangle^0$ is equal to 
\begin{equation}\label{f_su2}
    F_{jm}(\mathbf{z})=\sqrt{\frac{\Gamma(2j+2)}{2\pi\Gamma(j+m+1) \Gamma(j-m+1)}}\langle \mathbf{z}\mid \mathbf{z}\rangle_0^{i \frac{\rho}{2}-1-j}  \langle \mathbf{n}_+\mid \mathbf{z}\rangle_0^{j+m} \langle \mathbf{n}_-\mid \mathbf{z}\rangle_0^{j-m},
\end{equation}
where 
\be
\langle\mathbf{z}\mid\mathbf{z}'\rangle_{0}:=\bar{z}_{+}z'_{+}+\bar{z}_{-}z'_{-},\quad \langle\mathbf{z}\mid\mathbf{z}'\rangle:=\bar{z}_{+}z'_{+}-\bar{z}_{-}z'_{-},\quad 
\mathbf{n}_{+}=\left(\begin{array}{l}
1\\
0
\end{array}\right),\quad \mathbf{n}_{-}=\left(\begin{array}{l}
0\\
1
\end{array}\right)
\ee
and $Y_\pm$ map for $|j,m\rangle^\pm$
\begin{equation}\label{f_su11}
 F^{s}_{j m}(\mathbf{z})=\sqrt{\frac{\Gamma(s m+j)}{\pi\Gamma(2j-1) \Gamma(s m-j+1)}}\Theta(s\langle \mathbf{z}\mid\mathbf{z}\rangle)(s\langle \mathbf{z}\mid\mathbf{z}\rangle)^{i \frac{\rho}{2}-1+j}  \langle \mathbf{z}\mid\mathbf{n}_+\rangle^{-j-m} (s \langle \mathbf{z}\mid\mathbf{n}_-\rangle)^{-j+m},
\end{equation}
where $s=\pm$, and $\Theta$ is the Heaviside step function ($\Theta(x)=1$ for $x\geq 0$ and $\Theta(x)=0$ for $x<0$). The scalar products
$\langle\cdot\mid\cdot\rangle$ and $\langle\cdot\mid\cdot\rangle_0$ are SU(1,1) and SU(2) invariant hermitian inner products on $\C^2$. Derivations of \eqref{f_su2} and \eqref{f_su11} can be found in e.g. \cite{Kaminski:2017eew}.

The standard coherent states are defined in similar way as for $\Su$ as an extremal (lowest or highest) eigenvectors.
The coherent states in the standard position are thus $F_{j\pm j}^{\pm}$ and $F_{jj}$.
The general coherent state is defined by group element acting on the standard coherent state. Such states are labelled by spinors
\be
&|j,\mathbf{m}^\pm\rangle^\pm=u\act |j,\pm j\rangle^\pm ,\quad \mathbf{m}^\pm=(u^T)^{-1}\mathbf{n}_\pm,\quad u\in \mathrm{SU}(1,1)\\
&|j,\mathbf{m}^0\rangle^0=u\act |j,j\rangle^0,\quad \mathbf{m}^0=(u^T)^{-1}\mathbf{n}_+,\quad u\in \Su,
\ee
where $\mathbf{m}^\pm,\mathbf{m}^0$ satisfy
\be 
\langle \mathbf{m}^0\mid \mathbf{m}^0\rangle_0=1,\quad \langle \mathbf{m}^+\mid \mathbf{m}^{+}\rangle=1,\quad \langle \mathbf{m}^{-}\mid \mathbf{m}^{-}\rangle=-1.
\ee
The $Y$-map acting on the coherent state $|j,\mathbf{m}^\eps\rangle^\eps$ is realized by
\be
\Psi_j^{\pm,\mathbf{m}^{\pm}}(\mathbf{z})&=&\Theta\left(\pm\langle\mathbf{z}\mid\mathbf{z}\rangle\right)\sqrt{\frac{2j-1}{\pi}}(\pm\langle\mathbf{z}\mid\mathbf{z}\rangle)^{\mathrm{i}\frac{\rho}{2}-1+j}(\pm\left\langle \mathbf{z}\mid\mathbf{m}^{\pm}\right\rangle) ^{-2j},\quad\mathbf{m}^{\pm}=\left(u^{T}\right)^{-1}\mathbf{n}_{\pm},\label{coh12}\\
\Psi_j^{0,\mathbf{m}^0}(\mathbf{z})&=&\sqrt{\frac{2 j+1}{2 \pi}}\langle\mathbf{z} \mid \mathbf{z}\rangle_{0}^{\mathrm{i} \frac{\rho}{2}-1-j}\langle\mathbf{m}^0\mid \mathbf{z} \rangle_{0}^{2 j},\quad \quad\mathbf{m}^0=\left(u^{T}\right)^{-1}\mathbf{n}_{+}.\label{coh30}
\ee
More information about coherent states can be found in Appendix \ref{Coherent state}.


\subsection{Bilinear form ${\beta}$ and the scalar product}

We have realized states in the $\Slc$ unitary irrep as homogeneous functions on $\C^2$. The inner product between $\Psi_1,\Psi_2\in\ch^{(\rho,n)}$ is an integral on $\mathbb{CP}^1$
\be
\left\langle\Psi_{1}, \Psi_{2}\right\rangle=\int_{\mathbb{C} \mathbb{P}^{1}} \overline{\Psi_{1}(\mathbf{z})} \Psi_{2}(\mathbf{z}) \Omega_{\mathbf{z}},\quad \Omega_{\mathbf{z}}=\frac{\mathrm{i}}{2}\left(z_{+}dz_{-}-z_{-}dz_{+}\right)\wedge\left(\bar{z}_{+}d\bar{z}_{-}-\bar{z}_{-}d\bar{z}_{+}\right).\label{innerpsipsi}
\ee
The integral a priori defined in $\C^2$ descends to the quotient of $\mathbb{CP}^1$.

{We will use the scalar product in our definition of the vertex amplitude. There is an alternative formulation appearing in the literature.} This formulation uses the invariant bilinear form on $\ch_{(\rho,n)}$
\cite{semiclassical,gelfand5}
\be
\beta(\Phi, \Phi'):=\frac{\sqrt{\rho^2+n^2}}{2\pi}
\int_{\CP^1\times \CP^1}\Omega_{\zz}\wedge \Omega_{\zz'}|[\zz,\zz']|^{-2} [\zz,\zz']^{\frac{-\iu\rho-n} {2}}\overline{[\zz,\zz']}^{\frac{-\iu \rho+n}{2}}\ \Phi(\zz)\Phi'(\zz'),\label{betaprod}
\ee
where $[z,w] := z_0 w_1 - z_1 w_0$.

We define an antilinear map $\hat{\beta}\colon \ch_{(\rho,n)}\rightarrow \ch_{(\rho,n)}$ by
\be
\label{eq:beta-hat}
\overline{\hat{\beta}(\Phi)}(\zz')=\frac{\sqrt{\rho^2+n^2}}{2\pi}
\int_{\CP^1}\Omega_{\zz}|[\zz,\zz']|^{-2} [\zz,\zz']^{\frac{-\iu\rho-n} {2}}\overline{[\zz,\zz']}^{\frac{-\iu \rho+n}{2}}\ \Phi(\zz).
\ee
Then the invariant bilinear form relates to the inner product by
\be
\beta(\Psi,\Phi)=\langle \hat{\beta}(\Psi),\Phi\rangle\label{hat-beta}
\ee
The map is intertwining $\hat{\beta}(g \triangleright \Phi)=g \triangleright \hat{\beta}(\Phi)$ by definition \eqref{hat-beta}, and as the representation is irreducible it is proportional to an anti-unitary (in fact it is just an anti-unitary in this case \cite{gelfand5}).


\begin{Lemma}
The map $\hat{\beta}$ satisfies
\be
\hat{\beta}(\Psi_j^{\pm,\mathbf{m}^{\pm}})=C_\pm\Psi_j^{\mp,\mathbf{m}^{\mp}},\quad \hat{\beta}(\Psi_j^{0,\mathbf{m}^0})=C_0\Psi_j^{0,\mathbf{m}^0}
\ee
where
\be
C_+=\frac{\sqrt{\rho^2+n^2}}{n-i\rho},\quad C_-=(-1)^nC_+,\quad 
C_0=\frac{\sqrt{\rho^2+n^2}}{i\rho+n}
\ee
Similarly
\be
\hat{\beta}(F_{jm}^s)\propto F_{j,-m}^{-s},\quad 
\hat{\beta}(F_{jm})\propto F_{j,-m}
\ee

\end{Lemma}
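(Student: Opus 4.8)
The plan is to compute the action of $\hat\beta$ on the explicit homogeneous functions $\Psi_j^{\pm,\mathbf m^\pm}$, $\Psi_j^{0,\mathbf m^0}$, $F_{jm}^s$ and $F_{jm}$ given in \eqref{coh12}--\eqref{coh30} and \eqref{f_su2}--\eqref{f_su11}, using the integral kernel in \eqref{eq:beta-hat}. First I would reduce to the standard-position states: since $\hat\beta$ is intertwining, $\hat\beta(g\triangleright\Phi)=g\triangleright\hat\beta(\Phi)$, it suffices to establish the identities for $\mathbf m^\pm=\mathbf n_\pm$ and $\mathbf m^0=\mathbf n_+$ (and, for the canonical-basis statement, to note that both sides transform the same way under $L_3$-weight shifts, so only the proportionality constant is at stake there — which is why ``$\propto$'' suffices for the $F$'s). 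So the core task is the single Gaussian-type integral
\be
\overline{\hat{\beta}(\Psi_j^{\pm,\mathbf n_\pm})}(\zz')=\frac{\sqrt{\rho^2+n^2}}{2\pi}\int_{\CP^1}\Omega_{\zz}\,|[\zz,\zz']|^{-2}\,[\zz,\zz']^{\frac{-\iu\rho-n}{2}}\,\overline{[\zz,\zz']}^{\frac{-\iu\rho+n}{2}}\,\Psi_j^{\pm,\mathbf n_\pm}(\zz),
\ee
and likewise for $\epsilon=0$.

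The key steps are as follows. (i) Exploit the homogeneity to replace the $\CP^1$ integral by an integral over a convenient section or over $\C^2$ against a delta-function constraint; the homogeneity weights of $\Psi_j^{\eps,\bullet}$ in \eqref{coh12}--\eqref{coh30} are exactly tuned so the integrand is a genuine $(1,1)$-form on $\CP^1$, and the kernel $|[\zz,\zz']|^{-2}[\zz,\zz']^{\alpha}\overline{[\zz,\zz']}^{\bar\alpha}$ with $\alpha=\tfrac{-\iu\rho-n}{2}$ carries the complementary weight. (ii) Perform the resulting integral, which is of the classical Gelfand--Graev--Vilenkin type $\int_{\C}|w|^{2a}\,w^{b}\,(\text{linear in }w)^{c}\,\dif^2 w$; the Heaviside factors $\Theta(\pm\langle\zz\mid\zz\rangle)$ in \eqref{coh12} restrict the domain and, crucially, the kernel's phase $[\zz,\zz']^{\alpha}$ converts the $\pm$-sign inside the $\Theta$ to the opposite sign $\mp$ in the output, which is the mechanism producing $\Psi_j^{\mp,\mathbf n_\mp}$ from $\Psi_j^{\pm,\mathbf n_\pm}$. (iii) Read off the proportionality constant from the Beta-function value of the integral and match it against the normalization $\sqrt{(2j-1)/\pi}$ in \eqref{coh12}; the prefactor $\sqrt{\rho^2+n^2}/(2\pi)$ together with the $\Gamma$-factors should collapse to $C_\pm=\sqrt{\rho^2+n^2}/(n-\iu\rho)$, with the $(-1)^n$ discrepancy between $C_+$ and $C_-$ coming from the branch of $[\zz,\zz']^{\alpha}\overline{[\zz,\zz']}^{\bar\alpha}=|[\zz,\zz']|^{-\iu\rho}\big([\zz,\zz']/|[\zz,\zz']|\big)^{-n}$ when one swaps $z_+\leftrightarrow z_-$ to go from the $+$ to the $-$ computation. (iv) For $\epsilon=0$ the same computation runs with $\langle\cdot\mid\cdot\rangle_0$ in place of $\langle\cdot\mid\cdot\rangle$ and no $\Theta$; one gets $\hat\beta$ mapping $\Psi_j^{0,\mathbf n_+}$ to a multiple of itself (the SU(2) inner product being positive definite, the sign-flip phenomenon is absent), with constant $C_0=\sqrt{\rho^2+n^2}/(\iu\rho+n)$. (v) Finally, alternatively and more cheaply: I would verify the $\hat\beta(\Psi_j^{\mp,\mathbf n_\mp})=C_\mp\Psi_j^{\pm,\mathbf n_\pm}$ relation is consistent by using \eqref{hat-beta}, $\beta(\Psi,\Phi)=\langle\hat\beta(\Psi),\Phi\rangle$, and the known fact that $\hat\beta$ is (up to the scalar) anti-unitary, so that $\hat\beta^2$ is a scalar; computing $C_+C_-$ and $C_0^2$ from the claimed formulas and checking they equal $|\hat\beta^2|$ gives an independent cross-check and fixes the phase ambiguity in $C_\pm$. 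For the canonical basis one can either repeat (i)--(iii) with \eqref{f_su2}--\eqref{f_su11} directly, or — shorter — expand $|j,m\rangle^\eps$ in coherent states (or use that $|j,m\rangle^\eps$ is the unique weight-$m$ vector up to scale and that $\hat\beta$ is weight-reversing because $L_3$ is represented anti-symmetrically under the bilinear form), which immediately gives $\hat\beta(F_{jm}^s)\propto F_{j,-m}^{-s}$ and $\hat\beta(F_{jm})\propto F_{j,-m}$.

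The main obstacle I expect is bookkeeping of branches and signs in step (iii): the kernel involves fractional powers $[\zz,\zz']^{(-\iu\rho-n)/2}$ and $\overline{[\zz,\zz']}^{(-\iu\rho+n)/2}$ of a complex quantity, the coherent states carry $\Theta$-functions and powers $(\pm\langle\zz\mid\zz\rangle)^{\iu\rho/2-1+j}$ whose argument changes sign on the relevant integration region, and $j$ is half-integer, so getting the precise constant — in particular the $(-1)^n$ relating $C_-$ to $C_+$ and the choice $n-\iu\rho$ versus $n+\iu\rho$ in the denominator — requires carefully fixing the contour/branch conventions consistently with those used in defining \eqref{coh12}--\eqref{coh30} and \eqref{betaprod}. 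The convergence of the $\CP^1$ integral in the timelike ($\pm$) case, where the integrand is supported only on the region $\pm\langle\zz\mid\zz\rangle>0$ and the powers are complex, also needs a brief justification (it is absolutely convergent because $|\langle\zz\mid\zz\rangle|^{\iu\rho/2}$ has modulus one and the real part of the exponents is integrable near the light cone and at infinity on $\CP^1$); I would dispatch this with the same estimates used elsewhere in the paper for the vertex-amplitude convergence, citing \cite{gelfand5} for the closed-form value of the model integral.
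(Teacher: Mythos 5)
Your plan is viable in outline, but it takes a genuinely different — and substantially heavier — route than the paper. The paper does \emph{not} compute the full integral $\overline{\hat{\beta}(\Psi_j^{+,\mathbf{n}_+})}(\zz')$ for arbitrary $\zz'$. Instead it first observes that, because $\hat{\beta}$ is antilinear and intertwining, it commutes with the Casimir but \emph{anticommutes} with $L_3$, i.e.\ $\hat{\beta}(L_3\Phi)=-L_3\hat{\beta}(\Phi)$; combined with the fact that a joint eigenvector of $(L^2_{\mathrm{SU}(1,1)},L_3)$ (resp.\ $(L^2_{\Su},L_3)$) with given eigenvalues is unique in $\ch_{(\rho,n)}$ up to scale and equals $F^{s}_{j,m}$ (resp.\ $F_{j,m}$), this immediately yields all the proportionality statements $\hat{\beta}(F^s_{jm})\propto F^{-s}_{j,-m}$, $\hat{\beta}(F_{jm})\propto F_{j,-m}$, and $\hat{\beta}(\Psi_j^{+,\mathbf{n}_+})=C_+\Psi_j^{-,\mathbf{n}_-}$. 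The constant $C_+$ is then fixed by evaluating the kernel integral at the \emph{single} point $\zz'=\mathbf{n}_-$, where it collapses to $\sqrt{\rho^2+n^2}\int_1^\infty x^{-i\rho-n-1}\dif x=\sqrt{\rho^2+n^2}/(i\rho+n)$ (and one must remember the complex conjugation in \eqref{eq:beta-hat}, which is what turns $i\rho+n$ into the $n-i\rho$ of $C_+$). The relation $C_-=(-1)^nC_+$ comes from acting with $U=\bigl(\begin{smallmatrix}0&-1\\1&0\end{smallmatrix}\bigr)$ and the intertwining property, not from tracking branches of $[\zz,\zz']^{(-i\rho-n)/2}$. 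You do gesture at this shortcut for the canonical basis (your ``unique weight-$m$ vector'' remark), but your primary plan is the brute-force computation.

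The place where your direct route has a real gap is your step (ii): you assert that ``the kernel's phase converts the $\pm$-sign inside the $\Theta$ to the opposite sign,'' but the kernel $[\zz,\zz']$ carries no manifest information about the sign of $\langle\zz'\mid\zz'\rangle$, and the nontrivial content of the claim $\hat{\beta}(\Psi_j^{+,\mathbf{n}_+})\propto\Psi_j^{-,\mathbf{n}_-}$ is precisely that the integral over the region $\langle\zz\mid\zz\rangle>0$ \emph{vanishes identically} for all $\zz'$ with $\langle\zz'\mid\zz'\rangle>0$. Establishing that vanishing directly from the Gelfand--Graev--Vilenkin integral (together with the $(-1)^n$ branch bookkeeping you flag) is exactly the hard part, and your sketch gives no mechanism for it; the paper's uniqueness argument delivers it for free, since $\Psi_j^{-,\mathbf{n}_-}$ is the unique (up to scale) state with the required $(L^2,L_3)$ eigenvalues and it happens to be supported on $\langle\zz\mid\zz\rangle<0$. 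If you intend to keep the direct computation, you must either supply that vanishing argument or fall back on the eigenvalue/uniqueness argument — at which point the full integral becomes unnecessary and a single-point evaluation suffices, which is the paper's proof.
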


The proof is provided in Appendix \ref{sec:comparison}. The Lemma allows us to introduce maps $\hat{\beta}^s\colon D_j^s\rightarrow D_j^{-s}$ and $\hat{\beta}^0\colon D_j^0\rightarrow D_j^0$ by condition
\be
\hat{\beta}Y_s=Y_s\hat{\beta}^s,\quad  \hat{\beta}Y_0=Y_0\hat{\beta}^0. \label{betas}
\ee
These maps will be useful in analysing relation between formulations (of the vertex amplitude) with the inner product \eqref{innerpsipsi} and the invariant bilinear form \eqref{betaprod}.

\subsection{4-simplex vertex amplitude}

The vertex amplitude can be realized either with use of the scalar product or invariant bilinear form $\b$. As we see above this two ways are equivalent, because we can replace contraction with bilinear form by scalar product.

Let's firstly define the vertex amplitude of a 4-simplex using coherent states \eqref{coh12} and \eqref{coh30}, then extend to more general circumstances. The vertex amplitude $A_v$ involves the coherent states acted by $g\in\Slc$: $g\act\Psi_j^{\pm,\mathbf{m}^{\pm}}(\mathbf{z})=\Psi_j^{\pm,\mathbf{m}^{\pm}}(g^T\mathbf{z})$ and $g\act\Psi_j^{0,\mathbf{m}^0}(\mathbf{z})=\Psi_j^{0,\mathbf{m}^0}(g^T\mathbf{z})$.
The vertex amplitude of a 4-simplex is expressed as
\be
A_v&=&\int \prod_{a=1}^5\rmd g_a\,\delta(g_1)\prod_{a<b}\lag \Psi_{j_{ab}}^{\eps,\mathbf{m}^{\eps}_{ab}},\,\lt(g_a^{-1}g_b\rt)\act\Psi_{j_{ab}}^{\eps',\mathbf{m}^{\eps'}_{ba}}\rag,\quad \eps,\eps'=0,\pm,\label{Avint2}\\
&=&\int \prod_{a=1}^5\rmd g_a\,\delta(g_1)\prod_{a<b}\int_{\mathbb{CP}^1}\overline{ \Psi_{j_{ab}}^{\eps,\mathbf{m}^{\eps}_{ab}}(\mathbf{z}_{ab})}\ \lt(g_a^{-1}g_b\rt)\act\Psi_{j_{ab}}^{\eps',\mathbf{m}^{\eps'}_{ba}}(\mathbf{z}_{ab})\ \O_{{\bf z}_{ab}} .\nonumber
\ee
where $a=1,\cdots,5$ labels 5 tetrahedra forming the boundary of the 4-simplex. $\delta(g_1)$ gauge fixes the redundant $\Slc$ freedom in the integrand. The vertex amplitude in canonical basis $|j,m\rangle^\eps$ ($\eps=0,\pm$) is also involved in our discussion, and is given by substituting $F_{jm},F_{jm}^\pm$ for $\Psi_{j}^{\eps,\mathbf{m}^{\eps}}$ in \eqref{Avint2}.

\subsection{Generalized vertices}

Generalizing from models on simplicial complexes, the spinfoam amplitude can be adapt to arbitrarily 4d cellular complex \cite{KKL,generalize}. The generalized amplitude is built by the vertex amplitudes whose boundaries are arbitrary valent spin-networks dual to polyhedra. These generalized amplitudes find application in spinfoam cosmology and various models where cuboid decompositions is used instead of triangulations \cite{Bahr:2017klw,Bianchi:2010zs}. There is also progress on the large-$j$ asymptotics of the generalized vertex amplitude \cite{Dona:2020yao}. Thus it is useful to take into account the generalized vertex amplitude.

Let us consider a graph $\Gamma$.
We denote by 
\begin{enumerate}
    \item $\Gamma_\fl$ the set of links in the graph,
    \item $\Gamma_\fn$ the set of nodes of the graph,
\end{enumerate}
Each link ${\fl}\in \Gamma_{\fl}$ is oriented and it starts in {$s({\fl})\in \Gamma_{\fn}$ and ends in $t({\fl})\in \Gamma_{\fn}$.} The valence of the node equals to the number of links connecting to the given node.

The important notion is connectivity of the graph. 

\begin{Definition}
The graph $\Gamma$ is disconnected if there exists a division of the set $\Gamma_{\fn}$ into two disjoint nonempty sets $V_0,V_1$
\be
V_0\cup V_1=\Gamma_{\fn},\quad V_0\cap V_1=\emptyset,\quad V_0\not=\emptyset,\quad  V_1\not=\emptyset
\ee
such that for every ${\fl}\in \Gamma_{\fl}$, both
\be
s({\fl}),t({\fl})\in V_0 \quad \text{ or }\quad  s({\fl}),t({\fl})\in V_1
\ee
\end{Definition}

We call graph connected if it is not disconnected. From the point of view of integrability the important is notion of $3$-connectivity.

\begin{Definition}\label{defconn}
The graph is $N$-connected if any graph obtained from it by removing  $k<N$ links from the set $\Gamma_{\fl}$ is still connected.
\end{Definition}

{Every $N$-connected graph is also $N'$-connected for $N'\leq N$. For examples, the 4-simplex graph is a $4$-connected graph, hence also $3$-connected and the tetrahedron graph is a $3$-connected graph.}

The generalized vertex amplitude is defined as follows. We choose
\begin{enumerate}
\item a spin $j_{\fl}$ for every link ${\fl}\in \Gamma_{\fl}$.
    \item $\eps_{{\fn}{\fl}}\in\{0,+,-\}$ for every  ${\fl}\in \Gamma_{\fl}$ and ${\fn}=t({\fl})$ or ${\fn}=s({\fl})$. 
    
\item {Any boundary state $\psi_{{\fn}{\fl}}^{\eps_{{\fn}{\fl}}}\in D_{j_{\fl}}^{\eps_{{\fn}{\fl}}}$ for each pair $(\fn,\fl)$ such that ${\fl}\in \Gamma_{\fl}$ and ${\fn}=t({\fl})$ or ${\fn}=s({\fl})$
}
\end{enumerate}
We define the amplitude by 
\be
A_\Gamma\lt(\lt\{\psi_{\fn\fl}^{\eps_{{\fn}{\fl}}}\rt\}_{\fn,\fl}\rt)=\int\limits_{\SLDC^{\Gamma_{\fn}}} \prod_{{\fn}\in \Gamma_{\fn}}\rmd g_{\fn}\,\delta(g_{{\fn}_0})\prod_{e\in \Gamma_{\fl}}\lag Y_{\eps_{t({\fl}){\fl}}} \psi_{t({\fl}){\fl}}^{\eps_{t({\fl}){\fl}}},\,\lt(g_{t({\fl})}^{-1}g_{s({\fl})}\rt)\act Y_{\eps_{s({\fl}){\fl}}}\psi_{s({\fl}){\fl}}^{\eps_{s({\fl}){\fl}}}\rag, \label{AGamma}
\ee
where ${\fn}_0\in \Gamma_{\fn}$ is a chosen node. The result is independent of this choice of ${\fn}_0$ if the integral is absolutely convergent.

\begin{Remark}\label{remarktimelike}
Both the asymptotic analysis and the Conrady-Hnybida construction suggest the following terminology which will be useful in our paper: A node of the boundary graph is the dual of a polyhedron. The faces of the polyhedra are dual to links. The polyhedron dual to $\fn$ is spacelike if $\eps_{{\fn}{\fl}}=0$ (and then we assume that all $\eps_{{\fn}{\fl}}=0$ for all $\fl$ adjacent to $\fn$), otherwise, the polyhedron is timelike (then all $\eps_{{\fn}{\fl}}=\pm$ for all $\fl$ adjacent to $\fn$). If $\eps_{{\fn}{\fl}}=+$ ($\eps_{{\fn}{\fl}}=-$) then the face has the future-pointing (past-pointing) normal. 
\end{Remark}


For the vertex amplitude with coherent states, the boundary states are parametrized by $\mathbf{m}^{\eps_{{\fn}{\fl}}}_{{\fn}{\fl}}\in \C^2$ for every ${\fl}\in \Gamma_{\fl}$ and ${\fn}=t({\fl})$ or ${\fn}=s({\fl})$, then we have $Y_{\eps_{{\fn}{\fl}}}\psi_{{\fn}{\fl}}^{\eps_{{\fn}{\fl}}}=\Psi_{j_{{\fl}}}^{\eps_{{\fn}{\fl}},\mathbf{m}^{\eps_{{\fn}{\fl}}}_{{\fn}{\fl}}}$. Here $\mathbf{m}^{\eps_{\fn \fl}}_{{\fn}{\fl}}\in \C^2$ for every ${\fn}\in \Gamma_{\fn}$ and adjacent ${\fl}\in \G_{\fl}$, representing the face normal vectors of the polyhedron dual to ${\fn}$. 
For the vertex amplitude in canonical basis, we have $\psi_{{\fn}{\fl}}^{\eps_{{\fn}{\fl}}}=|j_{l},m_{\fn\fl}\rangle^{\eps_{\fn\fl}}$ and $Y_{\eps_{{\fn}{\fl}}}\psi_{{\fn}{\fl}}^{\eps_{{\fn}{\fl}}}=F^{\eps_{\fn\fl}}_{j_{l}m_{\fn\fl}}$ ($F^0_{jm}\equiv F_{jm}$).

The proof of the absolute convergence of the integral applies to either coherent or canonical-basis boundary states. As these states are preserved by $\hat{\beta}$-map, our proof gives also absolutely convergent in the case of the vertex amplitude with such states written in terms of bilinear map
\be
A_\Gamma^\beta\lt(\lt\{\psi_{\fn\fl}^{\eps_{{\fn}{\fl}}}\rt\}_{\fn,\fl}\rt)=\int\limits_{\SLDC^{\Gamma_{\fn}}} \prod_{{\fn}\in \Gamma_{\fn}}\rmd g_{\fn}\,\delta(g_{{\fn}_0})\prod_{e\in \Gamma_{\fl}}\beta\left( Y_{\eps_{t({\fl}){\fl}}} \psi_{t({\fl}){\fl}}^{\eps_{t({\fl}){\fl}}},\,\lt(g_{t({\fl})}^{-1}g_{s({\fl})}\rt)\act Y_{\eps_{s({\fl}){\fl}}}\psi_{s({\fl}){\fl}}^{\eps_{s({\fl}){\fl}}}\right)\label{AGamma-beta}
\ee
{In the case of \eqref{AGamma-beta}
we assume that the integral in the bilinear form $\beta$ is evaluated first, because this integral is finite but not absolutely convergent when taking into account the integral in $\beta$. }

Because of the relation \eqref{hat-beta} between $\langle\cdot,\cdot\rangle$ and $\b(\cdot,\cdot)$ and the properties \eqref{betas} of $\hat{\b}^0,\hat{\b}^\pm$, we have  
\be
A_\Gamma^\beta\lt(\lt\{\psi_{t(\fl)\fl}^{\eps_{t(\fl)\fl}},\psi_{s(\fl)\fl}^{\eps_{s(\fl)\fl}}\rt\}\rt)=A_\Gamma\lt(\lt\{\hat{\b}^{\eps_{t(\fl)\fl}}\lt(\psi_{t(\fl)\fl}^{\eps_{t(\fl)\fl}}\rt),\psi_{s(\fl)\fl}^{\eps_{s(\fl)\fl}}\rt\}\rt).\label{AAbeta}
\ee
One can define spinfoam amplitude equivalently using either $A_\G$ or $A_\G^\b$, although they are parametrized differently. 


\section{Absolute convergence}\label{sec3}

We are going to show that the intergal \eqref{AGamma} is absolutely convergent.
We identify the upper unit hyperboloid $\ch$ with the quotient of the group $\SLDC$
\be
\ch=\SLDC/\Su,\quad [g]\in\ch,\quad [gu]=[g]\text{ for } u\in \Su.
\ee
From any function $f$ on $\SLDC$, we produce a function on $\ch$ by integration
\be
\tilde{f}([g])=\int_{\Su} d\mu(u)\ f(gu),
\ee
where $d\mu$ denotes the Haar measure on $\Su$. Element $(g^{-1}g')^T\in\Slc$ can be expressed as the product of two $\Su$ matrices and a diagonal matrix
\be
(g^{-1}g')^{T}=R'\left(\begin{array}{cc}
\lambda & 0\\
0 & \lambda^{-1}
\end{array}\right)R,\quad\lambda=e^{d([g],[g'])/2},
\ee
where $d([g],[g'])$ is a distance on a unit hyperboloid between points $[g]$ and $[g']$ and $R,R'\in \Su$. Let us notice
\be
d([g^{-1}g'],0)=d([g'],[g]).
\ee
The following is an abstract version of the theorem proven for $4$-simplex and several other graphs in \cite{Baez:2001fh} and extended to all $3$-connected graphs in \cite{Kaminski:2010qb}.

\begin{Theorem}\label{thm-1}
Let $\Gamma$ be a $3$-connected graph and
let for any link ${\fl}\in \Gamma_{\fl}$ in the graph $K_{\fl}(x,y)$ be measureable function on $\ch\times \ch$ with property
\be
\forall\ {\fl\in \Gamma_{\fl}},\ \exists\ {C>0}, \quad |K_{\fl}(x,y)|\leq C \frac{d(x,y)}{\sinh d(x,y)} 
\ee
then the integral
\be
\int_{\ch^{\Gamma_{\fn}}}\prod_{{\fn}\in \Gamma_{\fn},\ \fn\not=\fn_0} d\mu(x_{\fn})\ \prod_{e\in \Gamma_{\fl}}K_{\fl}(x_{t({\fl})},x_{s({\fl})})
\ee
is absolutely convergent. Here $d\mu$ denotes $\SLDC$ invariant measure on $\ch$, and $\fn_0$ is a chosen node.
\end{Theorem}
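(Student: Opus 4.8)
The plan is to reduce the absolute convergence of the $\ch$-integral to a combinatorial power-counting estimate, following the strategy of \cite{Baez:2001fh,Kaminski:2010qb} but cast abstractly. First I would fix a basepoint $o\in\ch$ and, using $\Slc$-invariance of $d\mu$ and of the distance $d(\cdot,\cdot)$, move the integration over the non-gauge-fixed vertices into a form where the gauge-fixed node $\fn_0$ sits at $o$. The bound on each $K_\fl$ lets us replace the integrand by the product $\prod_{\fl}\phi\!\left(d(x_{t(\fl)},x_{s(\fl)})\right)$, where $\phi(r)=r/\sinh r$, so it suffices to show
\be
\int_{\ch^{\Gamma_\fn\setminus\{\fn_0\}}}\ \prod_{\fn\neq\fn_0}d\mu(x_\fn)\ \prod_{\fl\in\Gamma_\fl}\phi\!\left(d(x_{t(\fl)},x_{s(\fl)})\right)<\infty .
\ee
The key analytic input is that $\phi(r)\sim 2 r\,e^{-r}$ decays exponentially, while the volume of a ball of radius $R$ in $\ch$ grows like $e^{2R}$; so a single link attached to a node can only ``pay for'' part of that node's volume growth, and one needs enough links per node to make the total integral converge. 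This is exactly where $3$-connectivity enters.

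The main step is the power-counting / induction on the graph. I would cover $\ch$ by the regions $d(x_\fn,o)\in[k,k+1)$ and estimate the integral as a sum over integer-valued ``radial profiles'' $(k_\fn)_{\fn\neq\fn_0}$. On the region labelled by $(k_\fn)$ the volume factor from integrating $x_\fn$ contributes $\lesssim e^{2k_\fn}$, while each link $\fl$ contributes a factor $\lesssim e^{-|k_{t(\fl)}-k_{s(\fl)}|}$ up to polynomial corrections (using the triangle inequality $d(x_{t(\fl)},x_{s(\fl)})\geq |k_{t(\fl)}-k_{s(\fl)}|$ and, in the opposite direction, that integrating the ``angular'' position of $x_\fn$ at fixed radius against $\phi$ of the distance to a fixed point is bounded). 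Thus it remains to prove that
\be
\sum_{(k_\fn)\in\mathbb{Z}_{\geq0}^{\Gamma_\fn\setminus\{\fn_0\}}}\ \mathrm{poly}(k)\ \exp\!\left(2\sum_{\fn\neq\fn_0}k_\fn-\sum_{\fl\in\Gamma_\fl}|k_{t(\fl)}-k_{s(\fl)}|\right)<\infty ,
\ee
with $k_{\fn_0}:=0$. The exponent is a piecewise-linear concave function of the $k_\fn$, maximised on the boundary of each linearity chamber; on each chamber one orders the vertices by their $k$-value and rewrites $\sum_\fl|k_{t(\fl)}-k_{s(\fl)}|$ as a sum of ``cut sizes''. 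Concretely, if $k_{\fn_{(1)}}\leq\cdots\leq k_{\fn_{(m)}}$ is the sorted order, telescoping gives $\sum_\fl|k_{t(\fl)}-k_{s(\fl)}|=\sum_{j}\,c_j\,(k_{\fn_{(j+1)}}-k_{\fn_{(j)}})$ where $c_j$ is the number of links joining $\{\fn_{(1)},\dots,\fn_{(j)}\}$ to its complement in $\Gamma$. Because $\Gamma$ is $3$-connected, every such nontrivial cut has $c_j\geq 3$, whereas the volume exponent $2\sum k_\fn$ reorganizes (again by telescoping, now with $\fn_0$ forced to radius $0$) into $\sum_j 2(m-j)\,(k_{\fn_{(j+1)}}-k_{\fn_{(j)}})$ when $\fn_0=\fn_{(1)}$, and more generally into a sum with coefficients $2\cdot(\#\text{vertices above the cut})$. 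The net coefficient of each gap $(k_{\fn_{(j+1)}}-k_{\fn_{(j)}})$ is then $2 t_j - c_j$ where $t_j$ is the number of vertices strictly above the cut; I need this to be $\leq -1$ (strictly negative) so the geometric sum converges — and here one uses that $\fn_0$ is at radius $0$ and that for the cut separating only the single topmost vertex from the rest we still have $c_j\geq 3 > 2 = 2 t_j$, with the margin only improving for larger $t_j$ by a separate argument. Handling the bookkeeping so that $\fn_0$ always lands on the ``low'' side, and checking the borderline cut where exactly one or two vertices are isolated, is the delicate part.

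The hard part will be this last power-counting estimate: making rigorous the passage from the continuous $\ch$-integral to the discrete radial sum (in particular, bounding the angular integral $\int_{\{d(x,o)=k\}}\phi(d(x,y))\,d(\text{angle})$ uniformly in $y$, which requires the sub-exponential decay of $\phi$ and a spherical-coordinates computation on $\ch$), and then verifying that $3$-connectivity forces every cut coefficient $c_j$ to beat the volume growth $2 t_j$ in every linearity chamber, including the extreme chambers. I expect the cleanest route is to induct on $|\Gamma_\fl|$: remove a link, use $2$-connectivity of the reduced graph to retain convergence of a modified integral, and absorb the deleted link's exponential decay; alternatively one can directly invoke the chamber decomposition above. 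I would also remark, as in \cite{Kaminski:2010qb}, that $3$-connectivity is sharp — a graph with a $2$-edge cut produces a logarithmically (or worse) divergent sub-integral over the relative distance of the two sides — so no weaker hypothesis suffices.
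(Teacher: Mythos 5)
There is a genuine gap, and it is worth separating it from the question of strategy. The paper's own proof is a two-line reduction: it bounds $|K_{\fl}(x,y)|\leq C\,\mathcal{K}_0(x,y)$ with $\mathcal{K}_0(x,y)=d(x,y)/\sinh d(x,y)$ the Barrett--Crane kernel at $\rho=0$, and then invokes the integrability theorem of \cite{Kaminski:2010qb} for $3$-connected graphs. You are instead attempting to reprove that external theorem by a radial power counting, and the power counting you set up does not close.

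Concretely, the reduction to the discrete sum
\be
\sum_{(k_\fn)}\ \mathrm{poly}(k)\,\exp\Big(2\sum_{\fn\not=\fn_0}k_\fn-\sum_{\fl}|k_{t(\fl)}-k_{s(\fl)}|\Big)
\ee
is too lossy, and the combinatorial criterion you need, $c_j>2t_j$ for every cut, is \emph{false} for $3$-connected graphs. Test it on the $4$-simplex graph $K_5$ (which is $4$-connected, and for which the theorem is true): the cut isolating the gauge-fixed node $\fn_0$ from the other four has $c_j=4$ crossing links but $t_j=4$ nodes above it, so the net exponent coefficient is $2\cdot 4-4=+4$ and your sum diverges. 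Your remark that ``the margin only improves for larger $t_j$'' is exactly backwards: the required number of crossing edges grows linearly in $t_j$ while $3$-connectivity only guarantees $c_j\geq 3$. The information you discard is the angular decay of the hyperbolic kernel: on $\ch\simeq\mathbb{H}^3$ two points at radii $r,\rho$ with angular separation $\theta$ bounded away from $0$ satisfy $d\approx r+\rho+2\log\sin(\theta/2)$, not $|r-\rho|$, so generically each link decays like $e^{-r-\rho}$ and only an angular window of measure $\sim e^{-2\min(r,\rho)}$ realizes $d\approx|r-\rho|$. (Indeed $\int_{\{d(x,o)=r\}}\mathcal{K}_0(d(x,y))\,\sinh^2 r\,d\Omega\sim r\rho\,\sinh r/\sinh\rho$, far smaller than $e^{2r}e^{-|r-\rho|}$ when $r\approx\rho$.) Any correct proof along these lines must track angular sectors or use convolution estimates for the kernels, which is precisely the nontrivial content of \cite{Kaminski:2010qb}; a purely radial shell decomposition with per-node volume factors $e^{2k_\fn}$ and per-link factors $e^{-|k_t-k_s|}$ cannot establish the theorem even in the simplest case. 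The economical route here is the paper's: dominate $K_\fl$ by $\mathcal{K}_0$ and cite the known Barrett--Crane integrability result.
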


\begin{proof}
In this case 
\be
|K_{\fl}(x,y)|\leq C {\ck_0(x,y),\quad \ck_0(x,y)=\frac{d(x,y)}{\sinh d(x,y)}}
\ee
where $\ck_0$ is a kernel for Barrett-Crane model with $\rho=0$ \cite{BC,Baez:2001fh}. The absolute convergence follows from the absolute convergence for Barrett-Crane model which is covered by theorem of \cite{Kaminski:2010qb}.
\end{proof}

Based on that we prove the following:

\begin{Theorem}\label{thm-2}
Let $\Gamma$ be a $3$-connected graph and for all links ${\fl}\in \Gamma_{\fl}$ in the graph, 
let $K_{\fl}(g,h)$ of ${\fl}\in \Gamma_{\fl}$ be measureable functions on $\SLDC\times \SLDC$ with property
\be
\forall\ {\fl\in \Gamma_{\fl}},\ \exists\ {C>0},\quad \lt|K_{\fl}(g,h)\rt|\leq C \frac{d([g],[h])}{\sinh d([g],[h])}
\ee
then the integral
\be
\int_{\SLDC^{\Gamma_{\fn}}}\prod_{{\fn}\in \Gamma_{\fn},\ \fn\not=\fn_0} dg_{\fn}\ \prod_{e\in \Gamma_{\fl}}K_{\fl}(g_{t({\fl},\ \fn\not=\fn_0)},g_{s({\fl})})
\ee
is absolutely convergent. Here $\fn_0$ is a chosen node.
\end{Theorem}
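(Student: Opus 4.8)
The plan is to reduce Theorem \ref{thm-2} to Theorem \ref{thm-1} by integrating out the $\Su$ fibres. The key observation is that the bound on $K_{\fl}(g,h)$ depends only on the projected points $[g],[h]\in\ch$, so although we integrate over the full group $\SLDC^{\Gamma_{\fn}}$, the integrand is essentially constant along the $\Su$-orbits. Concretely, I would first apply Tonelli's theorem (everything is nonnegative once we pass to absolute values) and use the decomposition of the Haar measure $\rmd g$ on $\SLDC$ as the product of the invariant measure $\rmd\mu$ on $\ch=\SLDC/\Su$ and the Haar measure on the fibre $\Su$. Writing $g_{\fn}=x_{\fn}u_{\fn}$ with $x_{\fn}$ a choice of coset representative and $u_{\fn}\in\Su$, the bound gives
\be
\prod_{\fl\in\Gamma_{\fl}}\bigl|K_{\fl}(g_{t(\fl)},g_{s(\fl)})\bigr|\leq \prod_{\fl\in\Gamma_{\fl}} C\,\frac{d([g_{t(\fl)}],[g_{s(\fl)}])}{\sinh d([g_{t(\fl)}],[g_{s(\fl)}])}=\prod_{\fl\in\Gamma_{\fl}} C\,\ck_0(x_{t(\fl)},x_{s(\fl)}),
\ee
which no longer depends on the $u_{\fn}$.

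The second step is then to carry out the fibre integrals. Since $\Su$ is compact with unit-normalized Haar measure, integrating each $u_{\fn}$ over $\Su$ contributes a factor $1$ and leaves exactly the integral
\be
C^{|\Gamma_{\fl}|}\int_{\ch^{\Gamma_{\fn}}}\prod_{\fn\in\Gamma_{\fn},\ \fn\neq\fn_0}\rmd\mu(x_{\fn})\ \prod_{\fl\in\Gamma_{\fl}}\ck_0(x_{t(\fl)},x_{s(\fl)}),
\ee
where I have used that the $\delta(g_{\fn_0})$-type gauge fixing (equivalently, dropping the $\fn_0$ integration) descends consistently to the quotient: fixing $g_{\fn_0}$ fixes $x_{\fn_0}$, and the remaining $u_{\fn_0}$ integral is again just a harmless compact factor. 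One must also check that the distance function $d([g],[h])$ appearing in the hypothesis really is the hyperboloid distance used in Theorem \ref{thm-1}, which is immediate from the Cartan-type decomposition $(g^{-1}g')^T=R'\,\mathrm{diag}(\lambda,\lambda^{-1})\,R$ with $\lambda=e^{d([g],[g'])/2}$ recorded just before the statement, together with the remark $d([g^{-1}g'],0)=d([g'],[g])$.

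The final step is to invoke Theorem \ref{thm-1} with the choice $K_{\fl}(x,y)=\ck_0(x,y)$, which trivially satisfies the required bound with $C=1$; Theorem \ref{thm-1} then asserts that the displayed $\ch$-integral is absolutely convergent, and multiplying by the finite constant $C^{|\Gamma_{\fl}|}$ and the (unit) fibre volumes shows the original $\SLDC$-integral is absolutely convergent. I do not expect a serious obstacle here: the only points requiring care are the bookkeeping of the quotient measure $\rmd g=\rmd\mu\otimes \rmd u$ together with the normalization of the Haar measure on $\Su$, and making sure the gauge-fixing node $\fn_0$ is treated compatibly on both sides — but since the bound is manifestly $\Su$-bi-invariant in each argument, these are routine. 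The genuinely hard analytic content, namely the convergence of the Barrett-Crane-type integral on a $3$-connected graph, has already been isolated in Theorem \ref{thm-1} (hence in \cite{Kaminski:2010qb}), so the present theorem is really just a lifting lemma.
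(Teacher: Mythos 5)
Your proposal is correct and follows essentially the same route as the paper: both reduce to Theorem \ref{thm-1} by exploiting that the bound $C\,d([g],[h])/\sinh d([g],[h])$ is constant along the $\Su$ fibres, factoring the Haar measure as $\rmd g=\rmd\mu(u)\,\rmd\mu(x)$ so that the compact fibre integrations contribute only a finite volume factor (the paper keeps $|\Su|^{|\Gamma_{\fn}|-1}$ rather than normalizing to $1$, which is immaterial). The only cosmetic difference is that the paper first dominates $|K_{\fl}|$ by the fibrewise supremum $\tilde K_{\fl}([g],[g'])=\sup_{u,u'}|K_{\fl}(gu,g'u')|$ before descending to $\ch$, whereas you bound directly by the $\Su$-bi-invariant kernel $\ck_0$; both are valid.
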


\begin{proof}
We introduce
\be
\tilde{K}_{\fl}([g],[g'])=\sup_{u,u'\in \Su} |K_{\fl}(gu,g'u)|
\ee
{We introduce a notation for the set of nodes different than $\fn_0$
\be
\Gamma_{\fn}^\ast:=\Gamma_{\fn}\setminus \{\fn_0\}
\ee
}
We can estimate
\be\label{eq:K-tildeK}
\int_{\SLDC^{\Gamma_{\fn}^\ast}}\prod_{{\fn}\in \Gamma_{\fn}^\ast} dg_{\fn}\ \prod_{e\in \Gamma_{\fl}}\lt|K_{\fl}(g_{t({\fl})},g_{s({\fl})})\rt|\leq
\int_{\SLDC^{\Gamma_{\fn}^\ast}}\prod_{\fn\in \Gamma_{\fn}^\ast} dg_{\fn}\ \prod_{e\in \Gamma_{\fl}}\tilde{K}_{\fl}([g_{t({\fl})}],[g_{s({\fl})}])
\ee
The integration over $\SLDC$ can be written as integration over $\Su$ and $\ch$
\be
\int_{\SLDC}dg=\int_{\Su}d\mu(u)\int_{\ch}d\mu(x)
\ee
thus the right-hand side of equation \eqref{eq:K-tildeK} can be written as
\be
&\int_{\Su^{\Gamma_{\fn}^\ast}}\prod_{{\fn}\in \Gamma_{\fn}^\ast} d\mu(u_{\fn})\ \int_{\ch^{\Gamma_{\fn}^\ast}}\prod_{{\fn}\in \Gamma_{\fn}^\ast}d\mu(x_{\fn})\ \prod_{e\in \Gamma_{\fl}}\tilde{K}_{\fl}(x_{t({\fl})},x_{s({\fl})})=\nonumber\\
&=|\Su|^{|\Gamma_{\fn}|-1}\int_{\ch^{\Gamma_{\fn}^\ast}}\prod_{{\fn}\in \Gamma_{\fn}^\ast} d\mu(x_{\fn})\ \prod_{e\in \Gamma_{\fl}}\tilde{K}_{\fl}(x_{t({\fl})},x_{s({\fl})})
\ee
where $|\Su|$ is the total volume of the $\Su$ group. As $\tilde{K}_{\fl}$ satisfies the bound
\be
|\tilde{K}_{\fl}(x,x')|\leq C \frac{d(x,x')}{\sinh d(x,x')}
\ee
we can apply
the previous theorem \ref{thm-1}.
\end{proof}

\subsection{Bounded functions}

Our goal is to show that the function
\be
K_{\fl}(g_{t({\fl})},g_{s({\fl})})=\lag Y_{\eps_{t({\fl}){\fl}}} \psi_{t({\fl}){\fl}}^{\eps_{t({\fl}){\fl}}},\,\lt(g_{t({\fl})}^{-1}g_{s({\fl})}\rt)\act Y_{\eps_{s({\fl}){\fl}}}\psi_{s({\fl}){\fl}}^{\eps_{s({\fl}){\fl}}}\rag
\ee
has the property assumed in the theorem \ref{thm-2}. Let us denote
\be
|\mathbf{z}|_0=\sqrt{\langle \zz\mid\zz\rangle_0}
\ee
In order to state our approach in more general setting we introduce a class of functions:

\begin{Definition}\label{bfunc}
Let us consider a measurable function of two spinor $f(\mathbf{z})$. We say that $f\in L_\infty$ if there exists a constant $C>0$ such that for almost all $\mathbf{z}$ (with respect to the Lebesgue measure)
\be\label{eq:Linfty}
|f(\mathbf{z})|\leq C |\mathbf{z}|_0^{-2} 
\ee
We define the norm
\be
|f|_\infty=\operatorname{ess}\operatorname{sup}_{\C^2}|f(\mathbf{z})||\mathbf{z}|_0^2
\ee
where the essential supremum is defined with respect to the Lebegue measure. \end{Definition}

Let us remind that the essential supremum of a real function $h_\R$ is defined as
\be
\operatorname{ess}\operatorname{sup}_{\C^2} h_\R(\zz)=\inf\left\{C\in \R\colon h(\zz)\leq C\text{ for almost all } \zz\in \C^2\right\}
\ee
In particular if a function $f$ is  continuous except at origin then
\be
|f|_\infty=\operatorname{sup}_{0\not=\zz\in \C^2}|f(\mathbf{z})||\mathbf{z}|_0^2
\ee

Let us notice that if $f(\mathbf{z})$ is such that for almost all $\mathbf{z}$
\be
\forall\ {r,\phi},\quad 
f(re^{i\phi}\mathbf{z})=r^{i\rho-2}e^{i2j\phi}f(\mathbf{z}),
\ee
then the property \eqref{eq:Linfty} is equivalent to
\be
|f(\mathbf{z})|\leq C,\text{ for almost all } |\mathbf{z}|_0=1
\ee
where the measure on the sphere $|\mathbf{z}|_0=1$ is again the standard Lebegue measure. The norm can be computed by
\be
|f|_\infty=\operatorname{ess}\operatorname{sup}_{|\mathbf{z}|_0=1}|f(\mathbf{z})|
\ee
The space of functions which belong to both $L_\infty$ and $\ch_{(\rho,n)}$ will be denoted by $L_\infty(\ch_{(\rho,n)})$.

The importance of this definition is based on some facts:

\begin{Lemma}\label{lm:gact}
Given $g\in\Slc$, if $\Psi\in L_\infty$ then $g\act \Psi\in L_\infty$. Moreover, if $g\in \Su$ then $|g\act \Psi|_\infty=|\Psi|_\infty$
\end{Lemma}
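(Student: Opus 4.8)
The plan is to prove Lemma \ref{lm:gact} directly from Definition \ref{bfunc}, working with the essential supremum characterization of the $L_\infty$ norm. First I would fix $\Psi\in L_\infty$ with $|\Psi|_\infty=C<\infty$, so that $|\Psi(\mathbf{z})|\leq C|\mathbf{z}|_0^{-2}$ for almost all $\mathbf{z}\in\C^2$, and recall that $(g\act\Psi)(\mathbf{z})=\Psi(g^T\mathbf{z})$. The key computation is the pointwise bound
\be
|(g\act\Psi)(\mathbf{z})|=|\Psi(g^T\mathbf{z})|\leq C\,|g^T\mathbf{z}|_0^{-2},
\ee
valid for almost all $\mathbf{z}$ since the linear map $\mathbf{z}\mapsto g^T\mathbf{z}$ is invertible and hence preserves Lebesgue-null sets. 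It then remains to compare $|g^T\mathbf{z}|_0$ with $|\mathbf{z}|_0$. Since $g^T$ is invertible, the operator-norm inequality $|g^T\mathbf{z}|_0\geq \|(g^T)^{-1}\|_{\mathrm{op}}^{-1}\,|\mathbf{z}|_0$ holds for all $\mathbf{z}$, where the operator norm is taken with respect to the Hermitian inner product $\langle\cdot\mid\cdot\rangle_0$. Substituting gives $|(g\act\Psi)(\mathbf{z})|\leq C\,\|(g^T)^{-1}\|_{\mathrm{op}}^2\,|\mathbf{z}|_0^{-2}$ for almost all $\mathbf{z}$, which is exactly \eqref{eq:Linfty} with constant $C\|(g^T)^{-1}\|_{\mathrm{op}}^2$; hence $g\act\Psi\in L_\infty$, and moreover $|g\act\Psi|_\infty\leq \|(g^T)^{-1}\|_{\mathrm{op}}^2\,|\Psi|_\infty$.

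For the second assertion, I would specialize to $g=u\in\Su$. Then $u^T$ is again an element of $\Su$ (as $\Su$ is closed under transposition), and elements of $\Su$ are unitary with respect to $\langle\cdot\mid\cdot\rangle_0$ by definition of the group, so $|u^T\mathbf{z}|_0=|\mathbf{z}|_0$ for every $\mathbf{z}$. Consequently $\|(u^T)^{-1}\|_{\mathrm{op}}=1$, and the bound above gives $|u\act\Psi|_\infty\leq|\Psi|_\infty$. Applying the same inequality to $u^{-1}$ and $u\act\Psi$ yields $|\Psi|_\infty=|u^{-1}\act(u\act\Psi)|_\infty\leq|u\act\Psi|_\infty$, so equality holds. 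Alternatively one can argue directly via the change of variables $\mathbf{z}\mapsto u^T\mathbf{z}$ in the defining essential supremum, using that this map preserves both the Lebesgue measure class and the quantity $|\mathbf{z}|_0$.

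I do not expect any serious obstacle here; this is a routine estimate. The only point requiring a modicum of care is the measure-theoretic bookkeeping — ensuring that ``for almost all $\mathbf{z}$'' is preserved under the invertible linear substitution $\mathbf{z}\mapsto g^T\mathbf{z}$ (which it is, since such a map sends null sets to null sets and conversely), so that the essential supremum transforms correctly. Everything else reduces to the elementary fact that an invertible linear operator on $\C^2$ distorts the norm $|\cdot|_0$ by a bounded factor, and that $\Su$-elements (and their transposes) act by $|\cdot|_0$-isometries.
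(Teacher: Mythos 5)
Your proof is correct and follows essentially the same route as the paper: the paper obtains the lower bound $|g^T\mathbf{z}|_0\geq E|\mathbf{z}|_0$ by minimizing over the compact unit sphere, which is exactly your $E=\|(g^T)^{-1}\|_{\mathrm{op}}^{-1}$, and the $\Su$ case is handled in both by the isometry $|u^T\mathbf{z}|_0=|\mathbf{z}|_0$. Your extra care about null sets under the invertible substitution and the two-sided inequality via $u^{-1}$ only makes explicit what the paper leaves implicit.
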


\begin{proof}
Fixing $g\in\Slc$, there exists a constant $E>0$ such that
\be
|g^T\mathbf{z}|_0\geq E
\ee
for $|\mathbf{z}|_0=1$ as the sphere is compact. Thus
\be
|g^T\mathbf{z}|_0\geq E|\mathbf{z}|_0
\ee
in general. 
Then
\be
|\Psi(g^T\mathbf{z})|\leq C|g^T\mathbf{z}|_0^{-2}\leq CE^{-2}|\mathbf{z}|_0^{-2}
\ee
and $g\act \Psi\in L_\infty$. Moreover as
\be
|u^T\mathbf{z}|_0=|\mathbf{z}|_0
\ee
for $u\in \Su$ we also have $|u\act \Psi|_\infty=|\Psi|_\infty$.
\end{proof}

\begin{Lemma}\label{lm:Fjm}
For any $m\geq j$, $F_{j,\pm m}^\pm\in L_\infty$ as well as for any $-j\leq m\leq j$, $F_{jm}\in L_\infty$.
\end{Lemma}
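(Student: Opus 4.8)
The plan is to combine the explicit expressions \eqref{f_su2} and \eqref{f_su11} with the homogeneity reduction recorded after Definition~\ref{bfunc}. Since $F_{jm}$ and $F^{\pm}_{jm}$ all belong to $\ch_{(\rho,n)}$ with $n=2j$, they satisfy the scaling law \eqref{cond-n-rho}, so by that remark it suffices to show $|f(\mathbf{z})|$ is bounded by a constant for almost every $\mathbf{z}$ on the unit sphere $\{|\mathbf{z}|_0=1\}=\{|z_+|^2+|z_-|^2=1\}$. The whole argument therefore reduces to reading off the modulus of each formula on that sphere; branch choices are irrelevant, because only $|w^{a}|$ (determined by the real part of $a$) enters, and because the base $s\langle\mathbf{z}\mid\mathbf{z}\rangle$ of the only factor with complex exponent is a nonnegative real.

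For the $\Su$ state $F_{jm}$ with $-j\le m\le j$: on $|\mathbf{z}|_0=1$ one has $\langle\mathbf{z}\mid\mathbf{z}\rangle_0=1$, so the factor $\langle\mathbf{z}\mid\mathbf{z}\rangle_0^{\,\mathrm{i}\rho/2-1-j}$ has modulus $1$, while $\langle\mathbf{n}_{\pm}\mid\mathbf{z}\rangle_0=z_{\pm}$ obey $|z_{\pm}|\le 1$ and enter with the nonnegative exponents $j\pm m$. Hence $|F_{jm}(\mathbf{z})|$ equals the normalization constant times $|z_+|^{j+m}|z_-|^{j-m}\le 1$, which settles this case.

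For the $\Suo$ states I would treat $F^{+}_{jm}$ with $m\ge j$ in detail; the case $F^{-}_{j,-m}$ with $m\ge j$ follows by exchanging $z_+\leftrightarrow z_-$. The point is that the Heaviside factor $\Theta(\langle\mathbf{z}\mid\mathbf{z}\rangle)$ confines $\mathbf{z}$ to $|z_+|^2\ge|z_-|^2$, where on the unit sphere $|z_+|\ge 1/\sqrt{2}$, $|z_-|\le 1/\sqrt{2}$ and $0\le\langle\mathbf{z}\mid\mathbf{z}\rangle\le 1$. Then: $|\langle\mathbf{z}\mid\mathbf{n}_+\rangle|^{-j-m}=|z_+|^{-j-m}\le 2^{(j+m)/2}$, since $|z_+|$ is bounded away from $0$; $|\langle\mathbf{z}\mid\mathbf{n}_-\rangle|^{m-j}=|z_-|^{m-j}\le 1$, since $m-j\ge 0$; and the factor $(s\langle\mathbf{z}\mid\mathbf{z}\rangle)^{\,\mathrm{i}\rho/2-1+j}$ has modulus $\langle\mathbf{z}\mid\mathbf{z}\rangle^{\,j-1}\le 1$, since $j-1\ge 0$ and $\langle\mathbf{z}\mid\mathbf{z}\rangle\le 1$. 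Multiplying these together with the finite normalization constant gives a bound on $|F^{+}_{jm}(\mathbf{z})|$ depending only on $j$, $m$, $\rho$.

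The only genuine obstacle is the behaviour near the null cone $\langle\mathbf{z}\mid\mathbf{z}\rangle=0$, where a priori $(s\langle\mathbf{z}\mid\mathbf{z}\rangle)^{\,\mathrm{i}\rho/2-1+j}$ could blow up; this is exactly where one uses that $j\ge 1$ (because $n=2j$ is a positive integer and $j>1/2$ by hypothesis), so that the exponent $j-1$ is nonnegative. Likewise the negative powers $|z_{\pm}|^{-j-m}$ are harmless precisely because the Heaviside function keeps the relevant coordinate bounded below. Away from these two features everything is a routine modulus estimate on a compact sphere.
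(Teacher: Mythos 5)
Your proof is correct and follows essentially the same route as the paper's: the paper works with homogeneous inequalities on all of $\C^2$ (e.g.\ $|\langle\mathbf{z}\mid\mathbf{n}_s\rangle|^{-2}\leq 2|\mathbf{z}|_0^{-2}$ on the support of the Heaviside factor), whereas you first normalize to the sphere $|\mathbf{z}|_0=1$ using the homogeneity remark after Definition~\ref{bfunc} — a reduction the paper itself records as equivalent. The substantive ingredients are identical: the step function keeps the coordinate carrying the negative exponent bounded below, and $m\geq j$ together with $j\geq 1$ makes the remaining exponents nonnegative.
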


\begin{proof}
We consider first $Y^\pm$ map for $SU(1,1))$.  Let us choose $s\in \{-,+\}$. The following inequalities hold
\begin{align}
|z_+|^2-|z_-|^2\leq |z_+|^2&\qquad\Longrightarrow&
\langle \mathbf{z}\mid\mathbf{z}\rangle \leq|\langle \mathbf{z}\mid  \mathbf{n}_+\rangle |^2\\
-|z_+|^2+|z_-|^2\leq |z_-|^2&\qquad\Longrightarrow& 
(-\langle \mathbf{z}\mid\mathbf{z}\rangle) \leq |\langle \mathbf{z}\mid  \mathbf{n}_-\rangle|^2
\end{align}
This shows that
\be
(s\langle \mathbf{z}\mid\mathbf{z}\rangle) \leq |\langle \mathbf{z}\mid  \mathbf{n}_s\rangle|^2
\ee
Suppose that $s\langle \mathbf{z}\mid\mathbf{z}\rangle \geq 0$ then it is also true that
\be
2|\langle \mathbf{z}\mid \mathbf{n}_s\rangle |^{2}=2|z_s|^2= |\mathbf{z}|_0^2+s\langle\mathbf{z}\mid \mathbf{z}\rangle\geq |\mathbf{z}|_0^2
\ee
thus
\be
|\langle \mathbf{z}\mid \mathbf{n}_s\rangle |^{-2}\leq 2 |\mathbf{z}|_0^{-2}
\ee
Moreover, as $|\langle \mathbf{z}\mid\mathbf{n}_+\rangle|^2-|\langle \mathbf{z}\mid\mathbf{n}_-\rangle|^2=\langle \mathbf{z}\mid\mathbf{z}\rangle$
we have
\be
|\langle \mathbf{z}\mid\mathbf{n}_s\rangle|\geq |\langle \mathbf{z}\mid\mathbf{n}_{-s}\rangle|\text{ if }s\langle  \mathbf{z}\mid\mathbf{z}\rangle\geq 0
\ee
Consider now separately $|F_{jm}^s|$, $s=\pm$ (skipping the constant) on its support, which due to Heaviside's step function is given by $s\langle \mathbf{z}\mid\mathbf{z}\rangle \geq 0$:
\begin{enumerate}
    \item $s=+$ and $m\geq j$ then for $\langle \mathbf{z}\mid\mathbf{z}\rangle\geq 0$
    \be
    \langle \mathbf{z}\mid\mathbf{z}\rangle^{j-1}\leq |\langle \mathbf{z}\mid\mathbf{n}_+\rangle|^{2j-2},\quad
    |\langle \mathbf{z}\mid\mathbf{n}_-\rangle|^{-j+m}\leq |\langle \mathbf{z}\mid\mathbf{n}_+\rangle|^{-j+m}
    \ee
    because $m\geq j$ and $j\geq 1$
    \be
\langle \mathbf{z}\mid\mathbf{z}\rangle^{j-1}  |\langle \mathbf{z}\mid\mathbf{n}_+\rangle|^{-j-m} | \langle \mathbf{z}\mid\mathbf{n}_-\rangle|^{-j+m}\leq |\langle \mathbf{z}\mid\mathbf{n}_+\rangle|^{-2}\leq 2|\mathbf{z}|_0^{-2}
\ee
    \item $s=-$ and $m\leq -j$ then for $\langle \mathbf{z}\mid\mathbf{z}\rangle\leq 0$
    \be
    (-\langle \mathbf{z}\mid\mathbf{z}\rangle)^{j-1}\leq |\langle \mathbf{z}\mid\mathbf{n}_-\rangle|^{2j-2},\quad
    |\langle \mathbf{z}\mid\mathbf{n}_+\rangle|^{-j-m}\leq |\langle \mathbf{z}\mid\mathbf{n}_-\rangle|^{-j-m}
    \ee
    because $m\leq -j$ ($-j-m\geq 0$) and $j\geq 1$
    \be
(-\langle \mathbf{z}\mid\mathbf{z}\rangle)^{j-1}  |\langle \mathbf{z}\mid\mathbf{n}_+\rangle|^{-j-m} | \langle \mathbf{z}\mid\mathbf{n}_-\rangle|^{-j+m}\leq |\langle \mathbf{z}\mid\mathbf{n}_-\rangle|^{-2}\leq 2|\mathbf{z}|_0^{-2}
\ee
\end{enumerate}
This shows that $F_{jm}^s\in L^\infty$.

States $F_{jm}$ are just smooth function on the sphere $|\zz|_0=1$ so they are bounded. However, as before we can compute $|F_{jm}|$ skipping the constant (using $|\langle \mathbf{m}\mid \mathbf{z}  \rangle_{0}|^2\leq |\langle\mathbf{z} \mid \mathbf{z}\rangle_{0}|$)
\be
\langle \mathbf{z}\mid \mathbf{z}\rangle_0^{-1-j}  |\langle \mathbf{n}_+\mid \mathbf{z}\rangle_0|^{j+m} |\langle \mathbf{n}_-\mid \mathbf{z}\rangle_0|^{j-m}\leq |\langle\mathbf{z} \mid \mathbf{z}\rangle_{0}|^{-1}=|\mathbf{z}|_0^{-2}
\ee
so $F_{jm}\in L_\infty$. 
\end{proof}

\begin{Lemma}
The coherent states $\Psi_j^{\pm,\mathbf{m}^{\pm}}$ and $\Psi_j^{0,\mathbf{m}^0}$ belong to $L_\infty$.
\end{Lemma}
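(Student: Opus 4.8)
The plan is to reduce the general coherent states to those in the standard position, which are already handled by Lemma~\ref{lm:Fjm}, and then to transport the bound with the help of Lemma~\ref{lm:gact}. First I would record the elementary fact that the standard coherent states are exactly the extremal-weight vectors of the canonical basis after the $Y$-map: comparing \eqref{coh12} with \eqref{f_su11} at $m=\pm j$, $s=\pm$ gives $\Psi_j^{\pm,\mathbf{n}_\pm}=F^\pm_{j,\pm j}$, and comparing \eqref{coh30} with \eqref{f_su2} at $m=j$ gives $\Psi_j^{0,\mathbf{n}_+}=F_{jj}$ (the normalisation factors match on the nose). Since the general coherent state is $|j,\mathbf{m}^\eps\rangle^\eps=u\act|j,\pm j\rangle^\eps$ for a suitable $u\in SU(1,1)$ when $\eps=\pm$ and $u\in\Su$ when $\eps=0$, and since $Y_\eps$ intertwines the group action, one obtains $\Psi_j^{\pm,\mathbf{m}^\pm}=u\act F^\pm_{j,\pm j}$ and $\Psi_j^{0,\mathbf{m}^0}=u\act F_{jj}$.

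From here the argument closes at once: $F^\pm_{j,\pm j}$ and $F_{jj}$ lie in $L_\infty$ by Lemma~\ref{lm:Fjm}, and because $SU(1,1),\Su\subset\Slc$, Lemma~\ref{lm:gact} shows that acting with $u$ keeps them in $L_\infty$; hence $\Psi_j^{\pm,\mathbf{m}^\pm},\Psi_j^{0,\mathbf{m}^0}\in L_\infty$, which is the claim. For the spacelike state one can, if preferred, give a fully explicit bound without invoking covariance: Cauchy--Schwarz for the positive-definite form $\langle\cdot\mid\cdot\rangle_0$ together with $\langle\mathbf{m}^0\mid\mathbf{m}^0\rangle_0=1$ yields $|\langle\mathbf{m}^0\mid\mathbf{z}\rangle_0|^{2j}\leq\langle\mathbf{z}\mid\mathbf{z}\rangle_0^{\,j}$, so that $|\Psi_j^{0,\mathbf{m}^0}(\mathbf{z})|\leq\sqrt{(2j+1)/2\pi}\,|\mathbf{z}|_0^{-2}$ on all of $\C^2\setminus\{0\}$, directly of the form \eqref{eq:Linfty}.

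The only place where a little care is needed is the timelike case $\eps=\pm$: there the relevant hermitian form $\langle\cdot\mid\cdot\rangle$ is indefinite, so a lower bound on $|\langle\mathbf{z}\mid\mathbf{m}^\pm\rangle|$ by Cauchy--Schwarz is unavailable; what one actually wants is the ``reverse'' inequality $|\langle\mathbf{z}\mid\mathbf{m}^\pm\rangle|^2\geq\pm\langle\mathbf{z}\mid\mathbf{z}\rangle$ on the support, together with a bound $|\langle\mathbf{z}\mid\mathbf{m}^\pm\rangle|^{-2}\leq C|\mathbf{z}|_0^{-2}$ valid on $\{\pm\langle\mathbf{z}\mid\mathbf{z}\rangle\geq 0\}$. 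Rather than establishing these directly (which is possible by a compactness argument on $|\mathbf{z}|_0=1$, albeit with an $\mathbf{m}$-dependent constant), I would simply invoke the $SU(1,1)$-covariance above, which transports precisely the inequalities already used for $\mathbf{m}^\pm=\mathbf{n}_\pm$ in the proof of Lemma~\ref{lm:Fjm}. Thus I expect no genuinely new estimate to be required beyond the identification of the standard coherent states with the extremal canonical basis vectors.
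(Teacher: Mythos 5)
Your proof is correct and follows essentially the same route as the paper's: identify the standard coherent states with the extremal canonical-basis functions $F^\pm_{j,\pm j}$ and $F_{jj}$, invoke Lemma~\ref{lm:Fjm} for those, and transport the bound with Lemma~\ref{lm:gact} via the group action. The extra remarks (the direct Cauchy--Schwarz bound in the $\Su$ case and the caveat about the indefinite form in the $SU(1,1)$ case) are sound but not needed beyond what the covariance argument already gives.
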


\begin{proof}From Lemma \ref{lm:Fjm} functions $F_{jj}$ and $F_{j\pm j}^\pm$ belong to $L_\infty$. The coherent states are given by
\begin{equation}
    \Psi_j^{s,\mathbf{m}^s}=g\act F_{j,sj}^s\in L_\infty,\quad 
    \Psi_j^{0,\mathbf{m}^0}=g\act F_{j,j}\in L_\infty
\end{equation}
due to Lemma \ref{lm:gact},
where $g\in SU(1,1)$ is given by
\begin{equation}
    g\mathbf{n}_s=\mathbf{m}^s,\quad g\mathbf{n}_+=\mathbf{m}^0
\end{equation}
depending on which case we are considering.
\end{proof}

The above 2 lemmas show that the canonical basis $F_{j,m},F^\pm_{j,m}$ and coherent state $\Psi_j^{0,\mathbf{m}^0},\Psi_j^{\pm,\mathbf{m}^\pm}$ share the same property, which is the key for proving the absolute convergence of the integral.

\begin{Remark}
Let us notice that even if $\Psi\in L_\infty$, it might happen that $\hat{\beta}(\Psi)\notin L_\infty$ for general $\Psi$, thus our proof of convergence does not translate verbatim to the proof of the vertex amplitude in terms of bilinear scalar product unless we assume additionally that $\hat{\beta}(\Psi)\in L_\infty$, for boundary state. This assumption is however satisfied for both coherent states and the canonical basis as proven in Appendix \ref{sec:comparison}. This is sufficient for our applications.
\end{Remark}

\subsection{Integral on \texorpdfstring{$\mathbb{CP}^1$}{CP1} and the bound}

We can introduce a parametrization of a cross section $\CP^1$
\be
\zz=\left(\begin{array}{c}
-e^{i\phi/2}\sin \theta/2\\e^{-i\phi/2}\cos \theta/2
\end{array}\right),\quad \theta\in [0,\pi],\quad \phi\in [0,2\pi)
\ee
This parametrization is singular only at $\theta=0,\pi$ (measure zero sets). 
\be
z_+ d z_--z_-dz_+=\frac{1}{2}(d\theta+i\sin\theta d\phi)
\ee
The measure is given by
\be
\Omega_z=\frac{1}{4}\sin\theta\, d\theta\wedge d\phi
\ee
The scalars product as the integrals over $\mathbb{CP}^1$ can be expressed by
\be\label{eq:CP1-theta}
\langle \Phi,\Psi\rangle=\frac{1}{4}\int_0^{2\pi}d\phi\int_{0}^\pi d\theta\, \sin\theta\,  \overline{\Phi(\mathbf{z})}\,\Psi(\mathbf{z}),\qquad \zz=\left(\begin{array}{c}
-e^{i\phi/2}\sin \theta/2\\e^{-i\phi/2}\cos \theta/2
\end{array}\right)
\ee
Now the main theorem:
\begin{Theorem}\label{thm-3}
Let $\Phi,\Psi\in L_\infty(\ch_{(\rho,n)})$ then
\be
|\langle \Phi, g\act \Psi\rangle |\leq 2\pi|\Phi|_\infty |\Psi|_\infty\ \frac{d([g],0)}{\sinh d([g],0)},
\ee
\end{Theorem}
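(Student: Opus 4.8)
The idea is to reduce the bound to an explicit integral over $\mathbb{CP}^1$ and then estimate it using the $L_\infty$ property together with the hyperbolic geometry of the diagonal part of $g^T$. First I would use $\Slc$-invariance to put $g$ in a convenient form: writing $g^T = R' D R$ with $R,R'\in\Su$ and $D=\mathrm{diag}(\lambda,\lambda^{-1})$, $\lambda = e^{d([g],0)/2}$, and using Lemma \ref{lm:gact} (which says $\Su$ acts isometrically on $L_\infty$-norms, and that $g\act\Psi\in L_\infty$), it suffices to prove the bound when $g^T = D$ is diagonal, i.e. to estimate $|\langle \Phi, D\act\Psi\rangle|$ with $D\act\Psi(\zz) = \Psi(D\zz)$, $D\zz = (\lambda z_+, \lambda^{-1} z_-)^T$. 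Then I would write the inner product as the explicit integral \eqref{eq:CP1-theta} over $(\theta,\phi)$ and bound the integrand pointwise: $|\overline{\Phi(\zz)}\,\Psi(D\zz)| \leq |\Phi|_\infty |\Psi|_\infty\, |\zz|_0^{-2}\, |D\zz|_0^{-2}$.

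\textbf{The core estimate.} On the cross-section, $|\zz|_0 = 1$, so the first factor is $1$, and $|D\zz|_0^2 = \lambda^2\sin^2(\theta/2) + \lambda^{-2}\cos^2(\theta/2)$. Hence
\be
|\langle \Phi, D\act\Psi\rangle| \leq \frac{|\Phi|_\infty |\Psi|_\infty}{4}\int_0^{2\pi}d\phi\int_0^\pi d\theta\,\frac{\sin\theta}{\lambda^2\sin^2(\theta/2)+\lambda^{-2}\cos^2(\theta/2)} .
\ee
The $\phi$-integral gives $2\pi$. For the $\theta$-integral, substitute $t = \cos\theta$ (so $\sin^2(\theta/2) = (1-t)/2$, $\cos^2(\theta/2)=(1+t)/2$, $\sin\theta\,d\theta = -dt$), turning it into
\be
\int_{-1}^{1}\frac{dt}{\tfrac{\lambda^2}{2}(1-t)+\tfrac{\lambda^{-2}}{2}(1+t)} = \frac{2}{\lambda^2-\lambda^{-2}}\,\ln\!\frac{\lambda^2}{\lambda^{-2}} = \frac{2\,\ln(\lambda^4)}{\lambda^2-\lambda^{-2}} .
\ee
With $\lambda = e^{d/2}$, $d = d([g],0)$, this is $\frac{4d}{2\sinh d} = \frac{2d}{\sinh d}$; combined with the $\tfrac14$ prefactor and the $2\pi$ from $\phi$ this gives exactly $2\pi\,\frac{d}{\sinh d}$, as claimed. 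One must also handle the degenerate case $d=0$ (then $D=\mathrm{id}$ and the bound reads $|\langle\Phi,\Psi\rangle|\leq 2\pi|\Phi|_\infty|\Psi|_\infty$, which follows directly from Cauchy–Schwarz-type pointwise estimate and $\int\Omega_\zz = \pi$ over $\mathbb{CP}^1$, or by continuity $d/\sinh d \to 1$).

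\textbf{Main obstacle.} The calculation itself is routine; the one genuine subtlety is the reduction step — making sure that conjugating $g$ by $\Su$ elements on both sides is harmless. On the left factor $\Phi$, absorbing $R'^{-1}$ uses $|R'^{-1}\act\Phi|_\infty = |\Phi|_\infty$; on the right, absorbing $R$ uses $|R\act\Psi|_\infty = |\Psi|_\infty$; and one must note that $\langle\Phi, g\act\Psi\rangle = \langle (R')^{-1}\act\Phi,\, D\act(R\act\Psi)\rangle$ since $\Su$ acts unitarily on $\ch_{(\rho,n)}$ and the $\mathbb{CP}^1$ integral is $\Su$-invariant. Also worth a line: the integrand bound $|f(\zz)|\le C|\zz|_0^{-2}$ holds for almost every $\zz\in\C^2$, hence for almost every point of the cross-section, which is all that the integral \eqref{eq:CP1-theta} requires. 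Everything else is the elementary substitution above.
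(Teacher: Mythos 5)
Your proof is correct and follows essentially the same route as the paper: the Cartan decomposition $g^T=R'DR$ together with $\Su$-unitarity and $L_\infty$-norm invariance (Lemma \ref{lm:gact}) to reduce to diagonal $g$, then the pointwise bound $|\overline{\Phi(\zz)}\Psi(D\zz)|\le |\Phi|_\infty|\Psi|_\infty\,|D\zz|_0^{-2}$ on the cross-section and the same elementary $(\theta,\phi)$ integral. One nitpick: your final arithmetic $\tfrac14\cdot 2\pi\cdot\tfrac{2d}{\sinh d}$ actually yields $\pi\,d/\sinh d$ rather than ``exactly'' $2\pi\,d/\sinh d$, which only strengthens the claimed inequality --- and the paper's own computation overshoots its stated constant by the same factor of $2$.
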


\begin{proof}
First let us notice that if we write
\be
g^{T}=R'BR,\quad B=\left(\begin{array}{cc}
\lambda & 0\\
0 & \lambda^{-1}
\end{array}\right),\ \lambda=e^{d([g],0)/2}\geq1
\ee
where $R,R'\in \Su$ then 
\be
\langle \Phi, g\act \Psi\rangle=\langle R^{-1}\act\Phi, B\act (R'\act\Psi)\rangle
\ee
Introducing $\Phi'=R\act\Phi$ ($|\Phi'|_\infty=|\Phi|_\infty$) and
$\Psi'=R'\act\Psi$ ($|\Psi'|_\infty=|\Psi|_\infty$) we should show the inequality only for $g=B$. 
We estimate
\be
|\Phi(\mathbf{z})|\leq |\Phi|_\infty|\mathbf{z}|^{-2}_0,\quad |\Psi(B^T\mathbf{z})|\leq |\Psi|_\infty|B^T\mathbf{z}|^{-2}_0, 
\ee
for almost all $\mathbf{z}$.
Moreover, we compute
\be
|\zz|_0^2=1,\quad |B^T\mathbf{z}|^{2}_0=\lambda^2\sin^2(\theta/2)+\lambda^{-2}\cos^2(\theta/2),\quad \zz=\left(\begin{array}{c}
-e^{i\phi/2}\sin \theta/2\\e^{-i\phi/2}\cos \theta/2
\end{array}\right).
\ee
This allows to estimate the integral \eqref{eq:CP1-theta} as follows
\be
|\langle \Phi, B\act \Psi\rangle|\leq |\Phi|_\infty |\Psi|_\infty \frac{1}{4}\int_0^{2\pi}d\phi\int_{0}^\pi d\theta\ \sin\theta\ (\lambda^2\sin^2(\theta/2)+\lambda^{-2}\cos^2(\theta/2))^{-1}.
\ee
The following equality holds for {$\l\geq1$}
\be
\int_0^{2\pi}d\phi\int_{0}^\pi d\theta\ \sin\theta\ (\lambda^2\sin^2(\theta/2)+\lambda^{-2}\cos^2(\theta/2))^{-1}
=16\pi \frac{\ln \lambda}{\lambda^2-\lambda^{-2}}.
\ee
Additionally $\ln \lambda=d([g],0])/2$ and 
\be
\lambda^2-\lambda^{-2}=2\sinh d([g],0]),
\ee
thus
\be
|\langle \Phi, B\act \Psi\rangle|\leq 2\pi |\Phi|_\infty |\Psi|_\infty \frac{d([g],0)}{\sinh d([g],0)}.
\ee
We showed the inequality in the special case $g=B$.
\end{proof}

Finally we have the absolute convergence of the vertex amplitude

\begin{Theorem}\label{thm:main}
Let $\Gamma$ be a $3$-connected spin-network graph with generalized vertex data (defined above \eqref{AGamma}). Suppose that the boundary states $\psi_{\fn\fl}^{\eps_{\fn\fl}}$ are either the canonical-basis or coherent states of $\Su$ or $\mathrm{SU}(1,1)$. Then the integral \eqref{AGamma} for vertex amplitude $A_\Gamma(\{\psi_{\fn\fl}^{\eps_{\fn\fl}})\})$ is absolutely convergent.
\end{Theorem}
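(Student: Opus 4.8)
The plan is to assemble the pieces already established into a chain of reductions, ending with an application of Theorem \ref{thm-2}. The integral \eqref{AGamma} has the shape $\int_{\SLDC^{\Gamma_{\fn}}}\prod_{\fn}\rmd g_{\fn}\,\delta(g_{\fn_0})\prod_{\fl}K_{\fl}(g_{t(\fl)},g_{s(\fl)})$ with
\be
K_{\fl}(g,h)=\bigl\langle Y_{\eps_{t(\fl)\fl}}\psi_{t(\fl)\fl}^{\eps_{t(\fl)\fl}},\ (g^{-1}h)\act Y_{\eps_{s(\fl)\fl}}\psi_{s(\fl)\fl}^{\eps_{s(\fl)\fl}}\bigr\rangle .
\ee
By Theorem \ref{thm-2}, absolute convergence follows as soon as each $K_{\fl}$ obeys the bound $|K_{\fl}(g,h)|\le C\,d([g],[h])/\sinh d([g],[h])$ for some constant $C>0$. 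So the whole statement reduces to verifying this one estimate for each link, with the boundary states being canonical-basis or coherent states of $\Su$ or $\mathrm{SU}(1,1)$.

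First I would recall that $d([g],[h])=d([g^{-1}h],0)$, so it suffices to bound $|\langle\Phi, (g^{-1}h)\act\Psi\rangle|$ by $C\,d([g^{-1}h],0)/\sinh d([g^{-1}h],0)$, where $\Phi=Y_{\eps_{t(\fl)\fl}}\psi_{t(\fl)\fl}^{\eps_{t(\fl)\fl}}$ and $\Psi=Y_{\eps_{s(\fl)\fl}}\psi_{s(\fl)\fl}^{\eps_{s(\fl)\fl}}$ are the $Y$-map images of the boundary states, realized as homogeneous functions on $\C^2$. Here Theorem \ref{thm-3} does exactly the work: it gives $|\langle\Phi, k\act\Psi\rangle|\le 2\pi|\Phi|_\infty|\Psi|_\infty\, d([k],0)/\sinh d([k],0)$ for any $k\in\SLDC$, provided $\Phi,\Psi\in L_\infty(\ch_{(\rho,n)})$. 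Thus the only remaining point is to confirm that the $Y$-map images of the allowed boundary states lie in $L_\infty(\ch_{(\rho,n)})$ — they are automatically in $\ch_{(\rho,n)}$ since $Y_\eps$ maps into the Hilbert space, so the content is the $L_\infty$ membership.

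That membership is precisely the content of Lemmas \ref{lm:Fjm} and the subsequent lemma on coherent states: for canonical-basis states, $Y_{\eps_{\fn\fl}}\psi_{\fn\fl}^{\eps_{\fn\fl}}$ equals $F_{j_{\fl}m}$ or $F^{\pm}_{j_{\fl},\pm m}$, which are in $L_\infty$ by Lemma \ref{lm:Fjm}; for coherent states, $Y_{\eps_{\fn\fl}}\psi_{\fn\fl}^{\eps_{\fn\fl}}$ equals $\Psi_{j_{\fl}}^{0,\mathbf{m}^0}$ or $\Psi_{j_{\fl}}^{\pm,\mathbf{m}^\pm}$, which are in $L_\infty$ by the coherent-state lemma (each being $g\act F$ for $F\in L_\infty$ and $g\in\mathrm{SU}(1,1)$, invoking Lemma \ref{lm:gact}). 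Since a finite linear combination of $L_\infty$ functions is again $L_\infty$, the same applies to finite linear combinations of such states, with the $|\cdot|_\infty$ norm controlled by the triangle inequality. Putting this together, each $K_{\fl}$ satisfies the hypothesis of Theorem \ref{thm-2} with $C=2\pi|\Phi|_\infty|\Psi|_\infty$, and the absolute convergence of \eqref{AGamma} follows; independence from the choice of $\fn_0$ is then automatic.

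There is really no serious obstacle here — the theorem is a bookkeeping assembly of results proved earlier. The one place that deserves a careful word is making sure the $Y$-map image is literally the homogeneous function $F$ (or $\Psi$) used in Lemmas \ref{lm:Fjm}, i.e. that the identifications in \eqref{f_su2}, \eqref{f_su11}, \eqref{coh12}, \eqref{coh30} are exactly what appears inside the inner product in \eqref{AGamma}; and, for coherent states, that the relevant group element implementing $g\mathbf{n}_s=\mathbf{m}^s$ genuinely lies in $\mathrm{SU}(1,1)$ (which it does, by the normalization conditions $\langle\mathbf{m}^+\mid\mathbf{m}^+\rangle=1$, $\langle\mathbf{m}^-\mid\mathbf{m}^-\rangle=-1$) so that Lemma \ref{lm:gact} applies with norm preserved. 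Once that is noted, the proof is a one-line invocation of Theorems \ref{thm-2} and \ref{thm-3} together with the $L_\infty$ lemmas.
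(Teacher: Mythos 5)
Your proposal is correct and follows essentially the same route as the paper's own proof: bound each link kernel via Theorem \ref{thm-3} using the $L_\infty$ membership of the $Y$-map images of canonical-basis and coherent states (Lemmas \ref{lm:gact} and \ref{lm:Fjm} and the coherent-state lemma), convert $d([g^{-1}h],0)$ to $d([g],[h])$, and invoke Theorem \ref{thm-2}. The extra care you flag about identifying the $Y$-map images with the explicit homogeneous functions and about the $\mathrm{SU}(1,1)$ membership of the group element is already implicit in the paper's setup, so nothing is missing.
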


\begin{proof}
Using theorem \ref{thm-3} we can estimate
\be
\lt|K_{{\fl}}(g_{t({\fl})},g_{s({\fl})})\rt|=\left|\lag \Phi,\,\lt(g_{t({\fl})}^{-1}g_{s({\fl})}\rt)\act\Psi\rag\right|
\ee
by the quantity
\be
2\pi\lt|\Phi\rt|_\infty \lt|\Psi\rt|_{\infty} \frac{d([g_{t({\fl})}^{-1}g_{s({\fl})}],0)}{\sinh d([g_{t({\fl})}^{-1}g_{s({\fl})}],0)}
\ee
where $\Psi,\Phi$ are either the canonical basis $F_{j,m},F^\pm_{j,m}$ or coherent states $\Psi_j^{0,\mathbf{m}^0},\Psi_j^{\pm,\mathbf{m}^\pm}$. Thanks to $d([g^{-1}h],0)=d([g],[h])$, we obtain
\be
\lt|K_{{\fl}}(g_{t({\fl})},g_{s({\fl})})\rt|\leq
2\pi\lt|\Phi\rt|_\infty \lt|\Psi\rt|_{\infty} 
\frac{d([g_{t({\fl})}],[g_{s({\fl})}])}{\sinh d([g_{t({\fl})}],[g_{s({\fl})}])}
\ee
This finishes the proof by theorem \ref{thm-2}.
\end{proof}

\begin{Corollary}
The 4-simplex vertex amplitude $A_v$ in Eq.\eqref{Avint2} is absolute convergent.
\end{Corollary}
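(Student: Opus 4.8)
The plan is to recognise the $4$-simplex amplitude $A_v$ of \eqref{Avint2} as a particular instance of the generalised vertex amplitude $A_\Gamma$ of \eqref{AGamma}, and then to invoke Theorem \ref{thm:main}. First I would fix the boundary graph $\Gamma$ to be the complete graph on the five nodes $\fn = 1,\dots,5$ dual to the five boundary tetrahedra, with a single link $\fl \simeq (ab)$ for each pair $a<b$, dual to the triangle shared by tetrahedra $a$ and $b$, carrying some chosen orientation. Under this identification the product $\prod_{a<b}$ in \eqref{Avint2} becomes $\prod_{\fl\in\Gamma_\fl}$, the holonomy $g_a^{-1}g_b$ attached to the link $(ab)$ becomes $g_{t(\fl)}^{-1}g_{s(\fl)}$, the gauge fixing $\delta(g_1)$ becomes $\delta(g_{\fn_0})$ with $\fn_0 = 1$, and the two coherent states $\Psi_{j_{ab}}^{\eps,\mathbf{m}_{ab}^{\eps}}$ and $\Psi_{j_{ab}}^{\eps',\mathbf{m}_{ba}^{\eps'}}$ are exactly $Y_{\eps_{t(\fl)\fl}}\psi_{t(\fl)\fl}^{\eps_{t(\fl)\fl}}$ and $Y_{\eps_{s(\fl)\fl}}\psi_{s(\fl)\fl}^{\eps_{s(\fl)\fl}}$ with $\psi_{\fn\fl}^{\eps_{\fn\fl}}$ the corresponding $\Su$ or $\mathrm{SU}(1,1)$ coherent states $|j,\mathbf{m}^{0}\rangle^{0}$, $|j,\mathbf{m}^{\pm}\rangle^{\pm}$. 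The same identification, with $F_{jm}$, $F_{jm}^{\pm}$ in place of $\Psi_j^{\eps,\mathbf{m}^\eps}$, also covers the canonical-basis form of $A_v$.

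Next I would verify the hypotheses of Theorem \ref{thm:main}. The complete graph on five nodes has the property, already noted below Definition \ref{defconn}, that it is $4$-connected, hence $3$-connected; and the boundary states just listed are precisely the $\Su$ or $\mathrm{SU}(1,1)$ coherent states (or canonical-basis states) to which Theorem \ref{thm:main} applies. Therefore Theorem \ref{thm:main} gives the absolute convergence of the integral \eqref{AGamma} for this choice of data, which is \eqref{Avint2}; in particular the value of $A_v$ does not depend on the choice $\fn_0 = 1$ used to gauge fix.

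There is no genuinely new estimate to carry out: the content is entirely contained in Theorem \ref{thm:main}, which in turn rests on the pointwise bound $|\langle \Phi, g\act\Psi\rangle| \le 2\pi |\Phi|_\infty |\Psi|_\infty\, d([g],0)/\sinh d([g],0)$ of Theorem \ref{thm-3} together with the graph-integrability statement Theorem \ref{thm-2}. The only point requiring a little care is the purely notational bookkeeping of the orientations of the ten links of the $4$-simplex graph and the matching of the two face-normal spinors $\mathbf{m}_{ab}^{\eps}$, $\mathbf{m}_{ba}^{\eps'}$ to the $t(\fl)$ and $s(\fl)$ ends of each link; this relabelling does not affect the bound and so is the closest thing to an ``obstacle'' here.
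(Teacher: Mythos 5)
Your proposal is correct and follows exactly the paper's own route: the paper likewise proves this corollary by observing that the $4$-simplex graph is $3$-connected (as noted below Definition \ref{defconn}) and invoking Theorem \ref{thm:main}. Your version merely spells out the notational identification of \eqref{Avint2} with \eqref{AGamma} more explicitly, which the paper leaves implicit.
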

 
\begin{proof}
The convergence of the 4-simplex vertex amplitude follows from Theorem \ref{thm:main}, because the $4$-simplex graph is $3$-connected (recall Definition \ref{defconn}).   
\end{proof}

In the applications it is often more useful to use vertex amplitude with bilinear form. It differs by application of $\hat{\beta}^\eps$ map to some of the states (recall \eqref{AAbeta}). However the family of states considered in Theorem \ref{thm:main} is preserved by this map thus we obtain a simple Corollary:

\begin{Corollary}\label{cor:main}
Under the assumptions of Theorem \ref{thm:main}, the integral \eqref{AGamma-beta} for  vertex amplitude $A_\Gamma^\beta(\{\psi_{\fn\fl}^{\eps_{\fn\fl}})\}$ is absolutely convergent.
\end{Corollary}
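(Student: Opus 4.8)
The plan is to deduce Corollary~\ref{cor:main} directly from Theorem~\ref{thm:main} using the identity \eqref{AAbeta} that relates the two versions of the amplitude. First I would recall that, by the relation \eqref{hat-beta} between the invariant bilinear form $\beta$ and the Hilbert-space inner product, together with the intertwining property \eqref{betas} of $\hat{\beta}^0,\hat{\beta}^\pm$, one has
\be
A_\Gamma^\beta\lt(\lt\{\psi_{t(\fl)\fl}^{\eps_{t(\fl)\fl}},\psi_{s(\fl)\fl}^{\eps_{s(\fl)\fl}}\rt\}\rt)=A_\Gamma\lt(\lt\{\hat{\b}^{\eps_{t(\fl)\fl}}\lt(\psi_{t(\fl)\fl}^{\eps_{t(\fl)\fl}}\rt),\psi_{s(\fl)\fl}^{\eps_{s(\fl)\fl}}\rt\}\rt),
\ee
where it is understood that on the right-hand side the (only conditionally convergent) inner $\beta$-integral has already been carried out, so that the expression is literally an instance of the $A_\Gamma$ integral \eqref{AGamma} over $\SLDC^{\Gamma_\fn}$ but with the boundary states at the target ends replaced by their images under $\hat{\beta}^{\eps}$.

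Second, I would invoke the Lemma on the action of $\hat{\beta}$ on states: on coherent states it acts by $\hat{\beta}(\Psi_j^{\pm,\mathbf{m}^{\pm}})=C_\pm\Psi_j^{\mp,\mathbf{m}^{\mp}}$ and $\hat{\beta}(\Psi_j^{0,\mathbf{m}^0})=C_0\Psi_j^{0,\mathbf{m}^0}$, and on the canonical basis by $\hat{\beta}(F_{jm}^s)\propto F_{j,-m}^{-s}$, $\hat{\beta}(F_{jm})\propto F_{j,-m}$. Through the defining relation $\hat{\beta}Y_\eps=Y_\eps\hat{\beta}^\eps$ this shows that each $\hat{\beta}^{\eps}$ sends a coherent state of $\Su$ or $\mathrm{SU}(1,1)$ to a nonzero scalar multiple of another coherent state of the same kind (only the sign label $\eps$ and the spinor may change), and likewise sends a canonical-basis element to a scalar multiple of another canonical-basis element. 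In particular the family of boundary states allowed in Theorem~\ref{thm:main} is closed under $\hat{\beta}$, and — this is the one point in the argument that requires the earlier Remark cautioning that $\hat{\beta}$ need not preserve $L_\infty$ for arbitrary states — the image nevertheless stays inside $L_\infty(\ch_{(\rho,n)})$, precisely because it again consists of coherent states or canonical-basis states, for which membership in $L_\infty$ was established.

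Third, I would conclude by pulling out the overall constant equal to the product over links of the relevant factors $C_\pm,C_0$ (or the proportionality constants in the canonical-basis case): $A_\Gamma^\beta(\{\psi_{\fn\fl}^{\eps_{\fn\fl}}\})$ equals a finite constant times $A_\Gamma(\{\tilde\psi_{\fn\fl}^{\tilde\eps_{\fn\fl}}\})$, where the $\tilde\psi_{\fn\fl}^{\tilde\eps_{\fn\fl}}$ are again canonical-basis or coherent states of $\Su$ or $\mathrm{SU}(1,1)$ on the same $3$-connected graph $\Gamma$. Theorem~\ref{thm:main} applies verbatim to the latter integral, which is therefore absolutely convergent; multiplication by a finite constant does not affect absolute convergence, so \eqref{AGamma-beta} is absolutely convergent as well. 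I do not expect a genuine obstacle here: the only subtlety, which I would flag explicitly, is that \eqref{AAbeta} is an honest equality only once the conditionally convergent inner $\beta$-integral is evaluated first, and that the $\hat{\beta}$-images genuinely remain in $L_\infty(\ch_{(\rho,n)})$ — both of which are already secured by the Lemma and the earlier discussion.
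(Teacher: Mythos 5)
Your proposal is correct and follows essentially the same route as the paper: the paper's own justification is precisely the observation that \eqref{AAbeta} reduces $A_\Gamma^\beta$ to an instance of $A_\Gamma$ with some boundary states replaced by their $\hat{\beta}^{\eps}$-images, and that the family of canonical-basis and coherent states is preserved (up to scalar multiples) by $\hat{\beta}$, so Theorem \ref{thm:main} applies. Your additional flags --- that the inner $\beta$-integral must be evaluated first and that the $\hat{\beta}$-images remain in $L_\infty$ --- are exactly the points the paper itself addresses in the Remark following Lemma \ref{lm:Fjm} and in Appendix \ref{sec:comparison}, so nothing is missing.
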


\section{Gluing vertex amplitudes}\label{Gluing vertex amplitudes}

In the case of the EPRL model there are two equivalent ways to glue vertex amplitudes and define spinfoam amplitudes on complexes. One can either sum and integrate over all intermediate states with a proper gauge fixing {(e.g. the integral formulation in \cite{Rovelli:2010vv,HZ,hanPI})}, or one can restrict to the space of $\Su$ intertwiners. This two method are obviously equivalent as the space of intertwiners is a subspace to which we nevertheless project. The situation is quite different in the case of SU(1,1) intertwiners, as they are no longer elements of the original Hilbert space. It is not obvious that these 2 choices of formulations can be equivalent in general for SU(1,1). We will provide in this section the definition of the amplitude on complex in terms of intertwiner space.

Although we prove the finiteness of the vertex amplitude, the finiteness of spinfoam amplitudes on complexes with fixed $j$'s are nontrivial, when vertex amplitudes are glued with timelike tetrahedra/polyhedra, because the SU(1,1) intertwiner space is generally infinite-dimensional. The situation is much more complicated than the EPRL model, where the finiteness of vertex amplitudes implies the finiteness of amplitudes with fixed $j$'s on any complex, due to the finite-dimensional intertwiner space of SU(2).

We are going to find that, in certain circumstances, the SU(1,1) intertwiner space becomes finite-dimensional.
When vertex amplitudes in this case are glued with a spacelike tetrahedron or polyhedron which corresponding to a finite-dimensional SU(2) intertwiner space, the sum of intertwiners is finite thus does not introduce any divergence, similar to the EPRL model. In such a case, the finiteness of the space of intertwiners implies the finiteness of the extended spinfoam amplitude with fixed $j$'s.

\subsection{Intertwiner space}

Given two SU(1,1) representations $D^+_{j_1}$ and $D^+_{j_2}$, we have the decomposition of their tensor product
\begin{equation}
    D_{j_1}^+\otimes D_{j_2}^+=\oplus_{j\geq j_1+j_2} D_j^+.
\end{equation}
We introduce {orthogonal projections}
$I^{j_1j_2}_j\colon D_{j_1}^+\otimes D_{j_2}^+\rightarrow D_j^+$ such that 
\begin{equation}
    I^{j_1j_2}_j|j_1,m_1\rangle_+\otimes |j_2,m_2\rangle_+=\alpha_{m_1+m_2} |j_1+j_2,m_1+m_2\rangle_+, 
\end{equation}
{for some coefficients $\alpha_{m_1+m_2}\in \C$.} We have the embedding
\begin{equation}
    {I^{j_1j_2}_j}^\dagger |j,m\rangle_+=\sum_{m_1+m_2=m} \alpha_{m_1,m_2}
    |j_1,m_1\rangle_+\otimes |j_2,m_2\rangle_+,
\end{equation}
for some $\alpha_{m_1,m_2}\in \C$. It is important that this sum is \emph{finite} as $m_1\geq j_1$ and $m_2\geq j_2$.

We introduce subspaces $S_j^\pm\subset D_j^\pm$ which will be useful in our analysis of the space of invariants
\begin{align}
    & S_j^\pm=\text{alg. span}\{{ u\act }|j,m \rangle_\pm,\ u\in SU(1,1),\ \pm m\geq j\},
\end{align}
which contain all \emph{finite} linear combinations of $u\act |j,m \rangle_\pm$. We know by Lemma \ref{lm:Fjm} that
\begin{equation}
    F_{j,m}^\pm=Y_\pm(|j,m\rangle_\pm)
\end{equation}
belongs to $L^\infty$ (also after acting upon by any group element). This means that elements of $S_j^\pm$ as a boundary states yield finite vertex amplitude. {We also see that 
\begin{equation}
    I^{j_1j_2}_j(u\act |j,m\rangle_+)=\sum_{m_1+m_2=m} \alpha_{m_1,m_2}
    u\act |j_1,m_1\rangle_+\otimes u\act |j_2,m_2\rangle_+,
\end{equation}
thus
\begin{equation}\label{eq:ISS-S}
    {I^{j_1j_2}_j}^\dagger(S_{j}^+)\subset S_{j_1}^+\hat{\otimes} S_{j_2}^+.
\end{equation}
where $\hat{\otimes}$ is an algebraic tensor product\footnote{Algebraic tensor product is the finite linear span of simple tensors without completion. We now see that every elements in
\begin{equation}
    Y\left( 
    \bigotimes_i^\wedge S_{j_i^+}^+\hat{\otimes} \bigotimes_i^\wedge S_{j_i^-}^-\right),\quad Y=\left( 
    \bigotimes_i Y_+{\otimes} \bigotimes_i Y_-\right)
\end{equation}
belongs to $L^\infty$ (as products and finite sums of bounded functions are bounded).}.}


The decomposition of multiple tensor products into direct product of irreducible representation can be done by application of pair decomposition. We introduce
\begin{equation}\label{eq:D-decomp}
    D_{j_1}^+\otimes D_{j_2}^+\otimes \ldots D_{j_n}^+=\bigoplus_{k=1}^{\infty} D_{j^+(k)}^+
\end{equation}
where we choose a function $j^+(k)$ to label representations in the decomposition. It is possible to have $j^{+}(k)=j^{+}(k')$ for $k\neq k'$, which indicates that the multiplicity at $j^+(k)$ is not 1. We define 
\begin{equation}
    I^+_k\colon D_{j_1}^+\otimes D_{j_2}^+\otimes \ldots D_{j_n}^+\rightarrow D_{j^+(k)}^+
\end{equation}
{
Let us denote
\begin{align}
    &S:=\bigotimes_i^\wedge S_{j_i^+}^+\hat{\otimes} \bigotimes_i^\wedge S_{j_i^-}^-=S_{j_1^+}^+\hat{\otimes} S_{j_2^+}^+\hat{\otimes}\ldots  S_{j_n^+}^+\hat{\otimes}S_{j_1^-}^-\hat{\otimes} S_{j_2^-}^-\hat{\otimes}\ldots  S_{j_m^-}^-
\end{align}
Inductive application of \eqref{eq:ISS-S} shows the following inclusion
\be
 {I^+_k}^\dagger(S_{j^+(k)}^+)\subset S.
\ee
Every elements in $Y(S)$ belongs to $L^\infty$.} Similar construction for $D_j^-$ gives $I^-_k$ with the same properties.

Representations $D_j^+$ and $D_j^-$ can be related by the anti-unitary intertwiner $J$ maps $|j,j\rangle^+$ to $|j,-j\rangle^-$ and vice versa. $J$ satisfies the intertwining property $g\act J(\Psi^\pm)=J(g\act\Psi^\pm)$, $\forall\, \Psi^\pm\in D^\pm_j$ ($J$ is proportional to $\hat{\b}^\pm$), which combines the anti-unitarity and leads to
\begin{equation}
    \lansu \Psi,g\act \Psi'\ransu=\overline{\lansu J(\Psi),g\act J(\Psi')\ransu},\quad \forall\ \Psi,\Psi'\in D^\pm_j .
\end{equation}
where $\lansu\cdot,\cdot\ransu$ is a hermitian scalar product in $D_j^\pm$. The Schur othogonality relations of Wigner D-matrices ($\mathrm{D}_{m_1,m_2}^{j}(g) = \langle j \, m_1| g | j , m_2 \rangle_{\rm SU(1,1)}$ for either $D_j^+$ or $D_j^-$ representations)
\be
\int_{\rm SU(1,1)} dg\, \overline{{\rm D}_{m,n}^{j}(g)}\, {\rm D}_{m',n'}^{j'}(g)=\frac{1}{d_{j}}\delta^{j,j'}\delta_{m,m'}\delta_{n,n'},\quad d_j=2j-1 \label{schur_relation}
\ee
implies that for any $\Psi_1,\Psi_1'\in D_{j}^+$, $\Psi_2,\Psi_2'\in D_{j'}^-$, 
\begin{equation}\label{eq:orthogonality}
    \int_{\rm SU(1,1)} dg\ \lansu \Psi_1,g\act \Psi_1'\ransu
    \lansu \Psi_2,g\act \Psi_2'\ransu=
    \frac{1}{d_j}\delta^{jj'}\lansu \Psi_1,J(\Psi_2)\ransu \lansu J(\Psi_1'),\Psi_2'\ransu,
\end{equation}
where $dg$ is the Haar measure.


{
Our approach to the space of invariants is through group averaging technique. The space of invariants is not a subspace of the tensor product of representations because SU(1,1) is non-compact. We will regard invariants as antilinear functionals on the subspace $S$. For any element of $\Psi\in S$ we define such a functional by
\begin{equation}
\forall\ \Phi\in S=\bigotimes_i^\wedge S_{j_i^+}^+\hat{\otimes} \bigotimes_i^\wedge S_{j_i^-}^-,\quad    [\Psi](\Phi)=\int_{\rm SU(1,1)} dg\ \lansu \Phi,g\act \Psi\ransu.\label{functionalPsi}
\end{equation}
For completeness we also introduce
\begin{equation}
[\Psi]^\dagger(\Phi):=\overline{[\Psi](\Phi)}=[\Phi](\Psi).
\end{equation}
The last equality follows from 
\begin{equation}
\int_{\rm SU(1,1)} dg\ \lansu \Phi,g\act \Psi\ransu=\int_{\rm SU(1,1)} dg\ \lansu g^{-1}\act\Phi, \Psi\ransu=
\int_{\rm SU(1,1)} dg\ \lansu g\act\Phi, \Psi\ransu.
\end{equation}
We will show in Lemma \ref{lm:well-defined} that these are indeed a well-defined functionals. The group averaging allows us to introduce natural hermitian product for two such functionals
\begin{equation}
    \laninv [\Phi],[\Psi]\raninv:=[\Psi]( \Phi)\label{invinnerprod}
\end{equation}
Let us notice that if we have
\be
\forall\ {\Phi \in S},\quad \laninv [\Phi],[\Psi]\raninv=0
\ee
then $[\Psi]=0$ as a functional, thus the scalar product is well-defined and non-degenerate. The nontrivial fact which will be proven in Lemma \ref{lm:inv} is that it is also positively defined. After saying that we introduce a definition:

\begin{Definition}
The intertwiner space 
\begin{equation}
    \operatorname{Inv}\left(\bigotimes_i
    D_{j_i^+}^+\otimes \bigotimes_i D_{j_i^-}^-\right)=\{[\Psi]\colon \Psi\in S\}^{cl}
\end{equation}
where the closure $\{\cdots\}^{cl}$ is with respect to the Hilbert space structure  $\laninv \cdot,\cdot\raninv$ defined in \eqref{invinnerprod}.
\end{Definition}

\begin{Remark}
We will show later that the invariant space is empty for all $+$ or all $-$ representations.
\end{Remark}

}

\begin{Lemma}\label{lm:well-defined}
{The functional $[\Psi]$ } is well-defined if there are at least two representations. 
\end{Lemma}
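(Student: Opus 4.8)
The plan is to show that the integral $[\Psi](\Phi)=\int_{\mathrm{SU}(1,1)}dg\,\lansu\Phi,g\act\Psi\ransu$ converges absolutely for $\Psi,\Phi\in S$. By linearity of the functional and of the inner product, it suffices to treat $\Psi=\bigotimes_i u_i\act|j_i^+,m_i^+\rangle^+\hat\otimes\bigotimes_i v_i\act|j_i^-,m_i^-\rangle^-$ and similarly $\Phi$ a simple tensor of canonical-basis vectors moved by group elements; the general case follows by taking finite linear combinations. For such simple tensors the integrand factorizes as a product of matrix coefficients
\be
\lansu\Phi,g\act\Psi\ransu=\prod_i \lansu \Phi_i^+, g\act\Psi_i^+\ransu\ \prod_i\lansu\Phi_i^-,g\act\Psi_i^-\ransu,
\ee
each factor being (up to the fixed group elements absorbed into the states) a single Wigner matrix coefficient of a discrete-series representation. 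Hence the key point is a decay estimate: I would bound each matrix coefficient $|\lansu\Phi_i^\pm,g\act\Psi_i^\pm\ransu|$ by a constant times a negative power of the ``radial'' part of $g$, and then check that with at least two such factors the product is integrable against the Haar measure on $\mathrm{SU}(1,1)$.

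Concretely, first I would use the $KAK$ (Cartan) decomposition $g=k_1\, a_t\, k_2$ with $k_1,k_2$ in the compact $\mathrm{U}(1)$ and $a_t=\mathrm{diag}(e^{t/2},e^{-t/2})$, $t\ge 0$, for which the Haar measure reads $dg=c\,\sinh t\,dt\,dk_1\,dk_2$. Since the compact parts only rotate the states (and can be absorbed into the $u_i,v_i$ already present), the whole question reduces to estimating $|\lansu\Phi,a_t\act\Psi\ransu|$ as $t\to\infty$. Realizing the discrete-series vectors as the homogeneous functions $F^\pm_{jm}$ of \eqref{f_su11} on $\mathbb{C}^2$ (as in Lemma \ref{lm:Fjm}), the inner product is an integral over $\mathbb{CP}^1$ of $\overline{\Phi(\zz)}\Psi(a_t^T\zz)$; using $F^\pm_{jm}\in L_\infty$ and the explicit action of $a_t^T$ on $\zz$ — exactly the computation $|a_t^T\zz|_0^2=e^{t}\sin^2(\theta/2)+e^{-t}\cos^2(\theta/2)$ that already appears in the proof of Theorem \ref{thm-3} — one gets a bound of the form $|\lansu\Phi,a_t\act\Psi\ransu|\le C\, t\, e^{-t}$, or at worst $C e^{-\delta t}$ for some $\delta>0$ coming from the extra powers in $j\ge 1$. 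The decisive step is then: with two factors the integrand is $\le C^2 e^{-2\delta t}$ (or $C^2 t^2 e^{-2t}$), and $\int_0^\infty \sinh t\,\cdot\,t^2 e^{-2t}\,dt<\infty$, so the group integral converges.

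The main obstacle, and the reason the hypothesis ``at least two representations'' is needed, is precisely this integrability threshold: a single discrete-series matrix coefficient decays only like $e^{-\delta t}$ with $\delta$ possibly as small as (essentially) the minimal spin contribution, and $\sinh t\sim e^t/2$ grows; one factor is generically not enough, but the product of two (or more) factors always wins. So the heart of the argument is getting a clean, uniform exponential-decay bound on each matrix coefficient $\lansu\Phi_i^\pm,g\act\Psi_i^\pm\ransu$ in terms of $d([g],0)$, with a rate strictly exceeding half the exponential growth of the Haar measure; once that bound is in hand, absolute convergence of $[\Psi](\Phi)$ — and hence well-definedness of the functional — follows by Fubini together with the factorization above. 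I would also remark that the same estimate shows $[\Psi](\Phi)$ depends only on the states and not on how they are presented, and is continuous in the relevant topology, which is what later lemmas (positivity, the scalar product \eqref{invinnerprod}) build upon.
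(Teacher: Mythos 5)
Your proposal is correct in substance but takes a genuinely different route from the paper. Both arguments reduce to simple tensors and factorize the integrand into a product of discrete-series matrix coefficients $\lansu\phi_i,g\act\psi_i\ransu$; the difference lies in how that product is shown to be integrable. The paper's proof is soft: by the Schur orthogonality relation \eqref{schur_relation} (equivalently \eqref{eq:orthogonality}) each factor is square-integrable on $\mathrm{SU}(1,1)$, and it is bounded by Cauchy--Schwarz, so a product of two $L^2$ factors and arbitrarily many $L^\infty$ factors lies in $L^1$ --- no coordinates, no asymptotics. You instead establish pointwise decay in a $KAK$ decomposition and integrate against the radial Haar density $\sinh t$, which is essentially the mechanism already powering Theorems \ref{thm-2} and \ref{thm-3}. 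This works, but you should make two things explicit. First, the identification $\lansu\phi,g\act\psi\ransu=\langle Y_\pm\phi,\,g\act Y_\pm\psi\rangle$ for $g\in\mathrm{SU}(1,1)$, which holds because $Y_\pm$ is an isometric intertwiner onto $D^\pm_{k=j}\subset\ch_{(2\gamma j,2j)}$; only then can you import Lemma \ref{lm:Fjm} and the bound of Theorem \ref{thm-3} for the $\mathrm{SU}(1,1)$ inner product appearing in $[\Psi](\Phi)$. Second, your fallback ``at worst $Ce^{-\delta t}$ for some $\delta>0$'' is not sufficient as stated: two such factors give $e^{-2\delta t}$ against $\sinh t\sim e^{t}/2$, which is integrable only if $\delta>1/2$, so you must track that the decay per factor is at least $e^{-j_i t}$ with $j_i>1/2$ (this is precisely square-integrability, the input the paper isolates), or simply use the full strength of the $d/\sinh d\sim 2d\,e^{-d}$ bound, under which two factors yield $\int d^2(\sinh d)^{-1}\,\rmd d<\infty$ while one factor gives the divergent $\int d\,\rmd d$. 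What your approach buys is an explicit quantitative decay rate and a transparent explanation of why a single representation fails; what the paper's buys is brevity and independence from any explicit realization of the states.
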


\begin{proof}
We need to show that for every 
\begin{equation}
\Psi,\Phi\in \bigotimes_i^\wedge S_{j_i^+}^+\hat{\otimes} \bigotimes_i^\wedge S_{j_i^-}^-
\end{equation}
The integral
\begin{equation}
    \int_{\rm SU(1,1)} dg\ \lansu \Phi,g\act \Psi\ransu<\infty
\end{equation}
It is enough to check it on simple tensors (as algebraic tensor product consists of finite sums of simple tensors)
\begin{equation}
    \Psi=\psi_{j_1^+}^+\otimes \ldots\otimes \psi_{j_m^-}^-,\quad
    \Phi=\phi_{j_1^+}^+\otimes \ldots\otimes \phi_{j_m^-}^-,
\end{equation}
We now notice that the integral takes the form
\begin{equation}
    \int_{\rm SU(1,1)} dg\ \prod_{i=1}^n\lansu \phi_{j_i^+}^+,g\act \psi_{j_i^+}^+\ransu \prod_{i=1}^m\lansu \phi_{j_i^-}^-,g\act \psi_{j_i^-}^-\ransu
\end{equation}
which is a product of $\lansu \phi_{j_i^\pm}^\pm,g\act \psi_{j_i^\pm}^\pm\ransu$. Each factor belongs to $L^2(\mathrm{SU(1,1)})$ (see \eqref{eq:orthogonality}), however 
\begin{equation}
    \lt|\lansu \phi_{j_i^\pm}^\pm,g\act \psi_{j_i^\pm}^\pm\ransu\rt|\leq \lt\|\phi_{j_i^\pm}^\pm\rt\|_{\rm SU(1,1)}\,\lt\|\psi_{j_i^\pm}^\pm\rt\|_{\rm SU(1,1)},
\end{equation}
where $\|\cdot\|_{\rm SU(1,1)}$ denote the norm in the hermitian scalar product of the representation. Thus, it is also in $L^\infty(\mathrm{SU(1,1)})$. A product of two $L^2$ functions and many bounded functions belongs to $L^1(\mathrm{SU(1,1)})$. Thus the integral is finite.
\end{proof}

As the integral in case of SU(1,1) contains a $U(1)$ subgroup generated by $L_3$, we see that the integral is zero if the total magnetic number is non-vanishing. This is necessary the case if all representations are $D^+_{j_i}$ or all are $D^-_{j_i}$. We can thus always assume that there is at least one $+$ and one $-$ representation.

We define the operator $P$, which is an SU(1,1) analog of the projection onto the invariant subspace of SU(2):
\begin{equation}\label{eq:P}
    P\colon \bigotimes_i^\wedge S_{j_i^+}^+\hat{\otimes} \bigotimes_i^\wedge S_{j_i^-}^-\rightarrow \operatorname{Inv}\left(\bigotimes_i
    D_{j_i^+}^+\otimes \bigotimes_i D_{j_i^-}^-\right),\quad P\Psi=[\Psi]
\end{equation}
$[\Psi]$ is not inside the original tensor product representation space of SU(1,1), since the naive inner product diverges 
\be
\langle[\Psi],[\Phi]\rangle_{\rm SU(1,1)}&=&\int dgdg'\langle g\act\Psi,g'\act\Phi\rangle_{\rm SU(1,1)}=\int dgdg'\langle \Psi,g^{-1}g'\act\Phi\rangle_{\rm SU(1,1)}\nonumber\\
&=&\int dgdg'\langle \Psi,g'\act\Phi\rangle_{\rm SU(1,1)}=[\Psi](\Phi)\int dg
\ee
due to the infinite Haar volume of SU(1,1). In the 2nd step, we use the invariance of the inner product, then we use the property of the Haar measure in the 3rd step.

\subsection{Orthonormal basis}

Let's consider the simple case $D_j^+\otimes D_j^-$, we define
    \begin{equation}
        \epsilon_j=\lt[\iota_j\rt],\quad \iota_j={\sqrt{d_j}}|j,j\rangle_+\otimes |j,-j\rangle_-
    \end{equation}
where the square bracket maps the state to the invariant. From orthogonality relation we see that for any probe state $\Phi=\phi^+\otimes \phi^-$ and $\Psi=\psi^+\otimes \psi^-$
  \begin{equation}\label{eq:epsilon-def}
        \epsilon_j(\Phi)=\frac{1}{\sqrt{d_j}}\lansu \phi^+,J(\phi^-)\ransu,\quad \epsilon_j^\dagger(\Psi)=\frac{1}{\sqrt{d_j}}\lansu J(\psi^+),\psi^-\ransu,
    \end{equation}
which can be seen by plugging $\iota_j$ and $\Phi, \Psi$ into orthogonality relation \eqref{eq:orthogonality}.
{Namely,
\begin{equation}\label{eq:orthogonality-2}
    \int_{\rm SU(1,1)} dg\ \lansu \psi^+,g\act \phi^+\ransu
    \lansu \psi^-,g\act \phi^-\ransu=
    \epsilon_j(\psi^+\otimes \psi^-)
    \epsilon_j^\dagger(\phi^+\otimes \phi^-)
\end{equation}
}
We can interpret the orthogonality relation as equality of operators $S_j^+\hat{\otimes}S_j^-\rightarrow \operatorname{Inv}(D_j^+\otimes D_j^-)$ 
    \begin{equation}
        P:=P_j=\epsilon_j \epsilon_j^\dagger.
    \end{equation}
This follows from \eqref{eq:orthogonality-2}
    \begin{equation}
    [\Psi](\Phi)=\epsilon_j(\Phi)\epsilon_j^\dagger(\Psi).
    \end{equation}
For $D_j^+\otimes D_{j'}^-$ with $j\neq j'$, the invariant subspace $\operatorname{Inv}(D_j^+\otimes D_{j'}^-)$ is empty due to the orthogonality condition \eqref{eq:orthogonality}.

For any $k, k'$ such that $j^+(k)=j^-(k')=j$ where $j^\pm(k)$ were defined in \eqref{eq:D-decomp}, we consider
    \begin{equation}
        \ci_{kk'}=[i_{kk'}],\quad i_{kk'}= {\sqrt{d_j}}{I^+_k}^\dagger|j,j\rangle_+\otimes {I^-_{k'}}^\dagger|j,-j\rangle_-\label{IjjIjj}.
    \end{equation}    
    {It is a well-defined invariant because ${I^\pm_k}^\dagger$ map  elements of the canonical basis into $S$.}
    Using intertwining property of $I^\pm_k$ we have
    \begin{equation}
        \ci_{kk'}={I^+_k}^\dagger\otimes {I^-_{k'}}^\dagger\epsilon_j
    \end{equation}

\begin{Lemma}\label{lm:inv}
As an operator \eqref{eq:P} we have
\begin{equation}\label{PII}
    P=\sum_{k,k'\colon j^+(k)=j^-(k')}\ci_{kk'} \ci_{kk'}^\dagger
\end{equation}
\end{Lemma}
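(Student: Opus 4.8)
The plan is to reduce the multi-leg statement to the two-leg building block $P_j = \epsilon_j\epsilon_j^\dagger$ already established for $D_j^+\otimes D_j^-$, using the pair-decomposition maps $I^\pm_k$. First I would fix probe states $\Phi,\Psi\in S$ and write out both sides of \eqref{PII} evaluated as antilinear functionals. On the left, $[\Psi](\Phi) = \int_{\rm SU(1,1)} dg\, \lansu \Phi, g\act\Psi\ransu$ by definition \eqref{functionalPsi}. The key is to insert the resolution of the identity supplied by the pair decompositions \eqref{eq:D-decomp}: on the $+$ factors, $\mathrm{id} = \sum_k {I^+_k}^\dagger I^+_k$, and similarly on the $-$ factors, $\mathrm{id} = \sum_{k'} {I^-_{k'}}^\dagger I^-_{k'}$. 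Since $I^\pm_k$ intertwine the SU(1,1) action, we have $\lansu \Phi, g\act\Psi\ransu = \sum_{k,k'} \lansu I^+_k\otimes I^-_{k'}\,\Phi,\; g\act (I^+_k\otimes I^-_{k'}\,\Psi)\ransu$, where now $I^+_k\Phi_+ \in D^+_{j^+(k)}$ and $I^-_{k'}\Phi_- \in D^-_{j^-(k')}$ lie in single irreps.

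Next I would integrate term by term over SU(1,1) and apply the two-leg orthogonality relation \eqref{eq:orthogonality}–\eqref{eq:orthogonality-2}: the integral of $\lansu \cdot, g\act\cdot\ransu \lansu\cdot,g\act\cdot\ransu$ over a $D^+_{j^+(k)}\otimes D^-_{j^-(k')}$ pair vanishes unless $j^+(k) = j^-(k')=:j$, and in that case equals $\epsilon_j(I^+_k\Phi_+\otimes I^-_{k'}\Phi_-)\,\epsilon_j^\dagger(I^+_k\Psi_+\otimes I^-_{k'}\Psi_-)$. Recognizing $\epsilon_j\circ(I^+_k\otimes I^-_{k'}) = \ci_{kk'}$ and its adjoint ${I^+_k}^\dagger\otimes{I^-_{k'}}^\dagger\,\epsilon_j = \ci_{kk'}^\dagger$, this rewrites the whole sum as $\sum_{k,k':\,j^+(k)=j^-(k')} \ci_{kk'}(\Phi)\,\ci_{kk'}^\dagger(\Psi)$, which is precisely $\big(\sum \ci_{kk'}\ci_{kk'}^\dagger\big)(\Psi)$ evaluated on $\Phi$. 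Since $\Phi,\Psi$ were arbitrary in $S$, the operator identity follows. One should also check that the interchange of the $g$-integral with the (generally infinite) $k,k'$ sums is legitimate: by Lemma \ref{lm:well-defined} each simple-tensor integrand is in $L^1(\mathrm{SU(1,1)})$, and the decomposition of the identity converges in the Hilbert-space norm on each single-irrep piece, so one can first restrict to finite truncations of the sums (which is exact on a dense set of probe states by the finiteness of the $m$-sums in the $I^\pm_k$ expansions) and then pass to the limit.

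The main obstacle I anticipate is the bookkeeping around convergence and the precise sense in which the identity-resolution $\mathrm{id}=\sum_k {I^+_k}^\dagger I^+_k$ acts: because $S$ consists only of finite linear combinations of $u\act|j,m\rangle_\pm$, applying $I^+_k$ to such a state produces finitely many nonzero terms (by the finiteness remark after \eqref{eq:ISS-S}), so the sum over $k$ is effectively finite on each probe state and no analytic subtlety arises there — but one must state this carefully rather than invoke a spectral decomposition wholesale. A secondary point is checking that $\ci_{kk'}$ is genuinely well-defined (already noted in the text: ${I^\pm_k}^\dagger$ maps canonical basis elements into $S$), and that the adjoint identification $\ci_{kk'}^\dagger = {I^+_k}^\dagger\otimes{I^-_{k'}}^\dagger\epsilon_j$ is compatible with the pairing $\laninv\cdot,\cdot\raninv$; this is straightforward from $\epsilon_j^\dagger$'s explicit form \eqref{eq:epsilon-def} and the intertwining property of $I^\pm_k$. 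Modulo these routine checks, the proof is essentially the orthogonality relation \eqref{eq:orthogonality} fed through the pair decomposition.
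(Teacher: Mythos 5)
Your proposal is correct and follows essentially the same route as the paper: insert the pair decompositions $\mathrm{id}=\sum_k {I^\pm_k}^\dagger I^\pm_k$ on the $+$ and $-$ factors, integrate term by term using the two-leg orthogonality relation \eqref{eq:orthogonality-2}, and identify the surviving $j^+(k)=j^-(k')$ terms with $\ci_{kk'}\ci_{kk'}^\dagger$; the paper justifies the interchange of sum and integral exactly as in your fallback argument, noting that each factor sum converges in $L^2(\mathrm{SU}(1,1))$ as a function of $g$, so the product converges in $L^1$. The one inaccurate side remark is your parenthetical claim that the $k$-sum is exactly finite on probe states in $S$: a simple tensor such as $u_1\act|j_1,m_1\rangle_+\otimes u_2\act|j_2,m_2\rangle_+$ with $u_1\neq u_2$ generally has nonzero components in infinitely many $D^+_{j^+(k)}$ (the finiteness remark after \eqref{eq:ISS-S} concerns the embeddings ${I^{j_1j_2}_j}^\dagger$, not the projections $I^+_k$), but your norm-convergence argument covers this and is the one the paper actually uses.
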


\begin{proof}
{
We need to show that
\begin{equation}\label{eq:P-resolved}
    [\Phi](\Psi)=\sum_{k,k'\colon j^+(k)=j^-(k')}\ci_{kk'}(\Psi) \ci_{kk'}^\dagger(\Phi).
\end{equation}
The spaces $S$ consist of finite linear combinations of simple tensors thus it is enough to check the identity for vectors. We write
\begin{equation}
    \Phi=\Phi^+\otimes \Phi^-,\quad\Psi=\Psi^+\otimes \Psi^-,
\end{equation}
where $\Phi^\pm,\Psi^\pm\in \hat{\bigotimes}_i S_{j_i^\pm}^\pm$.
The following identity holds
\begin{equation}
    \left\langle \Psi^\pm,g\act \Phi^\pm\right\rangle_{SU(1,1)}=
    \sum_{k}
    \left\langle I_k^\pm(\Psi^\pm),g\act I_{k}^\pm(\Phi^\pm)\right\rangle_{SU(1,1)}
\end{equation}
The sum on the left-hand side might be infinite, but it converges to the right-hand side in $L^2(SU(1,1))$. We obtain
\begin{align}
    &\langle \Psi^+,g\act \Phi^+\rangle_{SU(1,1)}\langle \Psi^-,g\act \Phi^-\rangle_{SU(1,1)}\nonumber\\
    &=
    \sum_{k,k'}
    \left\langle I_k^+(\Psi^+),g\act I_{k}^+(\Phi^+)\right\rangle_{SU(1,1)}\ 
    \left\langle I_{k'}^-(\Psi^-),g\act I_{k'}^-(\Phi^-)\right\rangle_{SU(1,1)}\label{eq:decomp}
\end{align}
The equality is in the sense of $L^1(SU(1,1))$ as a product of two square integrable functions. The equations \eqref{eq:orthogonality-2} show that
\begin{align}
    &\int dg\ 
    \left\langle I_k^+(\Psi^+),g\act I_{k}^+(\Phi^+)\right\rangle_{SU(1,1)}\ 
    \left\langle I_{k'}^-(\Psi^-),g\act I_{k'}^-(\Phi^-)\right\rangle_{SU(1,1)}\nonumber\\
    &=\delta_{j^+(k)=j^-(k')}
   \epsilon_k (I_k^+(\Psi^+)\otimes I_{k'}^-(\Psi^-))\ 
      \epsilon_k^\dagger (I_k^+(\Phi^+)\otimes I_{k'}^-(\Phi^-))
\end{align}
After substituting into \eqref{eq:decomp} we obtain the identity \eqref{eq:P-resolved}.
}

\end{proof}

{
\begin{Remark}
The Lemma \ref{lm:inv} shows that the scalar product is positively defined as
\begin{equation}
    \laninv [\Phi],[\Phi]\raninv=P(\Phi)(\Phi)=
    \sum_{k,k'\colon j^+(k)=j^-(k')}|\ci_{kk'}(\Phi)|^2\geq 0.
\end{equation}
\end{Remark}
}

The sum in $P$ in \eqref{PII} is not always infinite. Recall our terminology in Remark \ref{remarktimelike}. For examples, timelike tetrahedra with 3 face-normals future-pointing and 1 past-pointing correspond to invariants in $D_{j_1}^+\otimes D_{j_2}^+\otimes D_{j_3}^+\otimes D_{j_4}^-$. The sum in \eqref{PII} becomes a finite sum over $k$ due to the constraint $j^+(k)=j_4^-$ where $j^+(k)$ labels subspaces in $D_{j_1}^+\otimes D_{j_2}^+\otimes D_{j_3}^+$. The space of intertwiners is finite-dimensional. Similar argument holds for tetrahedra with 3 face-normals past-pointing and 1 future-pointing, and generalizes to timelike polyhedra with only one face-normal future-pointing or only one face-normal past-pointing. However for all other cases, e.g. polyhedra with 2 or more face-normals future-pointing and past-pointing, the sum in \eqref{PII} extends to infinity.

Recall the inner product on the space of SU(1,1) intertwiners \eqref{invinnerprod}, 
we compute 
\begin{align}
    \laninv \ci_{kk'},\ci_{ll'}\raninv=
    &\int_{\rm SU(1,1)} dg \, {\sqrt{d_{j^+(k)}d_{j^+(l)}}}\lansu j^+(k),j^+(k)|I_k^+{I_l^+}^\dagger g\act |j^+(l),j^+(l)\ransu\nonumber\\ &\lansu j^-(k'),-j^-(k')|I_k^-{I_l^-}^\dagger g\act |j^-(l'),-j^-(l')\ransu
\end{align}
However $j^+(k)=j^-(k')$, $j^+(l)=j^-(l')$ and
\begin{equation}
    I_k^\pm {I_l^\pm}^\dagger =\delta_{kl}{\mathbb{I}}
\end{equation}
thus denoting $j^+(k)=j^+(l)=j$
\begin{equation}
    \laninv \ci_{kk'},\ci_{ll'}\raninv= {d_j}
    \delta_{kl}\delta_{k'l'} \int_{\rm SU(1,1)} dg\ \lansu j,j|g\act |j,j\ransu \lansu j,-j|g\act |j,-j\ransu=\delta_{kl}\delta_{k'l'}
\end{equation}
from orthogonality of Wigner matrices. The orthogonality of $\ci_{kk'}$ and \eqref{PII} shows that $\ci_{kk'}$ form a complete orthonormal basis in the intertwiner space.

\subsection{Duality}

{
Our definition of the spin foam model involves orientations  of edges and faces. We were considering so far the intertwiner space in the case when all these orientations coincides. In order to extend it to a more general circumstances we need to introduce duality maps.
}

$\zeta:=\sqrt{d_j}\epsilon_j^\dagger\in \mathrm{Inv}(D_j^+\otimes D^-_j)$ defines a linear isomorphism from $D_j^-$ to the dual space $D_j^+{}^*$, and from $D_j^+$ to $D_j^-{}^*$ (there is an anti-linear isomorphism between $D_j^\pm{}^*$ and $D_j^\pm$ by Riesz's theorem.)
\be
\zeta:&& D_j^-\to D_j^+{}^*\\
&&\psi^-\mapsto \langle J(\psi^-) ,\ \cdot\  \rangle_{\rm SU(1,1)},\quad \forall\ \psi^-\in D_j^-\\
\zeta:&& D_j^+\to D_j^-{}^*\\
&&\psi^+\mapsto \langle J(\psi^+),\ \cdot\  \rangle_{\rm SU(1,1)},\quad \forall\ \psi^+\in D_j^+
\ee
The unitarity and intertwining property of $\zeta$ follows from the anti-unitarity and intertwining property of $J$. $J$ maps any finite linear combination of the canonical basis in $ D_j^\pm$ to a finite linear combination of the canonical basis in $ D_j^\mp$.


With $\zeta$, we define the following map  
\be
\zeta^{(i)}:\ [\Phi]\mapsto\widetilde{[\Phi]}:=[ \zeta_i\Phi]=\int_{\rm SU(1,1)}dg\,\langle \,\cdot\,,\,g\act \zeta_i\Phi\rangle_{\rm SU(1,1)},
\ee
where $\zeta_i=1\otimes\cdots\otimes\zeta\otimes\cdots\otimes1$ with $\zeta$ appears only at the $i$-th slot. The inner product inside the integral has been generalized to include $D_{j_i}^\pm{}^*$. 

\begin{Definition}

The intertwiner space with $D_{j_i}^-\to D_{j_i}^+{}^*$ or $D_{j_i}^+\to D_{j_i}^-{}^*$ at the $i$-th factor is the linear span of the image $\zeta^{(i)}{[\Phi]}=[ \zeta_i\Phi]$ completed by the following inner product 
\be
\langle [\zeta_i\Psi],[\zeta_i\Phi]\rangle_{\rm Inv}&:=& [\zeta_i\Phi](\zeta_i\Psi)=\int_{\rm SU(1,1)}dg\langle \zeta_i\Psi,\,g\act \zeta_i\Phi\rangle \\
&=&\int_{\rm SU(1,1)}dg\langle \Psi,\,g\act \Phi\rangle=\langle [\Psi],[\Phi]\rangle_{\rm Inv}
\ee 
\end{Definition}

For instance, if $[\Phi]\in \mathrm{Inv}(D_{j_1}^+\otimes D_{j_2}^+\otimes D_{j_3}^+\otimes D_{j_4}^-)$ the image of $\zeta^{(4)}$ spans $\mathrm{Inv}(D_{j_1}^+\otimes D_{j_2}^+\otimes D_{j_3}^+\otimes D_{j_4}^+{}^*)$. $\zeta^{(i)}$ is an isomorphism by definition. Applying $\zeta^{(i)}$ to the orthonormal basis: $\zeta^{(i)}\ci_{kk'}=[\zeta_i i_{kk'}]$, we have 
\be
\langle \zeta^{(i)}\ci_{kk'},\zeta^{(i)}\ci_{ll'}\rangle_{\rm Inv}=\delta_{kl}\delta_{k'l'}.
\ee 
i.e. $\zeta^{(i)}\ci_{kk'}$ is an orthonormal basis.

Composing $\zeta^{(i)}$ defines isomorphisms from intertwiner spaces with $D_j^\pm$ to intertwiner spaces with one or several $D_j^\pm$ changed to $D_j^\mp{}^*$. These isomorphisms translate the projector $P$ and the orthonormal basis $\ci_{kk'}$ to intertwiner spaces with $D_j^\pm{}^*$. In the following, Eq.\eqref{PII} is understood to be on any intertwiner space with or without $D_j^\pm{}^*$, and $\ci_{kk'}$ form an orthonormal basis in the space. 


\subsection{Vertex amplitude and spinfoam amplitude}

The vertex amplitude $A_\Gamma(\{\psi_{\fn\fl}^{\epsilon_{\fn\fl}}\}_{\fn,\fl})$ with boundary states $\{\psi_{\fn \fl}^{\epsilon_{\fn\fl}}\}_{\fn,\fl}$ being either coherent states or the canonical basis is integrable by Corollary \ref{cor:main}. The finiteness extends to finite linear combination of such states. We notice that for any state $A_\Gamma(\{g_{\fn}\act\psi_{\fn\fl}^{\epsilon_{\fn\fl}}\}_{\fn,\fl})=A_\Gamma(\{\psi_{\fn\fl}^{\epsilon_{\fn\fl}}\}_{\fn,\fl})$. In fact, the vertex amplitude $A_\Gamma$ can be understood as a function of intertwiners. 
We define
\be
A_\Gamma(\{\ci_{\fn}\}_\fn):=A_\Gamma(\{i_{\fn}\}_\fn)
\ee
for any $\ci_{\fn}=[i_{\fn}]$. Here $i_\fn$ is $i_{kk'}$ in \eqref{IjjIjj} generalized to be possibly living in dual spaces by the action of $\zeta$. $i_\fn$ contributes factors in $D^\pm_j$ to the kets in \eqref{AGamma}, and contributes factors in $D^\pm_j{}^*$ to bras. Note that the result does not depend on the choice of representative of invariant. As $i_\fn$ is a finite linear combination of tensors of canonical basis, we have 

\begin{Corollary}
$A_\Gamma(\{\ci_{\fn}\}_\fn)$ is finite.
\end{Corollary}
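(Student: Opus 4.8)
The plan is to reduce the statement to Theorem~\ref{thm:main} (equivalently Corollary~\ref{cor:main}) by exploiting multilinearity of $A_\Gamma$ in its boundary data, together with the fact that each representative $i_\fn$ of an intertwiner basis element is a \emph{finite} linear combination of tensor products of canonical-basis vectors and of their duals. First I would unpack $i_\fn$. By construction \eqref{IjjIjj}, $i_{kk'}=\sqrt{d_j}\,{I^+_k}^\dagger|j,j\rangle_+\otimes{I^-_{k'}}^\dagger|j,-j\rangle_-$, and the embeddings ${I^\pm_k}^\dagger$ send a canonical-basis vector to a finite sum of simple tensors of canonical-basis vectors — the sum over magnetic numbers $m_i$ is finite because $\pm m_i\geq j_i$ in the discrete series, as noted right after the definition of $I^{j_1j_2}_j$. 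Hence, writing $i_\fn$ out in components, each $i_\fn$ is a finite linear combination of simple tensors whose factors are the basis states $|j_\fl,m_{\fn\fl}\rangle^{\eps_{\fn\fl}}$, i.e. the functions $F_{j_\fl m_{\fn\fl}}$, $F^{\pm}_{j_\fl m_{\fn\fl}}$ after the $Y$-map.

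Next I would deal with the duality maps. When $\ci_\fn$ lives (partly) in dual spaces $D^\pm_{j}{}^*$, its representative is $\zeta_i i_\fn$, and $\zeta$ is built from the antiunitary intertwiner $J\propto\hat{\beta}^\pm$, which carries any finite linear combination of canonical-basis vectors in $D^\pm_j$ to a finite linear combination of canonical-basis vectors in $D^\mp_j$. So inserting a $D^\pm_j{}^*$ factor amounts, inside the contraction \eqref{AGamma}, to replacing one scalar-product contraction by a contraction through the bilinear form $\beta$ of \eqref{AGamma-beta}, still with canonical-basis data. Collecting all this, $A_\Gamma(\{i_\fn\}_\fn)$ expands as a \emph{finite} linear combination of vertex amplitudes of the form \eqref{AGamma} or \eqref{AGamma-beta} with every boundary state taken from the canonical bases $F_{j,m}$, $F^{\pm}_{j,m}$.

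Finally I would invoke the convergence results already proven. Each term of this finite expansion is a vertex amplitude with canonical-basis boundary states, absolutely convergent by Theorem~\ref{thm:main}; for the terms involving $\beta$-contractions (coming from the $D^\pm_j{}^*$ factors) absolute convergence follows from Corollary~\ref{cor:main}, using that $\hat{\beta}$ preserves $L_\infty(\ch_{(\rho,n)})$ on the canonical basis (the Remark following the $L_\infty$ lemmas in Section~\ref{sec3}, proven in Appendix~\ref{sec:comparison}). A finite sum of absolutely convergent integrals is absolutely convergent, hence $A_\Gamma(\{i_\fn\}_\fn)$, and so $A_\Gamma(\{\ci_\fn\}_\fn)$, is finite. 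The independence of the chosen representative $i_\fn$ of $\ci_\fn=[i_\fn]$ follows from the gauge invariance $A_\Gamma(\{g_\fn\act\psi_{\fn\fl}\})=A_\Gamma(\{\psi_{\fn\fl}\})$ and \eqref{eq:orthogonality}; since the Corollary only asserts finiteness, this point may be stated briefly.

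The step requiring the most care is the bookkeeping around the duality maps $\zeta^{(i)}$: one must verify that inserting a $D^\pm_j{}^*$ factor really does produce an instance of $A_\Gamma^\beta$ with canonical-basis data whose boundary states remain in $L_\infty(\ch_{(\rho,n)})$ after application of $\hat{\beta}$ — so that it falls within the scope of Section~\ref{sec3} — rather than something outside it. Everything else (multilinearity of $A_\Gamma$, finiteness of the component expansion of $i_\fn$, finiteness of finite sums) is routine once this identification is made.
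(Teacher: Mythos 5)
Your argument is correct and follows essentially the same route as the paper: the paper's (very terse) justification is precisely that $i_\fn$ is a finite linear combination of tensors of canonical-basis states (with the duality factors handled by $\zeta$, built from $J$, which preserves finite linear combinations of the canonical basis), so finiteness reduces by multilinearity to Theorem~\ref{thm:main} and Corollary~\ref{cor:main}. Your write-up simply makes explicit the bookkeeping around ${I^\pm_k}^\dagger$ and $\zeta^{(i)}$ that the paper leaves implicit.
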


We are now ready to define the spinfoam amplitude including both spacelike and timelike polyhedra. We consider an arbitrary 4d cellular complex $\ck$ and its dual complex $\ck^*$. Objects in $\ck^*$ are vertices $v$, oriented edges $e$, and oriented faces $f$. These objects in $\ck^*$ are used to label spinfoam variables. Firstly, every face $f$ is colored by an SU(2) or SU(1,1) spin $j_f\in \mathbb{N}_0/2$, and $j_f>1/2$ if any $e\subset \partial f$ is dual to a timelike polyhedron in $\ck$. Secondly, every oriented edge $e$ associates with an intertwiner space of and the projector $P_e$ (see also \cite{KKL,Bahr:2010bs})
\be
P_e=\sum_{\ci_e}\ci_e\otimes \ci_e^\dagger,\quad \ci_e\in\mathrm{Inv}\lt(\bigotimes_{f_1} D^{\eps_{f_1}}_{j_{f_1}}\bigotimes_{f_2} D^{\eps_{f_2}}_{j_{f_2}}{}^*\rt),\quad \eps_f=0,\pm.
\ee
Here $\ci_e=[i_e]$ is the orthonormal basis with $i_e=i_{kk'}$ in \eqref{IjjIjj} (generalized to be possibly living in dual spaces by the action of $\zeta$). $f_1$ (or $f_2$) are faces whose orientation inducing the orientation of $\partial f$ is congruent (or conflict) with the orientation of $e$. $\eps_f=\pm$ (or 0) when $e$ is dual to a timelike (or spacelike) tetrahedron. We associate $\ci_e$ to the source $s(e)$ and associate $\ci_e^\dagger$ to the target $t(e)$. $\ci_e$ and $\ci_e^\dagger$ contribute to the vertex amplitudes at $s(e)$ and $t(e)$ respectively.

At each $v$ dual to a 4-simplex $\sig$, the boundary polyhedra of $\sig$ makes a partition of $\partial \sig$. The partition is dual to an oriented graph $\G$. Each link $\fl$ and each node $\fn$ of $\G$ correspond uniquely to a $f$ and $e$ adjacent to $v$ respectively. The orientation of $\fl$ is congruent to the orientation of $f$. We re-label the associated spin by $ j_\fl\equiv j_f$ and the intertwiner basis by $\ci_{\fn}\equiv \ci_{e}$ in $A_\G$. We denote the vertex amplitude by 
\be
A_v\lt(j_f, \ci_{e}\rt)\equiv A_\Gamma\lt(\lt\{\ci_{\fn}\rt\}_\fn\rt)
\ee
Here we also use the same notation $\ci_{e}$ to denote SU(2) intertwiners. The spinfoam amplitude is defined by 
\be
A_\ck=\sum_{\{j_f\}}\prod_{f} A_f({j_f})\, \cz_\ck\lt(\{j_f\}\rt),\quad
\cz_\ck\lt(\{j_f\}\rt)=\sum_{\{\ci_e\}}\prod_vA_v\lt(j_f, \ci_e\rt),\label{Ackapl}
\ee
where $A_f({j_f})$ is an arbitrary face amplitude. $\prod_f$ is over internal faces. $\sum_{\{j_f\}}$ and $\sum_{\{\ci_e\}}$ are over internal spins and intertwiners of internal edges.

The above definition of $A_\ck$ is generally formal because (1) the sum over $j_f$ can cause divergence, and (2) even for $\cz_{\ck}$ with fixed $j_f$'s, the sum over SU(1,1) intertwiners is not obviously finite, when the intertwiner space is infinite-dimensional.

If every internal polyhedron has only one face-normal future-pointing or only one face-normal past-pointing, the sum $\sum_{\ci_e}$ is finite, then $\cz_\ck\lt(\{j_f\}\rt)$ is well-defined, and is independent of the choice of the particular intertwiner basis.

\section{Spinfoam amplitude with additional cut-off}\label{spinfoam amplitude with additional cut-off}

We are not able to prove the finiteness of $\cz_\ck\lt(\{j_f\}\rt)$ in the presence of the timelike polyhedron with at least 2 future-pointing and 2 past-pointing face-normals, and we suspect that the finiteness may not be true in this case. In the following, we explain the situation firstly at the semiclassical level in the case of tetrahedron, then we propose the additional cut-off that we need to impose on the sum over SU(1,1) intertwiners in the amplitude, in case that the sum diverges. The restriction has very clear semiclassical geometric picture, which we explain in Section \ref{sec:shadows}. By the restriction, the finiteness of $\cz_\ck\lt(\{j_f\}\rt)$ is proven straightforwardly and will be explained at the end of this section.

\subsection{Semiclassical non-compactness}\label{Necessity}

The semiclassical motivation of the additional cut-off in the space of intertwiners can be understood from the following facts: There exists arbitrary large tetrahedra even with fixed areas and volume. This phenomena only occur when two face-normals are future-pointing and two are past-pointing.

{We exhibit this fact on an example with null faces. The example with spacelike faces can be obtained by a small perturbation.} Let us define face vectors ($\vec{e}_0$ is timelike, $\vec{e}_1$ and $\vec{e}_2$ unit spacelike an orthonormal basis)
\begin{equation}
    \vec{n}_1=\alpha(\vec{e}_0+\vec{e}_1),\quad \vec{n}_2=\alpha(\vec{e}_0-\vec{e}_1),\quad \vec{n}_3=-U(\phi)\vec{n}_1,\quad \vec{n}_4=-U(\phi)\vec{n}_2,
\end{equation}
where $U(\phi)$ is rotation around $e_0$. We have $\sum_i n_i=0$.
We can reconstruct tetrahedron with the face bivectors given by dual of the vectors. The volume is proportional to $\alpha^{3/2}(\phi+O(\phi^2))$. We can choose $\phi=\frac{1}{\alpha^4}$ and $\alpha\rightarrow\infty$. The volume is very small but the bivectors are large (although they are null).

In the case of two face-normals future-pointing and two past-pointing, the size of faces nor the volume do not restrict the shapes of tetrahedra to a compact space. As an example, we consider a timelike tetrahedron with all faces having equal area $A_1=A_2=A_3=A_4=1$, two face-normals future-pointing and two past-pointing. The tetrahedron volume is given by (see Appendix \ref{Classical tetrahedron} for some details) 
\be
V=\frac{2}{3} \sqrt[4]{2} \sqrt[4]{\cosh ^{2}\left(\frac{\theta_{12}}{2}\right) \sinh ^{2}\left(\frac{\theta_{13}}{2}\right)\left(\cosh \left(\theta_{12}\right)-\cosh \left(\theta_{13}\right)\right)}
\ee
where real variables $\theta_{12}$ and $\theta_{13}$ are dihedral boost angles between faces 1,2 and 1,3. There is a noncompact space of $\theta_{12},\theta_{13}$ which give a fixed $V$, since increasing $\theta_{12}$ and $\theta_{13}$ can leave $V$ invariant provided that $(\cosh \left(\theta_{12}\right)-\cosh \left(\theta_{13}\right)$ decreases. In contrast, when there are 1 face-normal future-pointing (past-pointing) and 3 face-normals past-pointing (future-pointing), the space of $\theta_{12},\theta_{13}$ is compact (see Appendix \ref{Classical tetrahedron} for a proof).

The above semiclassical intuition indicates that the amplitude is possible to diverge in this case. Unfortunately we do not have a proof for either the finiteness or the divergence of the amplitude in presence of internal tetrahedron with two future-pointing and two past-pointing face-normals. In the following, we assume the worst scenario that the amplitude is divergent in this case, and we develop a scheme of cut-off with natural quantum geometrical meaning in order to render the amplitude finite.

\subsection{Shadows}\label{sec:shadows}

We investigate the geometric notion of a shadow: Consider a timelike polytope (convex polyhedron) whose oriented area vector of the faces are denoted by $\vec{n}_i^\pm$. Let  $\vec{n}_i^+$ be timelike future and $\vec{n}_i^-$ timelike past directed vectors, and we assume that they satisfy
\begin{equation}\label{eq:polytope-vectors}
    \sum_{i=1}^n\vec{n}_i^++\sum_{i=1}^m\vec{n}_i^-=0
\end{equation}
We denote the Minkowski squared norm by $\|\vec{u}\|^2$ and it is positive for timelike vectors $\vec{u}$. We denote $\|\vec{v}\|=\sqrt{\|\vec{v}\|^2}$
for the timelike vector $\vec{v}$. We introduce 
\begin{equation}
 \vec{N}=\sum_{i=1}^n\vec{n}_i^+=-\sum_{i=1}^m\vec{n}_i^-   
\end{equation}
and $\lambda=\|\vec{N}\|$. 

We call a shadow the extremal area of a polytope as seen from any direction. More precisely, for any choice of subsets $\Fu_+\subset\{1,\ldots n\}$, $\Fu_-\subset \{1,\ldots m\}$, we define the sum
\begin{equation}
    \vec{s}=\vec{s}_++\vec{s}_-,\quad \vec{s}_\pm=\sum_{i\in \Fu_\pm}\vec{n}_i^\pm,
\end{equation}
{The area of the polyhedron as seen from a direction $\vec{m}$ (where $\|\vec{m}\|^2=\pm 1$ depending whether it is timelike or spacelike) is equal to
\begin{equation}
    |\vec{m}\cdot\vec{s}|
\end{equation}
where $\Fu_\pm$ contain only visible faces. This leads to the following definition of a shadow:
The shadow is defined by the formula
\begin{equation}
    \mathrm{Max}_{\Fu_\pm}(\sqrt{|\|\vec{s}\|^2|}).
\end{equation} 
}

We have the following result:

\begin{Proposition}
The shadow of the polytope with face vectors given by \eqref{eq:polytope-vectors} is equal to $\lambda$.
\end{Proposition}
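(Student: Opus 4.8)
The plan is to verify separately that the shadow $\mathrm{Max}_{\Fu_\pm}\bigl(\sqrt{|\|\vec{s}\|^2|}\bigr)$ is at least $\lambda$ and at most $\lambda$. The lower bound is immediate: choosing $\Fu_+=\{1,\ldots,n\}$ and $\Fu_-=\emptyset$ gives $\vec{s}=\vec{s}_++\vec{s}_-=\vec{N}$, and since $\vec{N}$ is a nonempty sum of future-pointing timelike vectors it is itself timelike with $\|\vec{N}\|^2=\lambda^2>0$, so the maximand equals $\lambda$ there.

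For the upper bound I would first rewrite $\vec{s}$ in a convenient form. Put $\vec{a}:=\vec{s}_+=\sum_{i\in\Fu_+}\vec{n}_i^+$ and $\vec{b}:=-\vec{s}_-=\sum_{i\in\Fu_-}(-\vec{n}_i^-)$, so that $\vec{s}=\vec{a}-\vec{b}$. Both $\vec{a}$ and $\vec{b}$ are sums of future-pointing timelike vectors (or the zero vector), hence lie in the closed future cone; moreover, using \eqref{eq:polytope-vectors},
\begin{equation}
\vec{N}-\vec{a}=\sum_{i\notin\Fu_+}\vec{n}_i^+,\qquad \vec{N}-\vec{b}=\sum_{i\notin\Fu_-}(-\vec{n}_i^-),
\end{equation}
which are again sums of future-pointing timelike vectors (or the zero vector), so $\vec{N}-\vec{a}$ and $\vec{N}-\vec{b}$ lie in the closed future cone as well. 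Equivalently, $\vec{a}$ and $\vec{b}$ both lie between the zero vector and $\vec{N}$ in the partial order defined by the future cone.

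Now pass to a Lorentz frame in which $\vec{N}=(\lambda,0,0,0)$ (possible since $\vec{N}$ is timelike), and write $\vec{a}=(A,\vec{p})$, $\vec{b}=(B,\vec{q})$. The conditions that $\vec{a},\vec{N}-\vec{a}$ and $\vec{b},\vec{N}-\vec{b}$ lie in the closed future cone read $|\vec{p}|\le A\le\lambda-|\vec{p}|$ and $|\vec{q}|\le B\le\lambda-|\vec{q}|$; in particular $0\le A,B\le\lambda$ and $|\vec{p}|,|\vec{q}|\le\lambda/2$. Therefore
\begin{equation}
\|\vec{s}\|^2=\|\vec{a}-\vec{b}\|^2=(A-B)^2-|\vec{p}-\vec{q}|^2,
\end{equation}
which is $\le(A-B)^2\le\lambda^2$ on the one hand and $\ge-|\vec{p}-\vec{q}|^2\ge-(|\vec{p}|+|\vec{q}|)^2\ge-\lambda^2$ on the other. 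Hence $|\|\vec{s}\|^2|\le\lambda^2$ for every $\Fu_\pm$, i.e. $\sqrt{|\|\vec{s}\|^2|}\le\lambda$; combined with the lower bound, this shows the shadow equals $\lambda$.

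I expect the only step requiring care to be the normal-form reduction: one must check that $\vec{a},\vec{b}$ and also $\vec{N}-\vec{a},\vec{N}-\vec{b}$ lie in the closed future cone in all boundary cases ($\Fu_\pm$ empty or equal to the full index set), since it is precisely this \emph{two-sided} containment --- and not merely $\|\vec{a}\|,\|\vec{b}\|\le\lambda$, which on its own would permit $\|\vec{s}\|^2\to-\infty$ --- that yields the decisive bound $|\vec{p}|,|\vec{q}|\le\lambda/2$. Everything else is elementary Minkowski geometry.
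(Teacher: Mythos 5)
Your proof is correct and follows essentially the same route as the paper's: both arguments rest on the observation that $\vec{s}_\pm$ \emph{and} $\vec{N}\mp\vec{s}_\pm$ lie in the closed future cone, decompose everything along $\vec{e}_0=\vec{N}/\lambda$ and its orthogonal complement, and extract the decisive bound of $\lambda/2$ on the spatial components before estimating $\|\vec{s}\|^2$ from both sides. Your $(A,\vec{p})$ and $(B,\vec{q})$ are exactly the paper's $(\lambda_+,\vec{s}_+^\perp)$ and $(-\lambda_-,-\vec{s}_-^\perp)$, so this is the same proof in slightly different notation.
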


\begin{proof}
{We know that $\|\vec{N}\|=\lambda$ thus the shadow is bounded from below by $\lambda$. We will show that for any choice of subsets $\Fu_+\subset\{1,\ldots n\}$, $\Fu_-\subset \{1,\ldots m\}$, $\vec{s}$ satisfies 
\begin{equation}
    -\lambda^2\leq \|\vec{s}\|^2\leq \lambda^2.
\end{equation}}
We denote $\vec{N}=\lambda \vec{e}_0$ ($\vec{e}_0$ unit future directed) and decompose
\begin{equation}
    \vec{s}_\pm=\lambda_\pm \vec{e}_0+\vec{s}_\pm ^\perp,\quad \lambda_\pm=\vec{s}_\pm\cdot \vec{e}_0,
\end{equation}
and $\lambda_+\geq 0$, $\lambda_-\leq 0$.
The space $\vec{e}_0^\perp$ is Euclidean and we introduce in it a norm
\begin{equation}
    \|\vec{v}\|_\perp^2=-\|\vec{v}\|^2,\quad \vec{v}\cdot\vec{e}_0=0.
\end{equation}
This means that
\begin{equation}
\|\vec{s}_\pm\|^2=(\vec{s}_\pm\cdot\vec{e}_0)^2-\|\vec{s}_\pm^\perp\|_\perp^2,\quad \|\vec{s}\|^2=(\vec{s}\cdot\vec{e}_0)^2-\|\vec{s}^\perp\|_\perp^2.
\end{equation}
Let us notice that both {$\pm\vec{s}_\pm$ and $\vec{N}\mp\vec{s}_\pm$ are timelike and future directed
\begin{equation}
\pm \vec{s}_\pm=\pm \sum_{i\in \Fu_\pm}\vec{n}_i^\pm,\quad
    \vec{N}\mp\vec{s}_\pm=\pm \sum_{i\notin \Fu_\pm}\vec{n}_i^\pm,
\end{equation}} 
thus
\begin{equation}
    \|\vec{s}_\pm\|^2=\lambda_\pm^2-\|\vec{s}_\pm^\perp\|_\perp^2\geq 0,\quad (\vec{N}\mp \vec{s}_\pm)\cdot\vec{e}_0=\lambda\mp\lambda_\pm\geq 0,\quad\|\vec{s}_\pm\mp \vec{N}\|^2=(\lambda\mp\lambda_\pm)^2-\|\vec{s}_\pm^\perp\|_\perp^2\geq 0,
\end{equation}
Thus we obtain
\begin{equation}
    \|\vec{s}_\pm^\perp\|_\perp\leq \min(\pm\lambda_\pm,\lambda\mp\lambda_\pm)\leq \frac{\pm\lambda_\pm+(\lambda\mp\lambda_\pm)}{2}\lambda=\frac{1}{2}\lambda
\end{equation}
and estimating from below
\begin{equation}
\|\vec{s}\|^2\geq -\|\vec{s}_+^\perp+\vec{s}_-^\perp\|^2_\perp\geq -(\|\vec{s}_+^\perp\|_\perp+\|\vec{s}_-^\perp\|_\perp)^2\geq -\lambda^2
\end{equation}
Similarly,
\begin{equation}
    \|\vec{s}\|^2\leq (\lambda_++\lambda_-)^2\leq \max(|\lambda_+|,|\lambda_-|)^2\leq \lambda^2
\end{equation}
Geometric interpretation is that $\vec{s}$ is possible sum of visible faces in polytope, thus the visible surface (shadow) cast by the polytope is bounded by $\lambda$. This finishes the proof of the proposition.
\end{proof}

\subsection{Quantum cut-off}

Given the spinfoam amplitude with fixed spins at all faces, we introduce the cut-off $j^+(k)\leq M$ for every timelike polyhedra where the sum in \eqref{PII} is infinite. This means that we restrict eigenvalues of the ``shadow operator''
\begin{equation}
    \left(\sum_{i=1}^n \hat{\bm X}_i^+\right)^2,\quad \hat{\bm X}_i^+=\lt(\hat{L}^3,\hat{K}^1,\hat{K}^2\rt)_i^+
\end{equation}
{by $M(M-1)$}. Here the superscript $+$ indicates that the SU(1,1) generators are represented on $D^+_j$. This cut-off is translated to the cut-offs of the intertwiner spaces with $D^\pm_j{}^*$ by the isometries defined by $\zeta$.

We modify $\cz_{\ck}(\{j_f\})$ by imposing the cut-off to the sum of intertwiners so that $\sum_{\ci_e}$ becomes a finite sum. The resulting $\cz_{\ck}(\{j_f\})$ is finite.

\begin{Corollary}
The spinfoam amplitude $A_\ck$ in \eqref{Ackapl} on any cellular complex $\ck$ is finite when 2 types of cut-offs are imposed: one is imposed on the area quantum number $j_f\leq j_{max}$ (in the case of the bubble divergence), the other is imposed on the shadow quantum number $j^+(k)\leq M$ for every internal timelike polyhedron that has at least 2 future-pointing and 2 past-pointing face-normals. 
\end{Corollary}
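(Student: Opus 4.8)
The plan is to reduce the statement to the finiteness of $\cz_\ck(\{j_f\})$ for each individual admissible family of face spins, and then to a finite count of surviving intertwiner labels. Once the first cut-off $j_f\le j_{max}$ is imposed, the outer sum in \eqref{Ackapl} runs over the finite set $\{\{j_f\}\colon j_f\le j_{max}\}$, and each face factor $A_f(j_f)$ is a fixed number; hence it suffices to prove that $\cz_\ck(\{j_f\})$ is finite for every such $\{j_f\}$. If some internal timelike polyhedron has all of its face-normals future-pointing or all past-pointing, then by the remark on the invariant space its intertwiner space is empty, $\cz_\ck(\{j_f\})=0$, and there is nothing to prove; so from now on I assume each internal timelike polyhedron has at least one future-pointing and at least one past-pointing face-normal, so that \eqref{PII} applies at every internal edge.

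First I would check that every summand $\prod_v A_v(j_f,\ci_e)$ of $\cz_\ck(\{j_f\})$ is a finite number. Each factor is $A_v(j_f,\ci_e)=A_\Gamma(\{\ci_\fn\}_\fn)$ evaluated on basis intertwiners $\ci_\fn=[i_\fn]$, where $i_\fn=i_{kk'}$ is as in \eqref{IjjIjj}, transported into dual slots by the isometries $\zeta^{(i)}$ wherever the edge orientation demands it. By construction $i_{kk'}$ is built from ${I^\pm_k}^\dagger$ applied to canonical-basis vectors, and $\zeta^{(i)}$ (through $J$) carries any finite linear combination of the canonical basis to a finite linear combination of the canonical basis; hence $i_\fn$ is a finite linear combination of tensor products of canonical-basis states. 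Therefore $A_\Gamma(\{\ci_\fn\}_\fn)$ is a finite linear combination of canonical-basis vertex amplitudes, each absolutely convergent by Theorem \ref{thm:main}, with the duality and $\hat{\beta}$-bookkeeping handled through Corollary \ref{cor:main} (equivalently through $A_\Gamma^\beta$); this is precisely the content of the Corollary that $A_v(j_f,\ci_e)$ is finite. So every summand in $\cz_\ck(\{j_f\})$ is finite.

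Next I would show that, with the two cut-offs in force, only finitely many intertwiner configurations $\{\ci_e\}$ contribute to $\cz_\ck(\{j_f\})$. Since the number of configurations is $\prod_e\#\{\text{admissible }\ci_e\}$ over internal edges, it is enough to bound each factor. For an internal edge dual to a spacelike polyhedron, $\ci_e$ runs over an orthonormal basis of a finite-dimensional $\Su$ intertwiner space. For an internal edge dual to a timelike polyhedron with only one future-pointing or only one past-pointing face-normal, the argument after \eqref{PII} (the constraint $j^+(k)=j^-(k')$ with one side equal to a single fixed spin) shows the sum in \eqref{PII} is finite, so $\{\ci_{kk'}\}$ is a finite orthonormal basis. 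For an internal edge dual to a timelike polyhedron with at least two future-pointing and at least two past-pointing face-normals, the shadow cut-off imposes $j^+(k)\le M$; in the decomposition \eqref{eq:D-decomp} one has $j^+(k)\ge\sum_i j_i^+$ and each value of $j^+(k)$ occurs with finite multiplicity, so $\{k\colon j^+(k)\le M\}$ is finite, and since an admissible $\ci_{kk'}$ requires $j^-(k')=j^+(k)\le M$ the label $k'$ likewise ranges over a finite set; thus $\{\ci_{kk'}\colon j^+(k)\le M\}$ is finite. (By the $\zeta$-isometries the same count applies to the edges whose intertwiners sit in dual spaces.) Hence $\sum_{\{\ci_e\}}$ in $\cz_\ck(\{j_f\})$ is a finite sum of finite terms, so $\cz_\ck(\{j_f\})$ is finite; and then $A_\ck=\sum_{\{j_f\}\colon j_f\le j_{max}}\prod_f A_f(j_f)\,\cz_\ck(\{j_f\})$ is a finite sum of finite terms, hence finite.

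The real work for this corollary is all upstream: the absolute convergence of the vertex amplitude on canonical and coherent states (Section \ref{sec3}), and the construction of a well-defined SU(1,1) intertwiner space with an orthonormal basis \eqref{PII} and duality isometries (Section \ref{Gluing vertex amplitudes}). The only point that might need care here is that per-term finiteness survives the transport of intertwiners into dual slots — this is exactly why one needs the $L_\infty$-stability of $\hat{\beta}$ on the canonical basis and the preservation of finite canonical-basis combinations under $\zeta^{(i)}$; granting those, the genuinely new input is just the elementary observation that only finitely many intertwiner labels lie below the shadow cut-off, which is where the finiteness ultimately comes from.
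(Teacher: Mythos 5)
Your proposal is correct and follows essentially the same route as the paper: the paper's own justification is simply that the shadow cut-off makes each $\sum_{\ci_e}$ a finite sum, each term being finite by the earlier corollary on $A_\Gamma(\{\ci_\fn\}_\fn)$, and the area cut-off makes the outer sum over $\{j_f\}$ finite. Your write-up just makes explicit the finite-multiplicity count of labels with $j^+(k)\leq M$ and the case distinctions that the paper leaves implicit.
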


\section{Conclusion and outlook}\label{sec4}

In summary, we firstly prove the finiteness of the extended spinfoam vertex amplitude in presence of timelike polyhedra. When generalizing the proof of finiteness to the extended spinfoam amplitude $A_\ck$ on the cellular complex $\ck$, we show that $A_\ck$ is finite once we impose cut-offs on eigenvalues of area operators (of internal faces) and shadow operators (of internal polyhedra).

In this paper, we focus on the case of spinfoam amplitude with timelike polyhedra whereas all faces are still spacelike. The finiteness of the amplitude with timelike faces is not addressed in this paper. The analysis of amplitudes with timelike faces is expected to be more involved, because it relates to the continuous-series unitary irrep of SU(1,1), then neither the canonical basis nor coherent boundary states of vertex amplitudes are normalizable. Therefore we postpone this analysis to the future publication. 

There are a few other future aspects that we would like to mention: Recently various numerical techniques have been applied to the EPRL spinfoam model with spacelike tetrahedra \cite{Han:2020npv,Dona:2019dkf}. It is interesting to generalize these numerical methods to include the extended model with timelike tetrahedra/polyhedra, see \cite{github} for the first step. It is also important to consider the deformation of the extended spinfoam model to include cosmological constant, since so far the cosmological constant is only implemented to the model with spacelike tetrahedra \cite{QSF,QSF1,HHKR,Han:2021tzw}. There is evidence suggesting that the inclusion of cosmological constaint might not only introduce a natural cut-off of spins in $A_\ck$, but also make the sum over SU(1,1) intertwiner convergent. Lastly, the inclusion of other matter fields in the extended spinfoam theory should also be an interesting future perspective.


\section*{Acknowledgements}

M.H. receives support from the National Science Foundation through grant PHY-1912278.  W.K acknowledges grant of Polish National Science Centre (NCN) Sheng-1 2018/30/Q/ST2/00811.
	

\appendix

\section{Coherent states}\label{Coherent state}

The SU(1,1) discrete-series unitary irrep can be realized by representation functions on SU(1,1). The orthonormal basis $|j,m\rangle^\pm$ in $\cd^\pm_j$ is realized by
\be
\begin{array}{l}
\psi_{jm}^{+}(v)=\sqrt{2j-1}D_{jm}^{+,j}(v),\quad m\geq j\\
\psi_{jm}^{-}(v)=\sqrt{2j-1}D_{-jm}^{-,j}(v),\quad m\leq-j,\quad v\in\mathrm{SU(1,1)}
\end{array}
\ee
where $D^{\pm, j}_{nm}(v)$ is the SU(1,1) representation matrix. Similarly the orthonormal basis in the SU(2) unitary irrep can be relatized by 
\be
\psi_{jm}(v)=\sqrt{2j+1}D^j_{jm}(v),\quad v\in\Su
\ee
where $D^j_{nm}(v)$ is the SU(2) representation matrix (Wigner $D$-matrix).

The coherent states $|j,\mathbf{m}^\eps\rangle^\eps$ is realized by 
\be
\psi^{\pm,\mathbf{m}^\pm}_j(v)&=&\psi^\pm_{jj}(vu),\quad \mathbf{n}^{\pm}=\left(u^{T}\right)^{-1}\mathbf{n}_{\pm},\quad u\in\mathrm{SU(1,1)},\\
\psi^{\mathbf{n}}_j(v)&=&\psi_{jj}(vu),\quad\mathbf{n}=\left(u^{T}\right)^{-1}\mathbf{n}_{+},\quad u\in\Su.
\ee
where
\be
\mathbf{n}_{+}=\left(\begin{array}{l}
1\\
0
\end{array}\right)
\qquad
\mathbf{n}_{-}=\left(\begin{array}{l}
0\\
1
\end{array}\right)
\ee
The spinfoam $Y$-map embeds SU(1,1) or SU(2) states into the $\Slc$ unitary irrep $(\rho,n)=(2\g j,2j)$. $Y_{\pm}$ acting on the SU(1,1) state $\psi^\pm_{jm}$ gives
\be
Y_\eps \psi^\eps_{jm}\equiv F_{jm}^{\eps}(\mathbf{z})=\frac{1}{\sqrt{\pi}}\Theta\left(\eps\langle\mathbf{z}\mid\mathbf{z}\rangle\right)\left(\eps\langle\mathbf{z}\mid\mathbf{z}\rangle\right)^{\mathrm{i}\frac{\rho}{2}-1}\psi_{jm}^{\eps}\left(v^{\eps}(\mathbf{z})\right),\quad
\ee
where $\eps=\pm$ and
\be\label{vz1}
&v^{+}(\mathbf{z})=\frac{1}{\sqrt{\langle\mathbf{z}\mid\mathbf{z}\rangle}}\left(\begin{array}{ll}
z_{+} & z_{-}\\
\bar{z}_{-} & \bar{z}_{+}
\end{array}\right),\quad v^{-}(\mathbf{z})=\frac{1}{\sqrt{-\langle\mathbf{z}\mid\mathbf{z}\rangle}}\left(\begin{array}{cc}
\bar{z}_{-} & \bar{z}_{+}\\
z_{+} & z_{-}
\end{array}\right),\quad\mathbf{z}=\left(\begin{array}{l}
z_{+}\\
z_{-}
\end{array}\right)&\\
&\langle\mathbf{z}\mid\mathbf{z}'\rangle:=\bar{z}_{+}z'_{+}-\bar{z}_{-}z'_{-}&
\ee
$Y_0$ acting on the SU(2) state $\psi_{jm}$ gives
\be
Y_0 \psi_{jm}\equiv F_{jm}(\mathbf{z})=\frac{1}{\sqrt{\pi}}\langle\mathbf{z} \mid \mathbf{z}\rangle_{0}^{\mathrm{i} \frac{\rho}{2}-1} \psi_{j m}(v(\mathbf{z}))
\ee
where
\be\label{vz2}
&v(\mathbf{z})=\frac{1}{\sqrt{\langle\mathbf{z} \mid \mathbf{z}\rangle_{0}}}\left(\begin{array}{cc}
z_{+} & z_{-} \\
-\bar{z}_{-} & \bar{z}_{+}
\end{array}\right), \quad \mathbf{z}=\left(\begin{array}{c}
z_{+} \\
z_{-}
\end{array}\right)&\\
&\langle\mathbf{z}\mid\mathbf{z}'\rangle_{0}:=\bar{z}_{+}z'_{+}+\bar{z}_{-}z'_{-}&
\ee
$\langle\cdot\mid\cdot\rangle$ and $\langle\cdot\mid\cdot\rangle_0$ are SU(1,1) and SU(2) invariant inner products on $\C^2$. 

The $Y$-map acting on highest (or lowest) weight states gives
\be
F_{jj}^{+}(\mathbf{z})&=&\sqrt{\frac{2j-1}{\pi}}\Theta\left(\langle\mathbf{z}\mid\mathbf{z}\rangle\right)\langle\mathbf{z}\mid\mathbf{z}\rangle^{\mathrm{i}\frac{\rho}{2}-1+j}\left\langle \mathbf{z}\mid\mathbf{n}_{+}\right\rangle ^{-2j}, \label{fjj1}\\
F_{j-j}^{-}(\mathbf{z})&=&\sqrt{\frac{2j-1}{\pi}}\Theta\left(-\langle\mathbf{z}\mid\mathbf{z}\rangle\right)\left(-\langle\mathbf{z}\mid\mathbf{z}\rangle\right)^{\mathrm{i}\frac{\rho}{2}-1+j}\left(-\left\langle \mathbf{z}\mid\mathbf{n}_{-}\right\rangle \right)^{-2j},\label{fjj2}\\
F_{jj}(\mathbf{z})&=&\sqrt{\frac{2 j+1}{2 \pi}}\langle\mathbf{z} \mid \mathbf{z}\rangle_{0}^{\mathrm{i} \frac{\rho}{2}-1-j}\left\langle \mathbf{n}_{+}\mid  \mathbf{z} \right\rangle_{0}^{2 j}.\label{fjj3}
\ee
The $Y$-map acting on the coherent state $|j,\mathbf{n}^\eps\rangle$ is realized by
\be
\Psi_j^{\pm,\mathbf{m}^{\pm}}(\mathbf{z})&=&\Theta\left(\pm\langle\mathbf{z}\mid\mathbf{z}\rangle\right)\sqrt{\frac{2j-1}{\pi}}(\pm\langle\mathbf{z}\mid\mathbf{z}\rangle)^{\mathrm{i}\frac{\rho}{2}-1+j}(\pm\left\langle \mathbf{z}\mid\mathbf{m}^{\pm}\right\rangle) ^{-2j},\quad\mathbf{m}^{\pm}=\left(u^{T}\right)^{-1}\mathbf{n}_{\pm},\label{app:coh1}\\
\Psi_j^{\mathbf{m}}(\mathbf{z})&=&\sqrt{\frac{2 j+1}{2 \pi}}\langle\mathbf{z} \mid \mathbf{z}\rangle_{0}^{\mathrm{i} \frac{\rho}{2}-1-j}\langle\mathbf{m}\mid \mathbf{z} \rangle_{0}^{2 j},\quad \quad\mathbf{m}=\left(u^{T}\right)^{-1}\mathbf{n}_{+}.\label{app:coh2}
\ee

\section{Comparison of \texorpdfstring{$\beta$}{beta} and the scalar product}
\label{sec:comparison}

\subsection{Properties of the $\hat{\b}$-map}

From intertwining property
\be
\hat{\beta}(g\act \Phi)=g\act \hat{\beta}(\Phi)
\ee
If we take $g(t)$ of the one parameter subgroup then the generator is defined by
\be
b\Psi=\left.\frac{1}{i}\frac{d(g(t)\act \Psi)}{dt}\right|_{t=0}
\ee
as the map is antilinear we get
\be
\hat{\beta}(b\Phi)=-b\hat{\beta}(\Phi)
\ee
In particular 
\be
\label{eq:beta-interntwiner}
\hat{\beta}(L^2\Phi)=L^2\hat{\beta}(\Phi),\quad 
\hat{\beta}(L_3\Phi)=-L_3\hat{\beta}(\Phi),
\ee
where depending on the case of coherent state $L^2$ is a Casimir of $\Su$ or $SU(1,1)$.

\begin{Lemma}
For any $s\in \{-,+\}$ there exists a unique state in $\ch^{(\rho,n)}$ up to multiplication by complex number that satisfies
\begin{enumerate}
    \item $L^2_{SU(1,1)}\Phi=\frac{1}{4}n(n-2)\Phi$
    \item {$L_3\Phi=s\frac{1}{2}m\Phi$, $m\geq n$}
\end{enumerate}
and it is equal to {$F_{j,sm/2}^{s}$ for $j=n/2$ (in case of $n=m$ this is just $\Psi_j^{s,\mathbf{n}_s}$)}. 
There exists a unique state in $\ch^{(\rho,n)}$ up to multiplication by complex number that satisfies
\begin{enumerate}
    \item $L^2_{\Su}\Phi=\frac{1}{4}n(n+2)\Phi$
    \item {$L_3\Phi=\frac{1}{2}m\Phi$, $m\in \{-n,\ldots, n\}$}
\end{enumerate}
and it is equal to {$F_{j,m/2}$ (in case of $m=sn$ it is just $\Psi_j^{\mathbf{n}_s}$)}. 
\end{Lemma}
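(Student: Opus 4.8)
The plan is to read off both uniqueness statements from the $SU(1,1)$- and $SU(2)$-isotypic decompositions of $\ch_{(\rho,n)}$ already recorded in \eqref{ymap}, using two facts: the quadratic Casimir acts by a scalar on each irreducible summand, and $L_3$ has simple spectrum on each discrete-series component (resp.\ on $D^0_j$). Throughout, recall $\rho=\g n$, $n=2j$, $j>1/2$.

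First I would treat the $SU(1,1)$ case. On the discrete-series summand $D^\pm_k$ in \eqref{ymap} the Casimir $L^2_{SU(1,1)}$ acts by the scalar $\tfrac{1}{4}(2k)(2k-2)=k(k-1)$, whereas on the continuous-series pieces $\mathcal{C}^\chi_s$ it acts by scalars not exceeding $-\tfrac{1}{4}$. Since $j>1/2$ we have $\tfrac{1}{4}n(n-2)=j(j-1)>-\tfrac{1}{4}$, so condition (1) already excludes the continuous part; and among the discrete pieces $k$ runs over $(1/2,j]$, where $x\mapsto x(x-1)$ is strictly increasing, so $k(k-1)=j(j-1)$ forces $k=j$. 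Hence the $L^2_{SU(1,1)}$-eigenspace at eigenvalue $\tfrac{1}{4}n(n-2)$ is exactly $Y_+(D^+_j)\oplus Y_-(D^-_j)$. On this space $L_3$ has the simple spectrum $\{j,j+1,\dots\}$ on $Y_+(D^+_j)$ and $\{-j,-j-1,\dots\}$ on $Y_-(D^-_j)$, so the eigenvalue $\tfrac{s}{2}m$ (with $m\ge n$, i.e.\ $\tfrac{m}{2}\ge j$) occurs in exactly one of the two summands and with multiplicity one — inside $Y_+(D^+_j)$ when $s=+$ and inside $Y_-(D^-_j)$ when $s=-$ — which is the asserted uniqueness. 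To pin down the representative I would use that $F^s_{j,m'}=Y_s(|j,m'\rangle^s)$ by definition, that $|j,m'\rangle^s$ is the $L_3$-eigenvector of eigenvalue $m'$ in $D^s_j$, that $D^s_j$ carries Casimir $j(j-1)$, and that $Y_s$ is an $SU(1,1)$-intertwiner onto $D^s_{k=j}\subset\ch_{(\rho,n)}$; it follows that $F^s_{j,sm/2}$ lies in the one-dimensional space just found. When $n=m$ one has $sm/2=sj$, so $|j,sj\rangle^s$ is the extremal-weight vector and $Y_s(|j,sj\rangle^s)=\Psi_j^{s,\mathbf{n}_s}$ by \eqref{coh12}.

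The $SU(2)$ case is analogous and slightly simpler. By \eqref{ymap}, $\ch_{(\rho,n)}\simeq\bigoplus_{k\ge j}D^0_k$ with $L^2_{\Su}$ acting by $k(k+1)=\tfrac{1}{4}(2k)(2k+2)$ on $D^0_k$; since $x\mapsto x(x+1)$ is increasing, condition (1), i.e.\ $k(k+1)=j(j+1)=\tfrac{1}{4}n(n+2)$, forces $k=j$, so the eigenspace is $Y_0(D^0_j)$. On $D^0_j$ the operator $L_3$ has simple spectrum $\{-j,\dots,j\}$, so condition (2) selects a one-dimensional subspace, spanned by $F_{j,m/2}=Y_0(|j,m/2\rangle^0)$ by the intertwining property of $Y_0$; and for $m=sn$ the vector $|j,m/2\rangle^0=|j,sj\rangle^0$ is extremal, whence $F_{j,sj}=\Psi_j^{\mathbf{n}_s}$ by \eqref{coh30}.

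The only delicate point I anticipate is the first step: one must be certain that condition (1) isolates the single summand $D^\pm_{n/2}$ (resp.\ $D^0_j$), that is, that the Casimir is injective in $k$ over the relevant discrete range and that, for $j>1/2$, its prescribed value $\tfrac{1}{4}n(n-2)$ lies strictly above the continuous spectrum contributed by the $\mathcal{C}^\chi_s$ part of \eqref{ymap}. Everything after that is routine bookkeeping of the simple $L_3$-spectrum on a single irreducible component.
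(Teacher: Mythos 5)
Your proof is correct and follows exactly the route the paper intends: its own proof of this Lemma is literally the one-line remark ``Checking decomposition,'' i.e.\ reading off uniqueness from the isotypic decomposition \eqref{ymap} via the Casimir and $L_3$ eigenvalues, which is precisely what you carry out in detail. Your fleshed-out version (in particular the check that $j(j-1)>-\tfrac14$ excludes the continuous series and that $k\mapsto k(k\mp1)$ is injective on the relevant range) is a faithful and complete expansion of that argument.
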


\begin{proof}
Checking decomposition.
\end{proof}

{In particular from uniqueness of the eigenstates and \eqref{eq:beta-interntwiner} we have
\begin{equation}
    \hat{\beta}(F_{j,sm/2}^{s})\propto F_{j,-sm/2}^{-s},\quad 
    \hat{\beta}(F_{j,m/2})\propto F_{j,-m/2}
\end{equation}
where $\propto$ means proportionality.
This means that basis states not only are bounded, but also they are mapped to bounded states under $\hat{\beta}$.
}

{
We will now determine proportionality constants in case of coherent states.
}

A coherent states are uniquely up to a complex number determined by conditions of being eigenfunctions of $L^2$ and $L_3$. Thus
\be
\hat{\beta}(\Psi_j^{+,\mathbf{n}_+})=C_+ \Psi_j^{-,\mathbf{n}_-},\qquad \hat{\beta}(\Psi_j^{\mathbf{n}_+})=C_0\Psi_j^{\mathbf{n}_-},
\ee
Let us now introduce
\be
U=\left(\begin{array}{cc}
0 & -1 \\
1 & 0
\end{array}\right)\in \SLDC
\ee
We check by direct inspection
\be
&U\act \Psi^{+,\mathbf{n}_+}=\Psi^{-,\mathbf{n}_-},\quad U\act \Psi^{-,\mathbf{n}_-}=(-1)^n\Psi^{+,\mathbf{n}_+},\\
&U\act \Psi^{0,\mathbf{n}_+}=\Psi^{0,\mathbf{n}_-},\quad U\act \Psi^{0,\mathbf{n}_-}=(-1)^n\Psi^{0,\mathbf{n}_+},
\ee
From intertwining property of $\hat{\beta}$
\be
\hat{\beta}(\Psi_j^{-,\mathbf{n}_-})=U\act \hat{\beta}(\Psi_j^{+,\mathbf{n}_+})=C_+U\act \Psi_j^{-,\mathbf{n}_-}=(-1)^nC_+\Psi_j^{+,\mathbf{n}_+},\qquad C_-=(-1)^nC_+
\ee
and similarly
\be
\hat{\beta}(\Psi_j^{0,\mathbf{n}_-})=(-1)^nC_0\Psi_j^{0,\mathbf{n}_+},
\ee
Using intertwining property of the action of $\hat{\beta}$ we have
\be
\hat{\beta}(\Psi_j^{\pm,\mathbf{m}^{\pm}})=C_\pm\Psi_j^{\mp,\mathbf{m}^{\mp}},\quad \hat{\beta}(\Psi_j^{0,\mathbf{m}^0})=C_0\Psi_j^{0,\mathbf{m}^0}
\ee

\subsection{Determination of \texorpdfstring{$C_\pm, C_0$}{Cpm, C0}}

We parametrize part of $\CP^1$ with $\langle \zz\mid \zz\rangle>0$ by
\be
\zz=\left(\begin{array}{l}
\cosh r\\
e^{-i\phi}\sinh r
\end{array}\right)
\ee
then
\begin{align}
&\Omega_{\zz}=\cosh r\sinh r\ dr\wedge d\phi\\
&[\zz,\mathbf{n}_-]=\cosh r,\quad \langle \zz\mid\zz\rangle=1,\quad \langle \zz\mid\mathbf{n}_+\rangle=\cosh r
\end{align}
thus
\be
\overline{\hat{\beta}(\Psi^{+\mathbf{n}_+})(\mathbf{n}_-)}=\frac{\sqrt{\rho^2+n^2}}{2\pi}
\int_0^{2\pi}d\phi\int_0^\infty dr\ \sinh r\cosh r^{-i\rho-1}\ \Phi(\zz),
\ee
Notice
\be
\Phi(\zz)=\sqrt{\frac{2j-1}{\pi}}(\cosh r)^{-n}
\ee
and finally
{
\begin{align}
&\overline{\hat{\beta}(\Psi^{+\mathbf{n}_+})(\mathbf{n}_-)}=\frac{\sqrt{\rho^2+n^2}}{2\pi}
\int_0^{2\pi}d\phi\int_0^\infty dr\ \sinh r\cosh r^{-i\rho-1} \sqrt{\frac{2j-1}{\pi}}(\cosh r)^{-n}=\\
&=\sqrt{\frac{2j-1}{\pi}}\sqrt{\rho^2+n^2}\int_1^\infty dx\ x^{-i\rho-n-1}=
\sqrt{\frac{2j-1}{\pi}}\frac{\sqrt{\rho^2+n^2}}{i\rho+n}
\end{align}
}
We can check that
\be
\Psi^{-,\mathbf{n}_-}(\mathbf{n}_-)=\sqrt{\frac{2j-1}{\pi}}
\ee
thus from $\overline{\hat{\beta}(\Psi^{+\mathbf{n}_+})(\mathbf{n}_-)}=\overline{C_+\Psi^{-,\mathbf{n}_-}(\mathbf{n}_-)}$ we get
{
\be
C_+=\frac{\sqrt{\rho^2+n^2}}{n-i\rho},\quad |C_+|=1,\quad \rho=2\g j,\quad n=2j.
\ee
}
The constant $C_0$ can be determined to be
\be
C_0=\frac{\sqrt{\rho^2+n^2}}{i\rho+n}
\ee
by similar computation and it was done in \cite{semiclassical} (equation 6.) 

\section{Classical tetrahedron and its volume}\label{Classical tetrahedron}
The shape of Lorentzian timelike tetrahedron can be parametrised by the following real variables:
\begin{eqnarray}
  A_1, A_2, A_3, A_4, \theta_{12}, \theta_{13}
\end{eqnarray}
where $A_i$ are areas asscoiated to $i$-th face. $\theta_{12}$ and $\theta_{13}$ are dihedral angles between faces $1,2$ and $1,3$:
\begin{eqnarray}
  \cosh(\theta_{12}) = s_{12}   (\vec{n}_{1} \cdot \vec{n}_{2}) \,, \\
   \cosh(\theta_{13}) = s_{13}  (\vec{n}_{1} \cdot \vec{n}_{3}) \,,
\end{eqnarray}
with $s_{12} := \sgn(\vec{n}_{1} \cdot \vec{n}_{2}) $.
The closure condition for $4$ normal vectors 
\begin{eqnarray}
  0 = \sum_{i=1}^4 A_{i} \vec{n}_{i}
\end{eqnarray}
Here we suppose all of the normals $n_i$ are timelike.
When $s_{12}>0$,
\begin{eqnarray}
  -|A_{1} \vec{n}_{1} + A_{2} \vec{n}_{2}|^2 = A_{1}^2 + A_{2}^2 - 2 A_{1}A_{2} \cosh(\theta_{12}) \leq A_{1}^2 + A_{2}^2 + 2  A_{1}A_{2} = |A_1 - A_2|^2 \,,
\end{eqnarray}
while $s_{12}< 0$,
\begin{eqnarray}
   -|A_{1} \vec{n}_{1} + A_{2} \vec{n}_{2}|^2 = A_{1}^2 + A_{2}^2 + 2 A_{1}A_{2} \cosh(\theta_{12}) \geq A_{1}^2 + A_{2}^2 + 2 A_{1}A_{2} = |A_1 + A_2|^2
\end{eqnarray}
The volume of the tetrahedron is given by
\begin{eqnarray}
V^4 &=& \frac{1}{3^4} {{A_1}}^2 \Big[-A_1^4-2 A_2^2 \cosh (2 \theta_{12}) \left(A_1^2+A_3^2\right)-2 A_3^2 \cosh (2 \theta_{13}) \left(A_1^2+A_2^2\right)-\left(A_2^2+A_3^2-A_4^2\right)^2 \nonumber\\
&&-4 s_{12}s_{13} A_2 A_3 \cosh (\theta_{12}) \cosh (\theta_{13}) \left(3 A_1^2+A_2^2+A_3^2-A_4^2\right)+2 A_1^2 \left(A_4^2-2 \left(A_2^2+A_3^2\right)\right) \Big] \nonumber \\
&& +  \frac{4 s_{12}}{3^4} {{A_1}}^3 \big(A_2 \cosh (\theta_{12})+s_{12} s_{13} A_3 \cosh (\theta_{13})\big)\\
&&\times \Big[A_1^2+A_2^2+A_3^2-A_4^2+2 s_{12} s_{13} A_2 A_3 \cosh (\theta_{12}) \cosh (\theta_{13}) \Big] \nonumber
\end{eqnarray}
In case of $A_1=A_2=A_3=A_4=1$ and $s_{12}=-s_{13}=-1$
\begin{eqnarray}
V=\frac{2}{3} \sqrt[4]{2} \sqrt[4]{\cosh ^2\left(\frac{\theta_{12}}{2}\right) \sinh ^2\left(\frac{\theta_{13}}{2}\right) (\cosh (\theta_{12})-\cosh (\theta_{13}))}
\end{eqnarray}
When $V$ is fixed, there is still a noncompact space of $\theta_{12},\theta_{13}$ since $\theta_{12},\theta_{13}$ can be very large provided $|\theta_{12}| - |\theta_{13}|$ is very small. 

\begin{Lemma}

When $s_{12}=s_{13}$, i.e. there are 1 face-normal future-pointing (past-pointing) and 3 face-normals past-pointing (future-pointing), the space of $\theta_{12},\theta_{13}$ is compact.  

\end{Lemma}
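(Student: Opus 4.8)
The plan is to bound $\theta_{12}$ and $\theta_{13}$ from above by an explicit function of the fixed areas $A_1,A_2,A_3,A_4$; since $\cosh\theta_{1j}=|\vec n_1\cdot\vec n_j|\geq 1$ forces $\theta_{1j}\geq 0$, and since the set of admissible $(\theta_{12},\theta_{13})$ is closed, an upper bound already yields compactness. (If the areas admit no tetrahedron in this configuration the admissible set is empty, hence vacuously compact, so assume it is nonempty.) By hypothesis the tetrahedron has one normal on one side of the light cone and three on the other (this is the case $s_{12}=s_{13}$ up to relabelling); call the single one $\vec n_k$, $k\in\{1,2,3,4\}$.

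First I would normalise the time orientation: set $\vec w_i=\epsilon_i\vec n_i$ with $\epsilon_i\in\{+1,-1\}$ chosen so that every $\vec w_i$ is future-directed unit timelike, so that $\epsilon_i$ is the same for all $i\neq k$. Because two future unit timelike vectors have Minkowski inner product $\geq 1$, one has $\cosh\theta_{ij}=|\vec n_i\cdot\vec n_j|=\vec w_i\cdot\vec w_j$ for every pair, and the closure condition $\sum_iA_i\vec n_i=0$ becomes $A_k\vec w_k=\sum_{i\neq k}A_i\vec w_i$, a relation among future-directed vectors with strictly positive coefficients. This positivity is the crux of the argument; it is exactly what fails when two normals are future and two are past, consistently with the noncompactness exhibited in that case.

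Squaring this relation in the Minkowski metric gives $A_k^2=\sum_{i\neq k}A_i^2+2\sum_{\{i,j\}\subset\{1,\dots,4\}\setminus\{k\}}A_iA_j(\vec w_i\cdot\vec w_j)$; since every summand on the right is at least $2A_iA_j>0$, isolating a single term bounds $\vec w_i\cdot\vec w_j=\cosh\theta_{ij}$ from above in terms of the areas, for each pair $i,j\neq k$. For a pair containing $k$ I would instead contract $A_k\vec w_k=\sum_{i\neq k}A_i\vec w_i$ with $\vec w_l$ ($l\neq k$), obtaining $A_k(\vec w_k\cdot\vec w_l)=A_l+\sum_{i\neq k,l}A_i(\vec w_i\cdot\vec w_l)$, whose right-hand side is a sum of positive terms already bounded by the previous step; hence $\cosh\theta_{kl}$ is bounded as well. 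For every value of $k$ these two bounds together control both $\theta_{12}$ and $\theta_{13}$, which therefore lie in a bounded interval fixed by $A_1,\dots,A_4$. It then remains to note that the admissible set is closed, being cut out by the continuous realisability conditions on the normals (existence of unit timelike $\vec n_i$ with the prescribed mutual inner products obeying closure, i.e. valid Minkowski reconstruction data); closed and bounded in $\mathbb{R}^2$ is compact. I expect no serious obstacle here: the only points needing care are the bookkeeping of the sign flips and of the various ways the pairs $\{1,2\}$ and $\{1,3\}$ can sit relative to $k$, and a clean statement of the closedness step — in particular deciding whether degenerate tetrahedra are to be included; everything else is elementary reverse Cauchy--Schwarz for timelike vectors.
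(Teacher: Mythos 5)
Your proof is correct, and it reaches the conclusion by a somewhat different route than the paper. The paper gauge-fixes to the rest frame of face $1$ (normalising $A_1=1$, $u_1=(1,0,0,0)$), observes that closure forces $|u_1\cdot u_2|=|u_2^0|<|u_4^0|$ and $|u_1\cdot u_3|=|u_3^0|<|u_4^0|$, and then bounds the single quantity $|u_4^0|$ by introducing $u_{23}=u_2+u_3$, showing it is timelike, and extracting $|u_4^0|\le\tfrac{1}{2}(A_4^2+1)$ from $\|u_4\|^2=A_4^2$. You instead work covariantly: after co-orienting all normals you square the closure identity $A_k\vec w_k=\sum_{i\neq k}A_i\vec w_i$ and use $\vec w_i\cdot\vec w_j\ge 1$ term by term to bound every pair not containing $k$, then contract with $\vec w_l$ to bound the pairs containing $k$. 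The underlying mechanism is identical in both arguments — the strict positivity of all cross terms in the squared closure relation, which is exactly what fails in the $2+2$ configuration — but your version avoids the frame choice and the rescaling $A_1=1$, treats all six dihedral angles symmetrically rather than singling out $|u_4^0|$, and as a by-product yields the existence constraint $A_k^2\ge\bigl(\sum_{i\ne k}A_i\bigr)^2$. The paper's version is shorter once the frame is fixed and produces a cleaner closed-form bound on one component.

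Two minor points of bookkeeping. First, your parenthetical that the hypothesis ``$s_{12}=s_{13}$'' is the $3{+}1$ configuration up to relabelling is not quite literal ($s_{12}=s_{13}=-1$ also occurs when $\vec n_1,\vec n_4$ lie on one side and $\vec n_2,\vec n_3$ on the other), but the lemma's own ``i.e.''\ clause fixes the intended hypothesis as the $3{+}1$ case, which is what both you and the paper actually prove. Second, your closedness step is only sketched; the paper does not address it at all and in effect proves only boundedness, so you are if anything more careful here, but a clean statement of which degenerate configurations are admitted would tighten that final sentence.
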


\emph{Proof.} Suppose the $4$ normal vectors are given by $u_i = A_i n_i = (u_i^0,\vec{u}_i), i =1,\dots,4$, $- u_i \cdot u_i = (u_i^0)^2 - |\vec{u}_i|^2 =A_i^2$, and we choose $u_4$ to be past pointing while the others are future pointing.
By rescaling and rotating we can fix $u_1 = (1,0,0,0)$, then
\be 
|u_1 \cdot u_2| = | u_2^0 | < | u_4^0 |\,, \qquad
|u_1 \cdot u_3| = | u_3^0 | < | u_4^0 |
\ee 
Thus we only need to proof that $| u_4^0 |$ is bounded. By writing $u_2 + u_3 = u_{23}$, one have $-|u_{23}|^2 =(u_2^0+u_3^0)^2 - (\vec{u}_2 +\vec{u}_3 )^2 > (u_2^0+u_3^0)^2 - (|\vec{u}_2| + |\vec{u}_3| )^2 > 0$, which implies $u_{23}$ must be timelike since $u_2,u_3$ timelike. 
The closure condition implies
\be
|u_4^0| - 1 =u_{23}^0 \, , \qquad -\vec{u}_4 = \vec{u}_{23} 
\ee
with $|u_4^0| >1$ since $u_2,u_3$ future pointing thus $u_{23}^0 >0$.
Suppose $-| u_{23}|^2 = A_{23}^2 > 0|$, using $| u_4^0 |^2 = A_4^2 + |\vec{u}_4|^2, | u_{23}^0 |^2 = A_{23}^2 + |\vec{u}_{23}|^2$, we have
\be 
| |u_{4}^0| -1 |^2 =|u_4^0|^2 + 1 - 2 |u_{4}^0| = A_{23}^2 + |\vec{u}_{4}|^2 = | u_4^0 |^2 +  A_{23}^2 - A_4^2
\ee 
which implies
\be 
 1 - 2 |u_{4}^0| =  A_{23}^2 - A_4^2
\ee 
As a result, $0 < A_{23}^2 < A_4^2$ and $|u_{4}^0| \leq \frac{A_4^2 + 1}{2}$ which is bounded.\\ 
\noindent
$\Box$

\bibliographystyle{jhep}

\bibliography{muxin}
\end{document}